\def\@ACM@checkaffil{
    \if@ACM@instpresent\else
    \ClassWarningNoLine{\@classname}{No institution present for an affiliation}%
    \fi
    \if@ACM@citypresent\else
    \ClassWarningNoLine{\@classname}{No city present for an affiliation}%
    \fi
    \if@ACM@countrypresent\else
        \ClassWarningNoLine{\@classname}{No country present for an affiliation}%
    \fi
}
\newcommand{\parab}[1]{\noindent\textbf{#1}}
\definecolor{eclipseStrings}{RGB}{42,0.0,255}
\definecolor{eclipseKeywords}{RGB}{127,0,85}
\colorlet{numb}{magenta!60!black}
\definecolor{mygreen}{rgb}{0,0.6,0}
\definecolor{mymauve}{rgb}{0.58,0,0.82}
\definecolor{mygray}{gray}{.9}
\definecolor{mypink}{rgb}{.99,.91,.95}
\definecolor{mycyan}{cmyk}{.3,0,0,0}
\renewcommand\footnotetextcopyrightpermission[1]{}
\begin{document}

\title{ClickINC: In-network Computing as a Service in Heterogeneous Programmable Data-center Networks}


\author{Wenquan Xu$^\dagger$, Zijian Zhang$^\dagger$, Yong Feng$^\dagger$, Haoyu Song$^\star$, Zhikang Chen$^\dagger$, \\
Wenfei Wu$^{\mathsection}$, Guyue Liu$^{\ddagger}$, Yinchao Zhang$^\dagger$, Shuxin Liu$^\dagger$, Zerui Tian$^\dagger$, Bin Liu$^\dagger$}
\affiliation{%
  \institution{$^\dagger$Tsinghua University, $^\star$Futurewei, $^\mathsection$Peking University, $^\ddagger$New York University Shanghai}
  }
\authornote{Bin Liu, Wenfei Wu, and Guyue Liu are corresponding authors}

\renewcommand{\shortauthors}{Wenquan Xu et al.}

\begin{abstract}
In-Network Computing (INC) has found many applications for performance boosts or cost reduction. However, given heterogeneous devices, diverse applications, and multi-path network typologies, it is cumbersome and error-prone for application developers to effectively utilize the available network resources and gain predictable benefits without impeding normal network functions. Previous work is oriented to network operators more than application developers. We develop ClickINC to streamline the INC programming and deployment using a unified and automated workflow. ClickINC provides INC developers a modular programming abstractions, without concerning to the states of the devices and the network topology. We describe the ClickINC framework, model, language, workflow, and corresponding algorithms. Experiments on both an emulator and a prototype system demonstrate its feasibility and benefits. 
\end{abstract}

%
\keywords{In-Network Computing, Programmable networks, Programming abstraction, Program compilation, Program placement}

\maketitle

\section{Introduction}\label{sec:introduction}

Defying the conventional wisdom, network is no longer considered as dumb pipe but also a computation-facilitating infrastructure which can help boost application performance (e.g., latency and throughput) or reduce system cost (e.g., power and engaged servers). Such a paradigm shift, dubbed as \emph{In-Network Computing (INC)}, has benefited  many applications (e.g., key-value store~\cite{netcache2017, liu2017incbricks},
machine learning (ML) aggregation~\cite{aggregation2017, 265065, lao2021atp}, consensus~\cite{dang2015netpaxos,dang2016paxos},
coordination~\cite{netchain2018}, and streaming~\cite{streaming2018}). 
These applications are typically enabled by the programmable switches (e.g., Tofino~\cite{tofino2019}) which however is limited by hardware capability and capacity~\cite{kim2020tea}, arising a trend to extend on \emph{heterogeneous programmable network devices}~\cite{bressana2020trading,zeng2022tiara,sna_gw,ipdk} (e.g., Tiara~\cite{zeng2022tiara} achieves a layer-4 load balancer), where the switch is used to perform throughput-intensive task (packet encap/decap) and FPGA is used for memory-intensive task (physical server selection).

While this momentum is inspiring, a closer look reveals a less optimistic reality: the adoption of INC is currently limited to network operators and has not yet to be embraced by application developers, which hinders the development of new applications and their large-scale deployment. The fundamental reason, we believe, is the lack of a high-level programming framework that can abstract away the complexities associated with issues such as device heterogeneity, network topology, and function mapping. Early efforts~\cite{gao2020lyra,hoganmodular2022} attempted to improve the programming abstraction by hiding hardware details. Although this is a valuable first step, there are still three major barriers. To see why, consider the state-of-the-art framework Lyra~\cite{gao2020lyra}.

\parab{Limited to low-level abstractions.} Lyra progresses from low-level and chip-specific languages (e.g., P4~\cite{p4lang} and NPL \cite{Nplang}) to a more general and cross-platform language. However, it still requires programmers to handle low-level details such as packet header processing and network protocol handling, and is limited to basic statements (e.g., if-else), rather than more advanced features (e.g., for-loop). Crucial features such as network transparency, cross-device correctness, and program isolation are missing and need to be implemented by INC programmers. These burdens discourage application developers from adopting the INC programming paradigm. 

\parab{Limited to a small-scale deployment.} Lyra can run a data plane program on multiple heterogeneous ASICs in a distributed way (e.g., load balancer~\cite{ghomi2017load}, in-band network telemetry~\cite{int_spec}). It achieves this by encoding the logic and different resource constraints into a satisfiability modulo theories (SMT) problem, and using an SMT solver (e.g., Z3~\cite{moura2008z3} and cvc5~\cite{barbosa2022cvc5}) to find the deployment strategy. However, this approach is prohibitively slow (e.g., Z3 takes 30+ minutes to allocate ML Aggregation program on only 5 Tofino devices). Furthermore, it can only find a \emph{feasible} deployment without considering resource utilization, thus limiting it to running a small number of applications with fixed resources.

\parab{Limited to a single user.} Lyra, along with other prior work~\cite{sultana2021flightplan,hoganmodular2022}, is designed for network operators who have complete control over all network devices and run a \emph{monolithic} program in the target network. In case of any changes, the entire program must be recompiled from scratch and reinstalled on affected devices, leading to inefficiencies in terms of compilation and installation time. Furthermore, this approach is unsuitable for running programs from multiple users, as coordination between users is necessary and traffic from different users must be interrupted for every change.

Given these issues, we argue the need of a new framework that offers high-level abstractions for writing applications, and automatically handles low-level system concerns such as placement, cross-device communication, resource isolation, fault tolerance, and more. With such a framework, developers would be able to offload routine tasks to the framework and focus on the critical logic of their applications. 

In this paper, we present ClickINC, a framework for INC application developers (referred to as ``users'') to develop, deploy, and manage programs on heterogeneous programmable network devices in data centers.  At a high level, ClickINC offers the following capabilities: (i) ClickINC allows users to develop applications in a high-level, Python-style language; (ii) ClickINC's compiler frontend compiles each user's program into a platform-independent intermediate representation (IR) program and determines the optimal placement strategy across the network; then the backend translates IR programs into chip-specific programs and launch them on the target network; (iii) at runtime, ClickINC isolates resources for different users and allows for dynamically adding and removing programs. Compared to prior work, ClickINC makes the following three notable contributions:

\parab{1) Modular programming abstractions.}
ClickINC encapsulates common INC functionality into \emph{modules} such as various sketches, hash functions, providing users with a library. Users can work at a higher level of abstraction and use a simple Python-style syntax to import modules they need to write applications. This design eliminates the need for users to worry about the low-level details (e.g., packet-level processing and implementation of data structures), reducing the amount of code (at least 10 times lower), and enabling them to reuse code across multiple projects. The comparison between ClickINC and other operator-oriented languages such as Lyra can be drawn to that of Python and C/C++. While C/C++ is fast and efficient, it is suited for low-level system development, whereas Python is easier to learn and use, better suited for application development.

\parab{2) Scalable placement algorithm.}
Efficiently placing programs on a network of heterogeneous programmable devices is challenging. In addition to different hardware features and resource constraints considered by prior work~\cite{gao2020lyra,sultana2021flightplan}, we take into account three new factors: (i) the network may consist of multiple paths for an application; (ii) the interaction between program segments distributed across multiple devices may result in extra overhead; (iii) users may add or remove applications dynamically, but the placement recomputing from scratch should be avoided.
We propose a program partition theory and based on it, we develop a dynamic programming (DP) algorithm to solve the placement problem in polynomial time and scale up to \textasciitilde1,000 switches.

\parab{3) Incremental program compilation.} 
To effectively support the multi-user scenario where each user dynamically adds or removes a program, ClickINC provides the incremental program compilation feature. Unlike prior work, we need to consider not only the programs run by the operator for routine packet processing and forwarding, but also programs from users for high-level applications such as key-value stores. Our key idea is to maintain the operator's program as the \emph{base program}. By applying an annotation-based method, multiple user programs can be correctly identified and incrementally integrated to or stripped from the base program. 
When synthesizing the base program and multiple user programs, ClickINC isolates both programs' states and control flows, ensuring each user's traffic is processed by the corresponding program. 

We build an end-to-end system and implement three common INC applications. Our evaluations show that ClickINC is 10X better than the state-of-the-art programming languages in lines of code, as well as near 1000X faster than SMT solver for program placement, and 50\%-75\% less traffic is affected to deploy a new program.
\section{Background and Motivation} \label{sec:background}
We provide the background of INC in data centers, and then discuss the pain points of application developers to motivate the need of a new INC framework.

\subsection{INC in Data Centers} \label{ssec:problem-statement}
\parab{INC Applications.} 
We use two common applications, key-value store and ML gradient aggregation, to illustrate the process and benefits of adopting the INC paradigm.

\emph{(1) Key-Value Store (KVS).} Traditional KVS nodes are inefficient in handling skewed, dynamic workloads due to limited server performance. With the programmable network devices, the in-network KVS, typified by NetCache \cite{netcache2017}, can accelerate KVS with 3-10x throughput improvement and lower latency. To deploy KVS on a capability-limited switch, the advanced data structure on servers needs to be replaced with a hash based key-value table, a hit counter, and a combination of Count-Min Sketch and Bloom Filter, to support cache read/write and statistics of queries for cache update.

\emph{(2) ML Gradient Aggregation (MLAgg).} Traditional distributed MLAgg relies on a parameter server (PS) or allreduce, which have performance bottleneck on servers. To accelerate it, in-network aggregation~\cite{265065,
lao2021atp} maintains a stateful structure called aggregator array on switch to aggregate gradients from different workers, which greatly improves the aggregation throughput. Packets are addressed to aggregators by their job id and sequence number, and an aggregator sums up the data from all workers and returns the results back to workers. Due to the limited switch resource and capability, the data type conversion from floating-point to integer may be needed, and several other data structures are used to ensure correct aggregation.

\parab{Heterogeneous Programmable Devices.} 
The heterogeneous devices in DCN (e.g., switches, FPGA, and smartNIC) can be roughly classified as pipeline or multi-core devices. The former has a number of stages with each running  a piece of the program and can provide a throughput guarantee; the latter has multiple cores working in parallel and can support more complex functions. It may be infeasible to deploy an INC application on a single device or the same type of devices due to resource and feature constraints. For example, to aggregate the ML parameter with 64 integers in a packet, at least two Tofino switches are needed due to the limited on-switch memory. Further, if the parameters have large sparsity, the sparsity detection and elimination function cannot be run on the programmable switches and require another type of device (e.g., SmartNIC or FPGA).

\subsection{Pain Points for Application Developers} 
\label{sec:mov-pains}
We discussed the three problems
of INC program development in $\S$\ref{sec:introduction}, and we further
illustrate these problems using examples of state-of-the-art solutions.
\begin{figure}[t]  
	\centering  
	\includegraphics[width=0.49\textwidth]{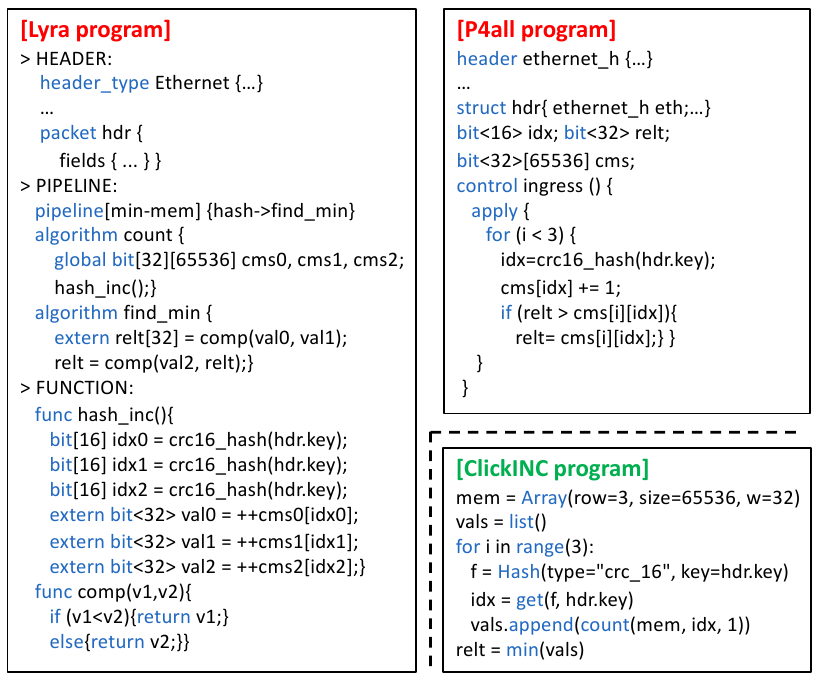} 
	\vspace{-0.4cm}
	\caption{Language comparison for count-min sketch.}
	\label{fig:lang_comp}  
	\vspace{-0.4cm}
\end{figure}

\parab{Low-level architecture and network details.} The existing INC programming is not friendly to application developers.
Fig.~\ref{fig:lang_comp} shows an example of the implementation of an in-network
count-min sketch in Lyra, P4all, and ClickINC. Lyra and P4all are network
operator-oriented and preserve device specific concepts such as pipeline, bit width,
and CRC. In contrast, ClickINC's basic programming elements are {\tt for} loop and Array,
which are organized following a Python-like high-level language syntax.
The ClickINC program is easier to learn and write, and needs fewer lines of code.

\parab{Limited number of devices and applications.} It is error-prone to place a program spanning multiple heterogeneous devices for applications with multi-path traffic. 
For example, to deploy an MLAgg on a fat-tree network, due to complex network topology and unbalanced resources, manual placement may cause: (1) some paths with an inadequate resource cannot be covered by MLAgg, so a lot of traffic cannot be aggregated; (2) cross-device interaction overhead as well as extra resource usage is high due to improper partition. SMT solver (used by prior work~\cite{gao2020lyra}) should have been a good tool to deal with this problem as the placement task can be modeled in SMT. However, such solvers need to traverse the entire solution space which has an exponential time complexity on both the number of instructions in the program and the number of devices.

\parab{Lack of user isolation.} Existing network devices do not provide isolation between user programs.
These devices were designed for a single-party operator, and thus do not have mechanisms (e.g., resource virtualization) to support multiple programs from different users.  
For example, if two users deploy the same Count-Min Sketch
program (Fig.~\ref{fig:lang_comp}) as two instances, with na\"ive program 
splicing, both users' traffic will be monitored at the same memory region {\tt vals.append()}. This may impact the accuracy of measurement and expose sensitive data (can be read by each other in {\tt relt}).

\section{ClickINC overview}
\label{sec:overview}
Our goal in designing ClickINC is to provide a framework for developing INC applications and automatically deploying them on heterogeneous programmable network devices in data centers. In practice, we also want (i) the developing environment to be friendly to developers, minimizing the effort required to apply the new INC programming paradigm, and (ii) the deployment to be compatible with existing INC deployments controlled by network operators. 

\subsection{Key Ideas} 
We first discuss the main ideas that enable ClickINC to tackle the three pain points discussed in~$\S$\ref{sec:mov-pains} while meeting the above two practical requirements.

\parab{1) A high-level, Python-style language with built-in modules.} We observed that the key obstacle to using the existing languages is it requires extensive architecture and network specific details. Inspired by the success of the high-level language Python, ClickINC provides users with Python-style syntax elements. Meanwhile, ClickINC encapsulates widely-used basic data structures (e.g., key-value matching table) and functions (e.g., hash) as ClickINC modules and builds a library for code reuse.
    
\parab{2) A scalable placement algorithm based on the program partition theory and topology compression.}  The large number of heterogeneous devices compounded with a substantial amount of program instructions makes the placement problem challenging. To reduce the search space, ClickINC merges dependent instructions into blocks to reduce the entities for placement, and leverages the symmetry of the fat-tree topology (the most common data center topology) to reduce the number of devices under placement consideration. Such optimizations enable our efficient DP algorithm to handle up to  \textasciitilde1,000 switches.

\parab{3) An annotation-based approach to incremental program compilation.} Running multiple programs from different users is challenging due to the potential resource conflicts; supporting \emph{dynamic} user requests which may add or remove a program is even harder. ClickINC enables dynamic user requests while accommodating existing operator programs by providing an incremental compilation feature. 
Our idea is to treat the operator's programs as \emph{base programs}. When synthesizing the base programs and user programs, ClickINC uses an \emph{annotation-based} approach to provide both memory isolation and control flow isolation. 

\begin{figure}[tb]
	\centering
	\includegraphics[width=0.7\columnwidth]{./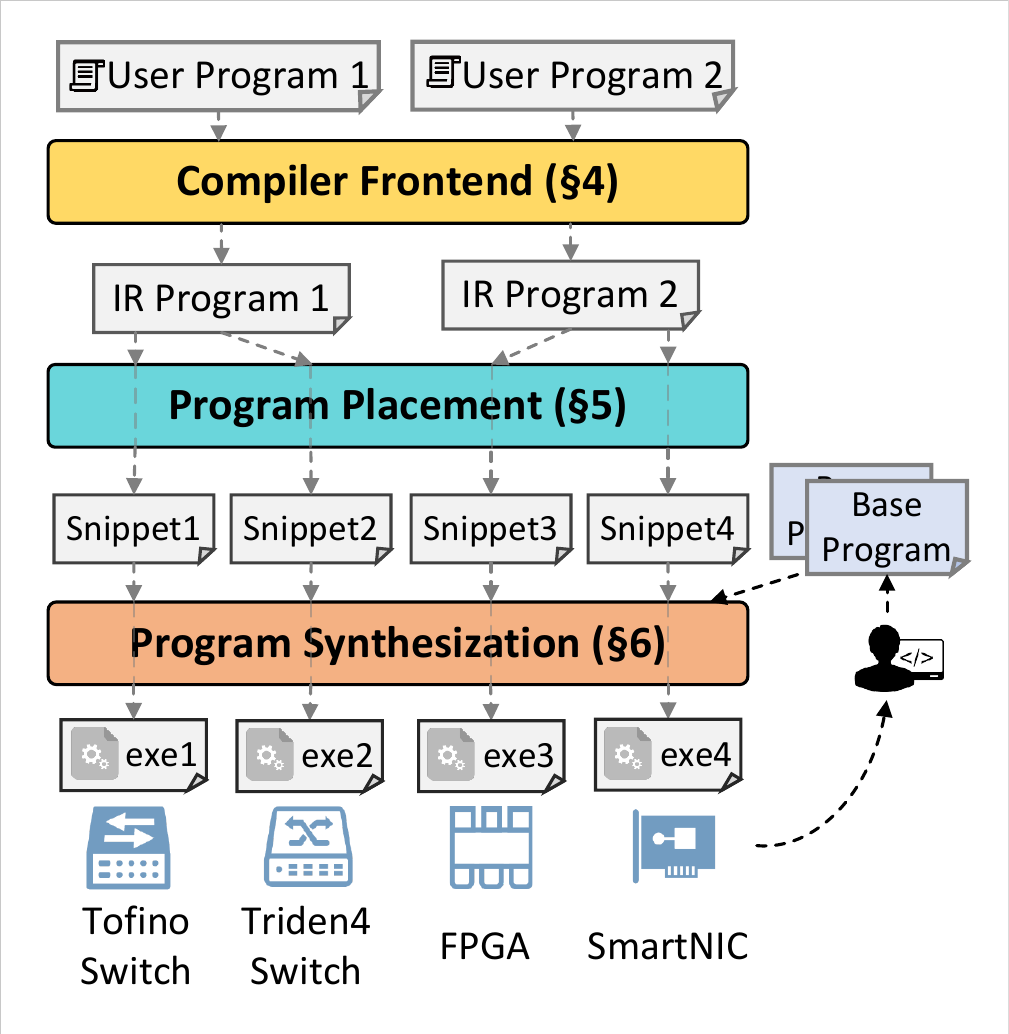} 	\caption{ClickINC architecture and workflow.} 
	\label{fig:architecture-workflow}
    \vspace{-0.2cm}
\end{figure}
\subsection{ClickINC's Workflow}
Fig.~\ref{fig:architecture-workflow} shows the overall architecture and workflow of ClickINC. At a high level, using ClickINC entails four steps: 

\noindent \textbf{(i) Writing a user program:} ClickINC provides users a high-level, Python-style language to write INC programs. 
Users can use \emph{built-in modules} which encapsulate common functions. Meanwhile, users can specify application performance requirements through the module parameters.

 \noindent \textbf{(ii) Compiling user programs to IR programs:} The ClickINC compiler frontend compiles each user's program into an Intermediate Representation (IR) program, where the IR instructions set is platform-independent. We choose the representative IR instructions from each platform and  merge the common ones.
 
 \noindent \textbf{(iii) Placing IR programs:} Then ClickINC decides a placement plan to deploy IR programs distributedly on network-wide heterogeneous devices. To handle a large number of programs and devices, ClickINC uses a dynamic programming algorithm to find the placement plan with the highest gain. Each user's program may be split into multiple \emph{snippets}, one for each device.

\noindent \textbf{(iv) Deploying on heterogeneous devices:} Finally, ClickINC compiler backend compiles snippets and the base programs (from the operator) to executable device programs in device-specific
languages. Each executable includes the base program and one or more user snippets running user-specific applications.
\section{Programming Abstractions}  \label{sec:programming-abstraction}
\label{sec:clickinc_api}

\subsection{User Programming}
\label{subsec:user_programming}

\parab{Abstraction/Interfaces.} INC programming is cumbersome. At the device
level, the heterogeneous resources, network topology, and target languages need to be considered; at the program level, a complete INC program needs to tend every packet handling detail including the inter and intra-device
communication protocol.  To hide the complexity, ClickINC is built on the One
Big INC~(OBI) abstraction (Fig.~\ref{fig:structure}) which contains elements in
three levels.
\begin{figure}[tb]
 	\centering
 	\includegraphics[width=0.49\textwidth]{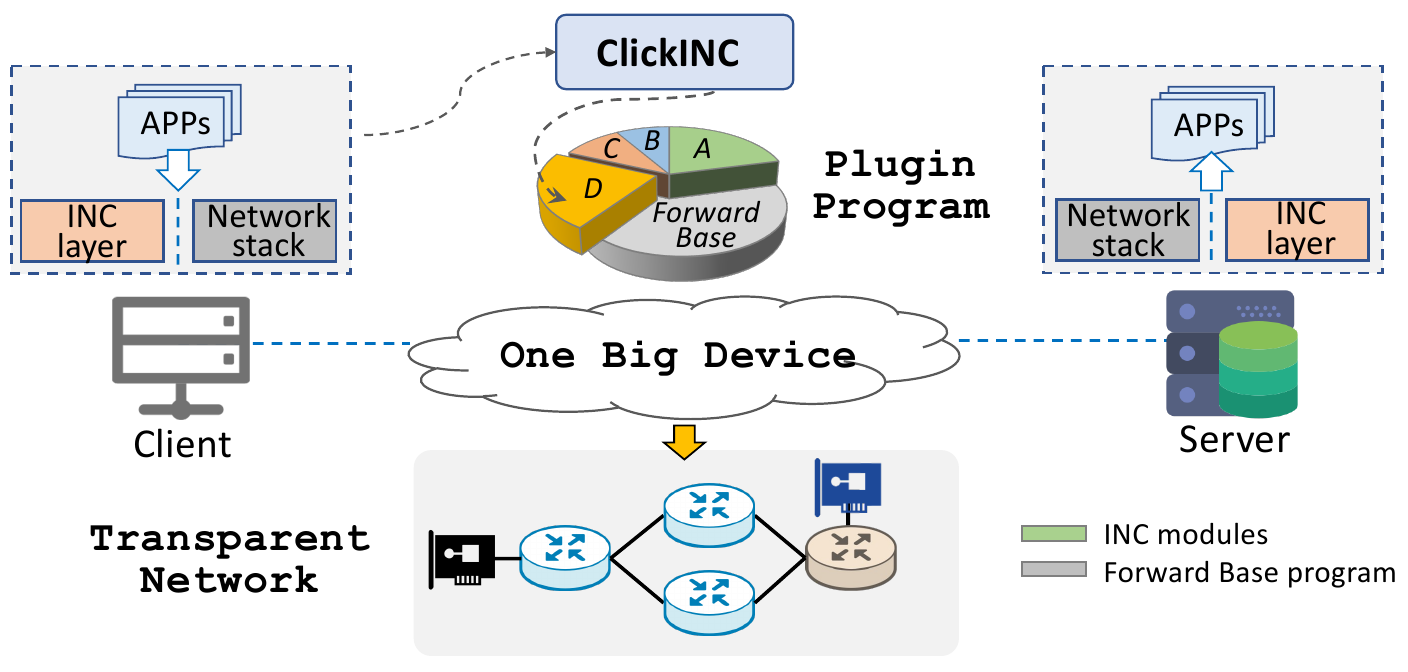}
 	\caption{ClickINC OBI Abstraction.}
 	\label{fig:structure}
 \end{figure}
\begin{figure}[tb]
\setlength{\belowcaptionskip}{-0.5cm}
	\centering
	\includegraphics[width=0.8\columnwidth]{./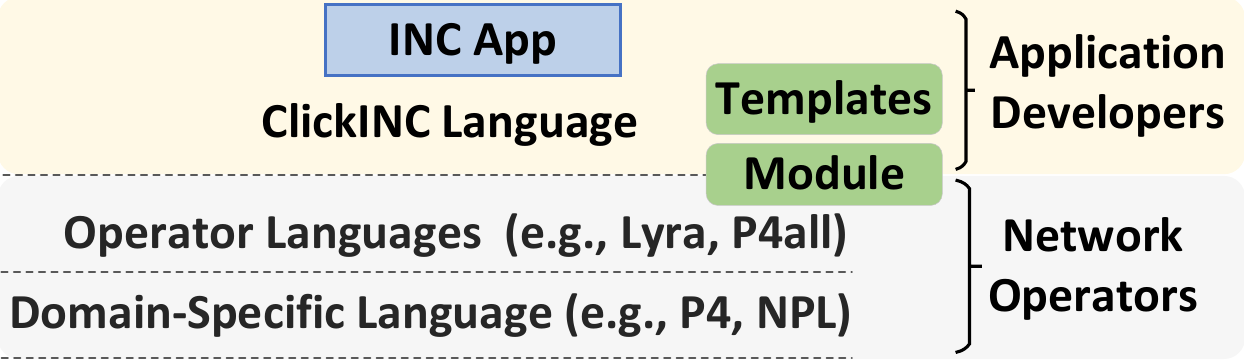}
	\caption{Languages to program network devices.} 
	\label{fig:programming-abstractions}
\end{figure}

{\textit{One Big Device.}} In OBI, the entire network is abstracted as
a single virtual programmable device $\mathcal{D}$ to the INC developers.  The
target devices comprise switch ASICs, multi-core smartNICs, FPGA smartNICs, and
FPGA accelerator card, denoted as $A$, $N_S$, $N_F$, and $F$, respectively.
Especially, a switch ASIC can be equipped with a bypass accelerator cards,
denoted as $\hat{A}$, to enhance its memory and processing capacity. Thus,
$\mathcal{D}=\{A, \hat{A}, N_S,  N_F, F\}$.

\textit{Transparent Network.} The above elements make an INC program
a piece of standalone software. Behind the scenes, packet modification on
devices (e.g., INC header insertion, removal, and update) is needed. ClickINC
handles all such works with a generic internal header structure by the ``INC layer'' that is maintained on each end device, and makes these issues transparent to both INC developers and end-host applications.

\textit{Plugin Program.} Although One Big Device frees developers
from dealing with device heterogeneity and network topology, an INC program
still needs to integrate with the underlying forwarding function and the
existing INC applications. OBI allows developers to focus on the INC function
alone and deem an INC program as a standalone plugin on the One Big Device. The
heavy lifting for program partition, mapping, and integration is handled behind
the scenes. An INC program can be plugged in or unplugged from the base
forwarding program without affecting existing INC functions.

\parab{ClickINC Language.} For best appeal to users, ClickINC preserves
the high-level language abstraction for application developers as illustrated in Fig.~\ref{fig:programming-abstractions}, which differs from operator language (e.g., Lyra) and domain-specific language (e.g., P4). Fig.~\ref{fig:language-grammar} shows
the grammar of the ClickINC language. A program consists of simple
and compound statements.  A simple statement can assign
an expression to a variable. A compound statement can control branching or looping.  
A branching statement is composed of a condition expression and two branch
bodies containing further statements. A loop statement is composed of a condition and a body to be executed if the condition is met. An expression is composed of basic 
operators (Python built-in) and operand 
(an expression, a variable, or a constant).  A
function is treated as an expression which outputs a result by computing on arguments.  ClickINC supports a Python-like coding style.

The ClickINC language introduces some INC specific elements to ease the programming on network devices. 
The \emph{Fields}, \emph{Objects}
and \emph{Primitives} abstractions are commonly used in INC
applications~\cite{netcache2017,tirmazi2020cheetah,lao2021atp,netchain2018}.  A
field is a data type that can be used to declare variables with the packet header semantic. An object is a collective data type used to
declare variable for five INC objects: Table, Array,
Seq, Hash, and Crypto. INC primitives, including Get, Write, Clear, Count, Drop, Fwd, and Copy, operates on the INC objects.

Each INC module is internally encoded in a platform-independent
language (i.e., the IR in $\S$\ref{ssec:program-ir}). When compiling user programs,
the ClickINC toolchain links the INC modules to their IR implementations.

\begin{figure}[t]
	\setlength{\tabcolsep}{1pt}
\centering
\small
	\begin{tabular}{rlll}
		\textbf{Program}	           &$G$&:== & $var$=$E$ $\big|$ $G$ $\big|$ if $C$: $G$ else: $G$ $\big|$ for $C$: $G$  \\ 
		\textbf{Predicate}	           &$C$&:== & $(E\&E)$ $\big|$ $(E|E)$ $\big|$ $\sim E$  \\ 
		\textbf{Expression}	           &$E$&:== & $V$ $\big|$ $var$ $\big|$ $const$ $\big|$ $F$ $\big|$ $E \odot E$\\ 
		\textbf{Function}	           &$F$&:== & max() $\big|$ min() $\big|$ range() $\big|$ slice() $\big|$ <{}< $\big|$ $\cdots$\\ 
		\underline{\textbf{Field}}	   &$V$&:== & \textbf{value} | \textbf{header}    \\ 
		\underline{\textbf{Object}}	   &$O$&:== & Table $\big|$ Array $\big|$ Hash $\big|$ Seq $\big|$ Sketch $\big|$ Crypto\\ 
		\underline{\textbf{Primitive}} &$P$&:== & get($O$) $\big|$ write($O$) $\big|$ clear($O$) $\big|$ count(O) $\big|$ \\
									   &   &    & del($O$) $\big|$ drop() $\big|$ fwd() $\big|$ copy($O$, $V$) \\ 
	\end{tabular}
 \vspace{-0.3cm}
	\caption{ClickINC grammar. $\odot$ denotes arithmetic or bit operations, and underlined elements are ClickINC specific (see Table~\ref{tab:function} in Appendix~\ref{appendix:pyinc} for ``Function F'').}
	\label{fig:language-grammar} 
    \vspace{-0.6cm}
\end{figure}
\parab{Modular Programming.}
The INC service provider implements the INC specific elements
as modules. 
With such modular programming, we incorporate the INC-related 
data structures and operations into a user-friendly high-level programming environment.
A user can assemble a program with the ClickINC language and the INC modules. Fig.~\ref{fig:lang_comp} shows an example of implementing a Count-Min Sketch using the INC object Array and Hash function.

\parab{Template.} The service provider can also
define common INC programs as \emph{templates}, and provide them to
users as libraries. ClickINC provides the templates for MLAgg, KVS,
and DQAcc (for SQL DISTINCT function), which cover a broad range of INC applications. 

To use a template, users need to provide a configuration profile, so that to configure the module/template parameters. A configuration profile includes the four fields: \emph{App}: dedicated ID of template; \emph{Performance}: performance requirements provided by users; \emph{Traffic distribution}: the upper limit of the querying frequency of clients; \emph{Packet format}: the format of packet header of applicaition. Fig.~\ref{fig:configuration-kvs} shows an example profile for KVS template.
Users can configure module/template data structures directly, e.g., Array size. Certain modules may need hardware-specific configurations that are obscure to users. In
this case, ClickINC provides the application performance metric for the user; For example, Fig.~\ref{fig:configuration-kvs} shows that a key-value search user requires a KVS module with the hit ratio and the accuracy of statistics for missed queries, weight by 0.7 and 0.3 respectively. Especially, as the OBI abstraction makes device transparent to users, leading to the difficulty of setting resource-related parameters. Therefore, ClickINC pre-learns a model to automatically set parameters based on empirical experimental results.
The details of the templates and their configuration can be found in Appendix~\ref{appendix:pyinc}.

Moreover, users can also incrementally add new logic to the existing templates, saving
the efforts to ``re-invent wheels''. For example, Fig.~\ref{fig:program-sparse-agg}
shows how a user can build a customized sparse gradient aggregation
based on the MLAgg template: The user program first imports and customizes a MLAgg template as an instance (line 1); then detects the sparse part of the
parameter vector and drops the sparse one (line 5-9); only the dense one will be aggregated by MLAgg instance (line 10).

\lstdefinelanguage{json}{
    basicstyle=\normalfont\ttfamily,
    commentstyle=\color{eclipseStrings}, 
    stringstyle=\color{eclipseKeywords}, 
    numbers=left,
    numberstyle=\scriptsize,
    stepnumber=1,
    numbersep=8pt,
    showstringspaces=false,
    breaklines=true,
    frame=lines,
    string=[s]{"}{"},
    comment=[l]{:\ "},
    morecomment=[l]{:"},
    literate=
        *{0}{{{\color{numb}0}}}{1}
         {1}{{{\color{numb}1}}}{1}
         {2}{{{\color{numb}2}}}{1}
         {3}{{{\color{numb}3}}}{1}
         {4}{{{\color{numb}4}}}{1}
         {5}{{{\color{numb}5}}}{1}
         {6}{{{\color{numb}6}}}{1}
         {7}{{{\color{numb}7}}}{1}
         {8}{{{\color{numb}8}}}{1}
         {9}{{{\color{numb}9}}}{1}
}

\parab{User-defined Module.} Although we suggest the modules to be implemented by the service provider for simplicity, ClickINC 
reserves the flexibility for users to design their own INC modules, called user-defined modules. 

To develop a user-defined INC module (i.e., object and primitive shown in
Fig.~\ref{fig:language-grammar}), a user needs to use the ``low-level''
instructions to write the module program. These low-level instructions could be
IR instructions or operator-level instructions (like Lyra, and shown in
Table~\ref{tab:atomic}$-$Appendix~\ref{appendix:IR}).

\begin{figure}[tb]
\begin{lstlisting}[language=json, basicstyle=\ttfamily\footnotesize]
{"app" : "KVS",
 "performance":
 {"objective function": max 0.7hit+0.3acc,
  "content": >=1000, ...},
  "traffic frequency": {c1: 10Mpps, c2: 20Mpps, ...},
  "packet_format":
  {"network": "ethernet/ipv4/udp",
   "khdr": {"key": "bit_128"},
   "vhdr": {"value_0": "bit_32"}, ...
} }
\end{lstlisting}
	\caption{Configuration for KVS}
	\label{fig:configuration-kvs}
\end{figure}
\begin{figure}[tb]
\begin{lstlisting}[language=python] % \footnotesize
from Template import MLAgg
agg = MLAgg(row, dim, is_convert, scale)
for i in range(BlockNum):
  sparse = 1
  for j in range(BlockSize):
    index = BlockNum * i + j
    if hdr.feat[index]!=0:
      sparse = 0
  if sparse = 0:
    del(hdr.feat[index])
agg(hdr)
\end{lstlisting}
\vspace{-0.3cm}
	\caption{The user program based on the MLAgg template, performing sparse gradient aggregation.}
	\label{fig:program-sparse-agg}
    \vspace{-0.3cm}
\end{figure}
\subsection{Program Intermediate Representation} \label{ssec:program-ir}
\parab{Platform-Independent Intermediate Representation.} To compile a user
program to machine code on heterogeneous devices, ClickINC first compiles the
program into an Intermediate Representation (IR), called an IR program.

ClickINC summarizes the IR instruction
set from the different platforms it supports. The IR instruction set is listed in
Fig.~\ref{fig:IR-instruciton-set} in Appendix~\ref{appendix:IR}. Some instructions
are common on all platforms and the others only run on certain platforms: there are 13 classes of them, and
each platform supports a subset of them as shown in Table~\ref{tab:prmitive_block} in
Appendix~\ref{appendix:IR}. Such instruction constraints will take effect in later program placement.

The ClickINC IR instruction sets include declaration instructions and operation instructions,
where the former defines variables and the latter operates on variables.
As the ClickINC IR instruction set needs to adapt to programmable network devices (e.g., in pipeline switches, a packet must sequentially traverse the pipeline stages without
rewinding back in one pass), it does not support control flow transition (i.e., instructions like {\tt goto} or {\tt jump}). An IR program is therefore executed sequentially. 

\parab{Compiler Frontend.} The ClickINC frontend compiler compiles a user program into
an IR program in the following passes: (1) inline all the bodies of the functions in the main program from the unified library; (2) unroll the loops if it makes constant pass of the iteration, e.g., {\tt for i in range(3)}, otherwise an error will be reported; (3) convert the if-else branches to ternary operators in the format of \texttt{condition? instr}; (4) split the instructions into single-operand ones.
Especially, instructions with temporary variables are transformed into Static Single Assignment (SSA) to eliminate the write-after-read and write-after-write dependencies, helping the
IR DAG construction in later program placement ($\S$\ref{ssec:block-construction}). 

\section{Program Placement}
\label{sec:program-placement}

We formulate the problem of distributing an IR program on multiple devices
as an optimization problem and solve it by a dynamic programming algorithm.

\subsection{Problem Statement} \label{ssec:problem-statement}
The network contains multiple programmable devices, and a program's instructions
can be distributed on multiple devices.
Placing an IR program on network devices is an \emph{optimization problem}, where we
wish to maximize the traffic volume served by INC while minimizing the resource
consumption as well as the inter-device communication overhead. 
The solution needs to make a few tradeoffs: placing more blocks on a device is simpler but limits the INC capability and capacity, and distributing blocks on multiple devices incurs inter-device communication overhead.
Another key invariant in program placement is to keep the program
execution equivalent as being executed on a single device.

\parab{Complexity.}
To place a program on multiple devices is to find devices for each instruction
of the program. The problem's searching space exponential of program
size and network scope, i.e., $O((M\cdot S)^N)$ where $M$, $S$, and $N$ are the
number of devices, pipeline stages or cores for SoC device, and IR
instructions.  Na\"ive methods may find sub-optimal results: greedily choosing
a single path cannot utilize the multi-path resources; simply replicating
the program on all paths could lead to device overloaded. Existing methods usually cannot handle the problem in a large scale. Lyra needs manually labeling the candidate devices of a program, which
limits the result optimality; if ``all'' devices are labeled as candidate
devices, its SMT solver approach cannot give the result in an acceptable time.

\parab{Intuition.} We take three intuitions to reduce the algorithm complexity in ClickINC.
First, we group IR program instructions into \emph{blocks},
where all instructions in a block are executed all or none, and thus, one block
can represent all its instructions in the algorithm (reducing $N$,
$\S$\ref{ssec:block-construction}).  In a DCN, there could be multiple paths
between two communicating INC hosts. On each path, the IR program blocks must
be placed sequentially; among the paths, blocks are replicated on devices to
guarantee the traffic on different paths is processed by the same program;
two paths' intersection segment can hold blocks shared by both paths.

Second, we group DCN devices into equivalent classes, and use a class to
represent all its devices in the algorithm (reducing $M$).  
Third, we find that the placement problem can be
divided into isomorphic sub-problems, and thus propose a dynamic programming
(DP) algorithm to search for the optimal solution, which gives a solution in
polynomial time.

\begin{figure}[t]  
	\centering  
	\includegraphics[width=0.49\textwidth]{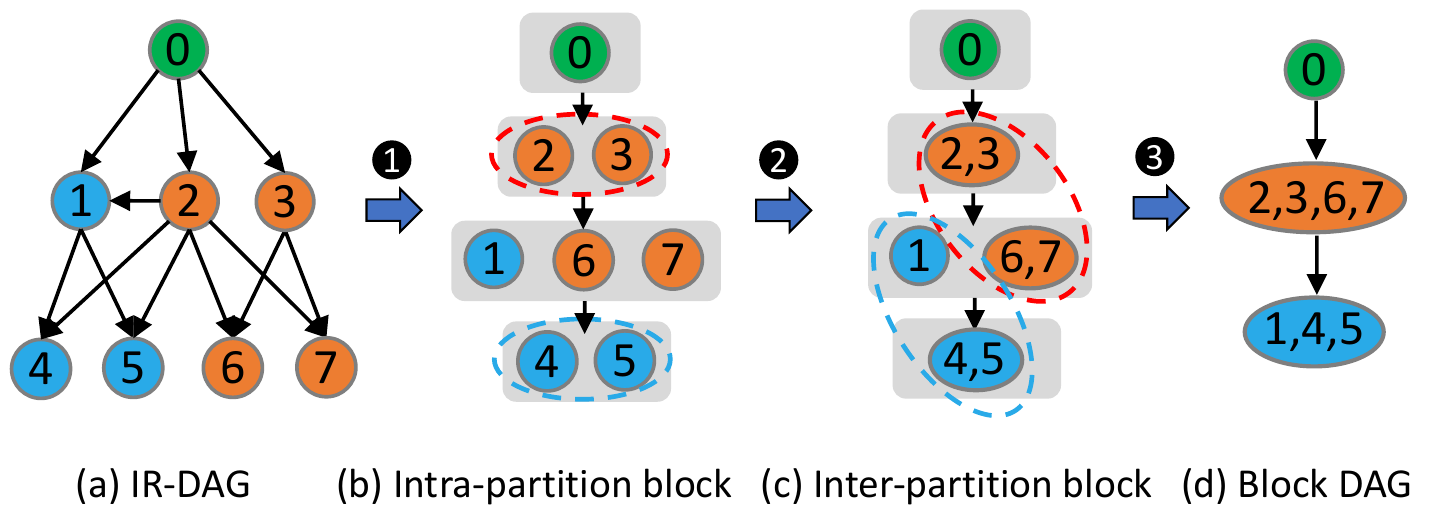} 
	\vspace{-0.4cm}
	\caption{An example of block construction.}
	\label{fig:block_construction}  
	\vspace{-0.5cm}
\end{figure}
\subsection{IR Block DAG Construction} \label{ssec:block-construction}
ClickINC first transforms
the IR program into a Directed Acyclic Graph (DAG) of disjoint instruction blocks to
comply with the sequential instruction execution. A block is a basic placement unit. Each block contains instructions in the original order as in the IR program, and
the union of blocks equals the IR program.  

In ClickINC, the IR block DAG construction should also comply with several
practical principles to ensure correctness. First, the instructions
operating on the same state should be in the same block to avoid
inconsistency. Second, the instructions in the same block should be of the
same type to ensure the block can be placed on some devices (not all devices support all instruction types). Third, a
block's size should be limited by a threshold parameter decided by the device capability.  Appendix~\ref{sec:partition_theory} formalizes these principles.

ClickINC initializes each instruction as a block and gradually merges the
blocks complying with the above constraints.  The algorithm takes three steps.

\parab{Step 1: construct instruction dependency graph.} If an instruction $i$ reads a 
variable whose value is written by a previous instruction $j$, $i$ \emph{depends on}
$j$. INC applications have a subtle pitfall: the program is driven
by packet arrival events and there are inter-packet \emph{states} (e.g., a packet counter). \emph{All instructions that write or read the same state are mutually dependent.} The other variables with a life span of a packet are called \emph{temporary variables}.

\parab{Step 2: merge instructions within a loop.} The IR program can be viewed as
a directed graph $G$, with the instruction as the node $V$ and the dependency
as the edge $E$. ClickINC iteratively merges nodes that form a loop. When multiple nodes
(denoted as $N$) are merged as one, a new node (i.e., block) forms to replaced the old ones, and the edges
between the merged nodes $N$ and the other nodes $V-N$ are replaced by edges
between the new node and other nodes. The algorithm repeats until there is no loop in the graph.

\parab{Step 3: merge non-exclusive blocks to compact the DAG.} After
eliminating loops, the graph becomes a DAG. ClickINC further runs Kahn's
topological sort algorithm
to partition the graph and merges non-exclusive blocks.
Fig.~\ref{fig:block_construction} illustrates an example. Kahn's algorithm
takes iterations to partition a DAG: each iteration takes the nodes whose input
degree is 0 as one partition and removes these nodes and their related edges
(Fig.~\ref{fig:block_construction}b).  After the Kahn partition, ClickINC
further merges blocks whose instructions are of the same type, i.e., being non-exclusive, within the same
partition (Fig.~\ref{fig:block_construction}b-c) and the adjacent partitions
(Fig.~\ref{fig:block_construction}c-d) without exceeding the block size limitation. 
The process repeats until no more blocks
can be merged.

\subsection{Topology Simplification}\label{ssec:topology-reduction}
\begin{figure}[tb]  
	\centering  
	\includegraphics[width=0.48\textwidth]{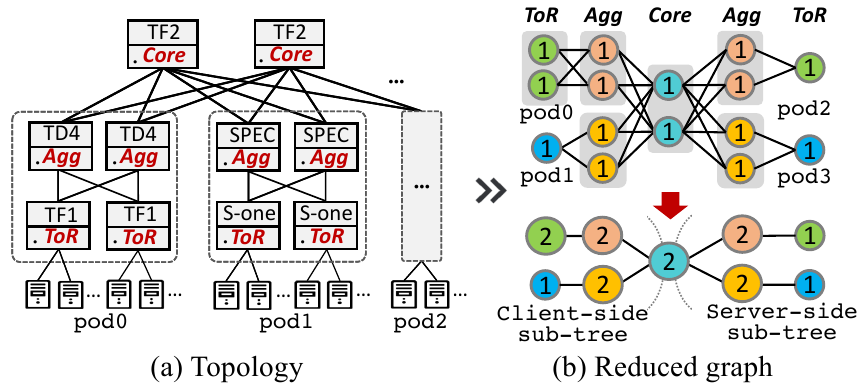} 
	\vspace{-0.6cm}
	\caption{Topology Simplification (number in a circle: the number of merged devices, color: device type).}
	\label{fig:multi-path} 
	\vspace{-0.6cm}
\end{figure}

ClickINC further reduces the search space for program placement by simplifying 
the network topology. ClickINC leverages the DCN's topological characteristics
to make the reduction. The network devices in a DCN can be divided into
several \emph{equivalent classes (EC)}, where devices in the same class have the 
same physical wiring with the other classes.

For a three-tier fat-tree topology,
all its ECs can be computed bottom-up in the topology.  
All ToR switches connecting with the same servers form an EC, all
aggregation switches connecting with the same ToR switches form an EC, and all
core switches form an EC (as they connects to the same aggregation switches). Based on the proof of the device equality in EC for program placement (see Appendix~\ref{appendix:proof_reduced}), we can
merge the switches in an EC as one virtual node, and thus the DCN topology is simplified to a tree (Fig.~\ref{fig:multi-path}).

In the later program placement, ClickINC also takes advantage of the path symmetry.
All physical servers are at the leaf nodes of the topology, and traffic goes upwards to 
a root and goes downwards along the tree. Thus the tree is segmented into two parts by the root, i.e., the client-side sub-tree and server-side sub-tree.
\subsection{Placement Algorithm} \label{ssec:program-placement}

\textbf{Optimization Goal.} The program placement algorithm aims to find a solution to maximize the traffic served by INC
with the minimum resource consumption and network bandwidth for passing parameters
between blocks ($\S$\ref{ssec:refine-data-plane}). 
With $x_{v,d}\in\{0, 1\}$ indicating whether block $v$ is placed
on device $d$, the objective $G(x)$ can be formalized as: 
\begin{equation}
\label{eq:op_target}\small
    G(x) = \omega_t h_t(x) - \omega_r h_r(x) - \omega_p h_p(x),
\end{equation}
where $h_t$ is the ratio of traffic served by INC, $h_r$ is the ratio of
resource consumed on devices, and $h_p$ is the ratio of data transferred across
devices. The parameters $\omega_t$, $\omega_r$, and $\omega_p$ balance the
three factors.
We empirically set $\omega_t$ as 1/2 to prefer high throughput, and
tune $\omega_r$ and $\omega_p$ dynamically according to the resource availability as the algorithm proceeds.

\begin{small}
\begin{equation}
    h_t(x)=\sum\nolimits_{l\in L_p} \left(\bigwedge\nolimits_{v\in P}\sum\nolimits_{d\in l}x_{v,d}\right) \times \frac{t_l}{\sum_{l\in L_p} t_l}, \nonumber
\end{equation}
\end{small}
i.e., the overall normalized traffic volume on the selected paths, where $t_l$ is the traffic volume on each path;
\begin{small}
\begin{equation}
    h_r(x)=\sum\nolimits_{d \in D} \sum\nolimits_{v\in V} x_{v,d}\times \frac{r(v)}{\sum\nolimits_{v\in V}r_v}, \nonumber
\end{equation}
\end{small}
i.e., the overall normalized resources on the selected devices;
\begin{small}
\begin{equation}
    h_p(x)=\sum\nolimits_{d_i, d_j \in D}\sum\nolimits_{v_k, v_l \in V}x_{v_k, d_i}x_{v_l, d_j}\times \frac{\phi_{v_k,v_l}}{\sum\nolimits_{d\in
D}\sum_{v\in V} x_{v,d}\phi(v)}, \nonumber
\end{equation}
\end{small}
i.e., the overall normalized volume of extra parameter incurred due to program
partition between selected devices, where $\phi_{v_k,v_l}$ denotes the amount
of extra data transferred between devices $v_k$ and $v_l$, and $\phi_{v}$ refers to all extra data
incurred by the block $v$.

\parab{Dynamic Programming Algorithm for Placement.}
Even with the reduced topology, the searching space to find an optimal placement of IR program is still too large due to the possible multiple flow paths from multiple pods. SMT or ILP solvers cannot give the solution in an acceptable time.
ClickINC uses an innovative dynamic programming algorithm with pruning. In detail, for the two sub-trees illustrated in Fig.~\ref{fig:multi-path}, we try to allocate the program but from different directions (i.e., sequentially allocate instruction blocks from leaf to root for the client-side sub-tree and do it in the reverse order for the server-side sub-tree, so that the problem is translated into two sub-tree-based program placement). Then we link the two sub-tree placement results by the root node, i.e., traverse all partial placement results of sub-trees, and choose the one with the largest gain of Eq.~\ref{eq:op_target} from all feasible combinations. 
The placement task on each sub-tree devices can be discomposed as two sub-tasks: (1) place the instruction blocks across devices for multi-path traffic; (2) decide the placement of instructions in each block within a particular device. We illustrate how ClickINC addresses these two sub-tasks.

\noindent{$\bullet$ \emph{Cross-device multi-path solution.}} Let $H_{B, D_i}$ denote the 
maximum gain of placing block(s) $B$ on a tree with $D_i$ as the root. 
When the tree is a single device $D_i$, $H_{B, D_i}$ equals the gain of Eq.~\ref{eq:op_target}.
When the tree has subtrees, ClickINC places a partition $B'$ on the root node (the partition can be $\emptyset$ or $B$)
and the remaining onto the subtrees, and the gain $H_{B, D_i}$ is the sum of
that on the root and the subtrees; by iterating all possible partitions,
ClickINC finds the one which gives the maximum $H_{B, D_i}$, i.e., 
\begin{small}
\begin{equation}
\label{eq:dp_cross}
    H_{B,D_i} = 
		\max_{B' \in Partion(B)}\left(\sum\nolimits_{j\in son(D_i)} H_{B-B',D_{j}} + G(D_i, B') \right).
\end{equation} 
\end{small}
The problem can be recursively divided into isomorphic sub-problems. We design
a dynamic programming algorithm to compute the problem bottom-up; the pseudo-code
is shown in Algorithm \ref{alg:dp}: line 1 adjusts weights; 2-3
uses Depth First Search (DFS) to traverse two sub-trees and performs
allocation; then for a leaf node, line 20-21 enumerates instruction blocks
and calls Algorithm \ref{alg:dp_within} to place instruction in blocks within a
device; for internal nodes, line 16-17 integrates allocation results of
possible branches and line 22 executes the DP following Eq.~\ref{eq:dp_cross}
where $calc\_hp(\cdot)$ computes the cross-device communication overhead.
Especially, as illustrated in line 10, we prune the illegal enumeration results
that violate block dependency to reduce the solution space. This algorithm can
be applied to fat-tree and spine-leaf topology with any number of layers.
\begin{algorithm}[t]
	\caption{Multi-path allocation}\label{alg:dp} \footnotesize
	\KwIn {$R$, $S$, $D$: the set of resources, stages, all available devices, $\mathcal{B}$: the set of instruction block to be allocated.}
	\KwOut {$s$: the allocation solution.}
	$\omega_t,\omega_p, \omega_r \leftarrow$ adjust($R$, $D$, $S$);\\
	CDP$\leftarrow$\texttt{DFS\_DP}(CTree.root, 1); \\ 
	SDP$\leftarrow$\texttt{DFS\_DP}(STree.root, -1);\\
	\For{B $\in$ CDP[CTree.root]}{
	    B' $\leftarrow \mathcal{B}$-B;\\
	    \If{B'$\in$ SDP[STree.root]}{
	        s $\leftarrow$ $\min$(s, CDP[CTree.root]+SDP[STree.root])
	    }
	}
    \textbf{return} s; \\
    \SetKwFunction{FMain}{DFS\_DP}
    \SetKwProg{Fn}{Function}{:}{}
    \Fn{\FMain{r,d}}{
    A $\leftarrow \varnothing$;\\
    \If{r = $\varnothing$}{
        return
    }
    \For{c $\in$ r.child}{
        $DP_{sub}$[c] $\leftarrow$ \texttt{DFS\_DP}(r)
    }
    sub\_G[$\varnothing$] $\leftarrow$ 0;\\ 
    \For{i $\in$ $\cup$ $DP_{sub}$[r.child].keys}{
        sub\_G[i] = sum($DP_{sub}$[i])
    }
    \For{i $\in$ sub\_G.keys}{
        $B_{ava} \leftarrow \{b|b\in \mathcal{B}-i; in\_degree(b,d)=0 $\};\\
        \For{B $\in$ \textbf{enum} $B_{ava}$}{
            curr $\leftarrow$ call $Algorithm$ \ref{alg:dp_within}(S[r], R[r], B);\\
            DP[r][i+B] $\leftarrow \max$(DP[r][i+B], sub\_G[i]+curr+$calc\_hp$(i,B));\\
        }
        }
    }
    \textbf{return} DP; \\
\end{algorithm}

\begin{algorithm}[t]
	\caption{Instruction allocation within a device}\label{alg:dp_within} \footnotesize
	\KwIn {$S_d$, $R_d$: the stages, resources of device $d$, $P$: set of instructions to be placed.}
	\KwOut {$I$: Instruction allocation results.}
	$I$[-1] $\leftarrow \{\varnothing:0\}$;\\ 
    \For{s$\leftarrow$ 0 \textbf{to} $S_d$}{
        \For{i $\in I[s-1]$}{ 
            \If{calc\_resource(p)$\leq R_d[s]$}{
                $I[s][i]$ $\leftarrow \max(I[s][i], I[s-1][i])$;\\
                $P_{nd} \leftarrow \{p|p\in P-i; in\_degree(p)=0\}$;\\ 
                \For{p $\subseteq$ \textbf{enum} $P_{nd}$}{
                    \If{$\exists i'\in I$[s].keys $\&\&$ i+p $\subseteq$ i'}{
                        continue;\\
                     }
                    \If{$\exists i'\in I$[s].keys $\&\&$ i' $\subseteq$ i+p}{
                        del I[s][i'];\\
                     } 
                    $I$[s][i+p] $\leftarrow\max$($I$[s][i+p], $I$[s-1][i]+$G$(p));\\
                }
            }
        }
    }
    \textbf{return} $I$;\\
\end{algorithm}

\noindent{$\bullet$ \emph{Intra-device solution.}}
To place instructions within a device, we use another DP algorithm to ensure (1) the instructions satisfy resource constraints; and (2) the placement has the largest gain according to Eq.~\ref{eq:op_target}. Thus, we can derive:
\begin{small}
\begin{equation}
	H_{p,S_i} = \max_{p' \in Partition(p)}\left( H_{p-p',S_{i-1}} + G(S_i, p')  \right),
\end{equation}
\end{small}
where $p$ is the instructions that are placed, $S_i=[s_1, s_2, \cdots, s_i]$ is the set of stages for pipeline devices ($S_i=s_0$ for non-pipeline device). On a pipeline, the instruction-to-stage mapping has a huge solution space. To improve efficiency, \ding{182} the infeasible solutions violating the instruction dependency are pruned (line 6 of Algorithm 3); \ding{183} the target function Eq.~\ref{eq:op_target} prefers solutions with more compact placement (i.e., each stage should use up at least one type of resource), so inadequate solutions are pruned (line 8-12). With these, the DP algorithm achieves the similar solution as SMT in a much shorter time. The pseudo-code is shown in Algorithm \ref{alg:dp_within}. 

\parab{Adaptive Weight.}
As the algorithm proceeds, $\omega_r$ is set as $\omega_r=1-2^{r-1}$ and
$\omega_p=1/2-\omega_r$, where $r$ is the ratio of remaining resources. The
adaptive weight could raise the importance of device resource
allocation as the remaining resource decreases (a smaller $r$ leading to a
larger $\omega_r$).

\parab{Placement Constraints and Pruning.} As the DP algorithm searches the solution with the highest gain, the following 
pruning techniques are applied to reduce the search space. When one of the following constraints is violated, the algorithm sets 
$H_{B,D_i}$ as negative infinity ($-\infty$) and 
stops exploring the branch: 
(1) if a device's resource capacity cannot satisfy the 
block; (2) if an instruction placement violates the instruction dependency; (3) if a device's computation capability fails to satisfy the block's instruction type. Besides, the target function Eq.~\ref{eq:op_target} prefers solutions with more compact placement (i.e., each stage should use up at least one type of resource), so inadequate solutions are pruned.

To map the program on the devices with various constraints, we propose device modeling based on different architectures (i.e., pipeline, multi-core) to formalize the device-level instruction placement in Appendix~\ref{appendix:device_modeling}, and describe the chip-specific constraints in Appendix~\ref{appendix:contraints}.
\section{Program Synthesis}
\label{sec:program-synthesization}

Each device runs a network operator-deployed program, called \emph{base program}, 
to perform the basic network functions such as packet validation, forwarding, etc.
Multiple users' INC programs (snippets) placed on the device rely on ClickINC to synthesize them as one big program.

A program typically consists of a header parsing snippet and a packet processing snippet.
User programs and the base program could parse different header fields for their own packet processing.

\parab{Refine Runtime Data Plane.} \label{ssec:refine-data-plane}
The network data plane are refined to support 
program execution on distributed devices. 
The two refinements are transparent to the users: for a user's traffic,
the first network device inserts a special header for the following refinement,
and the last one removes it.

First, 
temporary variables may be shared by multiple devices.
ClickINC allows the user packets to carry the shared
variables from one device to its downstream devices. 
ClickINC packet header has a field {\tt Param} to store the 
temporary variables. Note that persistent variables are only 
used and placed on one device, and the static single assignment
transformation makes temporary variables only have dependency 
from the successor device to the predecessor along the DAG.

Second, ClickINC allows placing replicated blocks along a path.  For
example, a program with blocks 1, 2, and 3 may be placed along a four-hop path
as 1, 2, 2, 3.  Thus, ClickINC needs to decide and tell the devices with
replicated blocks which of them processes a packet. ClickINC assigns each block in the DAG
program a step number, and adds a {\tt step} field in the packet header. A
device attempts to match the packet {\tt step} field with its own block's step, if they
match, the block is executed and the packet step is increased to the next step,
otherwise, the packet skips the processing (if the packet step number is
larger) or dropped (if the packet step number is smaller).
Allowing replicated blocks in the network also provides another advantage: if the 
network experiences a transient failure, a packet can skip the faulty device
and get processed by the successor device with replicated blocks.

\parab{Compiler Backend.} ClickINC first isolates user programs from each other and the base program.
It renames variable in the user programs, so that 
after compilation their programs access isolated memory region, without violating each others' data.
For example, the {\tt mtb} variable in a KVS program {\tt kvs\_0} is renamed as {\tt kvs\_0\_mtb}.
Then it adds a user ID match to filter out the user's traffic for its own program.
\begin{lstlisting}[language=c++]
if (INC_1_hdr.isValid()) {logic1;}
\end{lstlisting}

ClickINC compiles each program individually into device-specific instructions,
called \emph{device program}. These device programs are merged with the following
optimization, and eventually compiled as an executable.

\begin{figure}[t]  
	\centering  
	\includegraphics[width=0.46\textwidth]{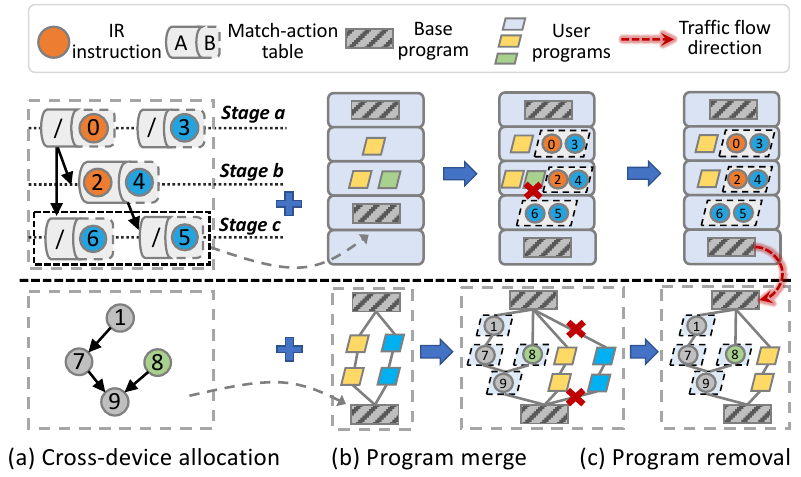} 
	\vspace{-0.4cm}
	\caption{Program Synthesis.}
	\label{fig:IR_pdg}  
	\vspace{-0.5cm}
\end{figure} 

\parab{Program Merge.} ClickINC merges header parsing snippets and packet
processing snippets separately. The header parsing follows a tree
structure. When merging two programs’ header parsing, ClickINC scans both trees, merges the different branches, and eventually outputs a merged tree.

Merging packet processing snippets is  more complex due to the dependency between the user programs and the base program. For example, the forwarding function in the base program depends on the user program if the user program changes the packet's IP addresses (e.g., NetCache~\cite{netcache2017});
the user programs depend on the  packet integrity check function in the base program, because 
only valid packets should be handed to the user programs. Thus, the base program is
divided into a \emph{head} part and a \emph{tail} part, where \emph{head} contains functions depended on
by the user programs and \emph{tail} contains functions depending on the user programs.

For pipeline devices, as the upper part of Fig.~\ref{fig:IR_pdg}(b) shows, the user program is placed between \emph{head} and \emph{tail} of the base program. The user program is moved to stages as early as possible to reduce the overall stages. For multi-core devices, ClickINC merges the dependency graphs of the user program and the base program according to node dependency, and then merges the corresponding code pieces based on the topological sorting order on the merged graph, as illustrated in the lower part of Fig.~\ref{fig:IR_pdg}(b).

\begin{figure}[t]  
	\centering  
	\includegraphics[width=0.47\textwidth]{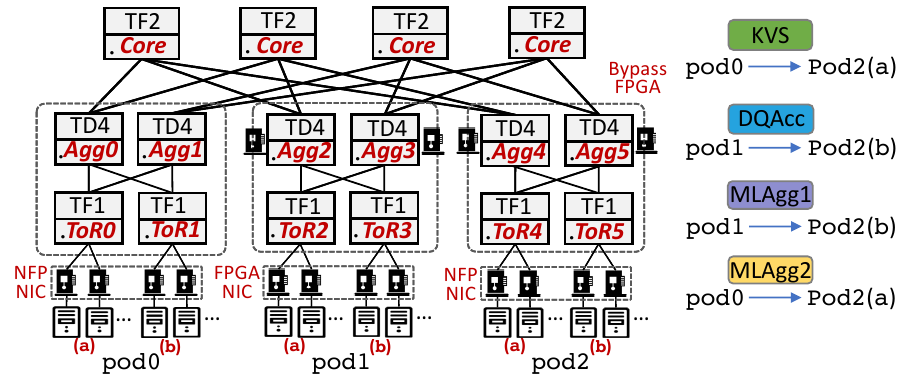} 
	\vspace{-0.3cm}
	\caption{Network Topology in Emulation.}
	\label{fig:emulation}  
	\vspace{-0.15cm}
\end{figure}
\parab{Incremental Compilation for Dynamic Program Merge \& Removal.}  
ClickINC applies an annotation-based method to support incremental user program merging and removal.
ClickINC associates each user program with an annotation indicating its ownership.
During the compilation, the annotation is associated with each instruction.
When merging a user program into the base program, ClickINC incrementally adds the new user annotation
to the shared instructions and sets the new user's own instructions with its annotation.

When a user revokes its INC service request, ClickINC iterates the synthetic program's instructions,
and removes the user's annotation; if an instruction has no annotation, the instruction is removed.

At runtime, ClickINC makes lazy enforcement for program removal to
reduce the service interruption. To remove a user
program, the program instruction dependency graph is updated and
the resource is recorded as released in ClickINC without immediate enforcement.
Meanwhile, the traffic matching rules are updated so that
the user program is not effective anymore. When a
request for adding a new program is submitted, ClickINC enforces the new updated graph as the executable on the device.
\section{Evaluation}
\label{sec:evaluation} 

We conduct experiments to display ClickINC's advantages. (1) ClickINC makes use of resources on heterogeneous devices to achieve high INC performance ($\S$\ref{subsec:benchmark});
(2) The modular programming abstraction allows more efficient INC development for users than the other solutions do, including Lyra, P4all, and P4, in terms 
of line of code and programming efficiency ($\S$\ref{subsec:obi_efficiency}); (3) the cross-device INC program allocation outperforms the current practices;
(4) ClickINC uses an efficient DP algorithm to perform program placement, achieving very short compiling time and high scalability over both the number of devices and program size;
(5) With incremental deployment, ClickINC achieves minimal impact on the network devices, traffic, and other deployed INC programs.

\subsection{Experiment Setting}
\begin{figure}[t]  
	\centering  
	\includegraphics[width=0.41\textwidth]{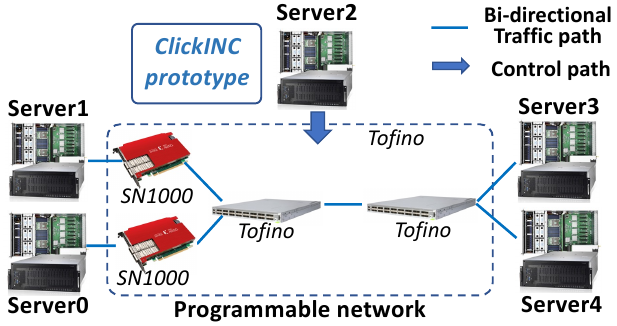} 
	\vspace{-0.3cm}
	\caption{Testbed.}
	\label{fig:testbed}  
	\vspace{-0.7cm}
\end{figure}
\parab{Implementation.}
The ClickINC framework is implemented in C++ and Python with 8,755 and 3,133 lines of code, respectively, and runs on a desktop with an Intel Core i7 4GHz CPU and 16GB RAM. It currently supports Tofino, Tofino2, TD4, Netronome smartNIC, Xilinx FPGA, covering the target DSL of P4$_{16}$, NPL, Micro-C, and Verilog HDL. 

\parab{Emulator.}
We construct a software emulation platform for evaluating ClickINC on large networks with heterogeneous devices. A server equipped with the switch SDE~\cite{p4sde} for Tofino series ASIC and BCM simulator for TD4 can emulate all the chip functions. Using virtual NIC pairs to act as switch ports, the emulator presents the same resource constraints as a real switch and can be controlled using the same API. Xilinx and Netronome also provide the software behavioral model/simulator to emulate hardware FPGA/NFP smartNIC which takes PCAP files as input and output. We set up an emulator using 4 servers with 16 Virtual Machines (VM) (4 for Tofino2, 6 for TD4, and 6 for Tofino), 4 VNetP4 behavior model instances, and 8 NFP simulator instances which are organized in the topology as shown in Fig.~\ref{fig:emulation}. The communication between VMs is bridged through the physical NIC. Communication with the VNetP4 behavior models or NFP simulators is achieved by using a script program to generate and interpret the PCAP files. 

\parab{Testbed.}
As shown in Fig.~\ref{fig:testbed}, Server2 runs the ClickINC controller and serves as the switch controller as well. Server3 and Server4 run DPDK on Mellanox ConnectX-5 dual-port 100G NIC. 
Equipped with Xilinx Alveo U280 FPGA and Netronome Agilio LX smartNIC, respectively, Server0 and Server1 generate the traffic of integer parameters. 
Two Edgecore Wedge100BF-32X switches are interconnected, and each switch further connects with the two smartNICs or the two ConnectX-5 NICs, respectively.
The link capacity is 100Gbps.

\subsection{Application Performance}
\label{subsec:benchmark}
INC programs can achieve performance gain when compiled and deployed by
ClickINC.  We control the network with (1) no programmable device, (2) only smartNICs, (3) only one Tofino switch, (4) two Tofino switches, and (5)  the
smartNIC and a Tofino switch.  We deploy the sparse gradient aggregation
program in Fig.~\ref{fig:program-sparse-agg} via ClickINC in the five network
configurations.  Fig.~\ref{fig:goodput_comp} shows the aggregation goodput and
Fig.~\ref{fig:latency_comp} is the corresponding INC processing latency.  Using
setting (1) as the baseline, ClickINC compiles the sparse gradient compression
on the smartNICs in case (2), which increases the goodput by reducing traffic
volume.  ClickINC compiles the aggregation on the switch in case (3), which
increases the goodput by in-network traffic aggregation.  The program performs
better with two switches in case (4) than one in case (3), because the packet
size can be larger in case (4), and ClickINC places the program on two
switches, each processing a part of packets.  And finally, with a combination
of two heterogeneous devices, the program achieves the highest runtime
throughput in case (5). 

\begin{figure}[tb]
    \subfigure[Goodput]{
        \label{subfig:dp-w/o}
		\includegraphics[width=0.215\textwidth]{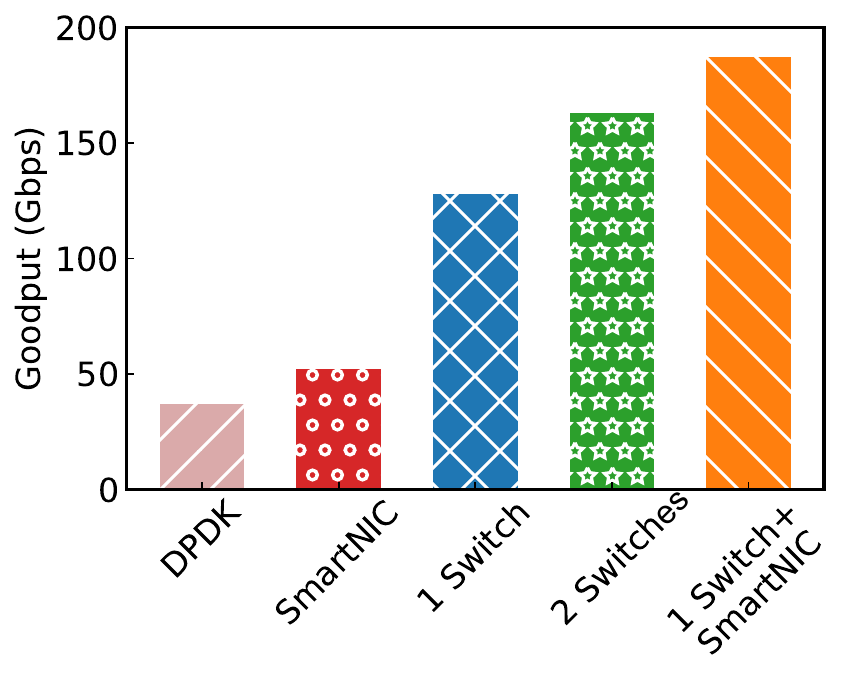} \vspace{-0.3cm}
		\label{fig:goodput_comp} }
	\hfill
	\subfigure[In-network latency]{	
	    \label{subfig:dp-block}
		\includegraphics[width=0.215\textwidth]{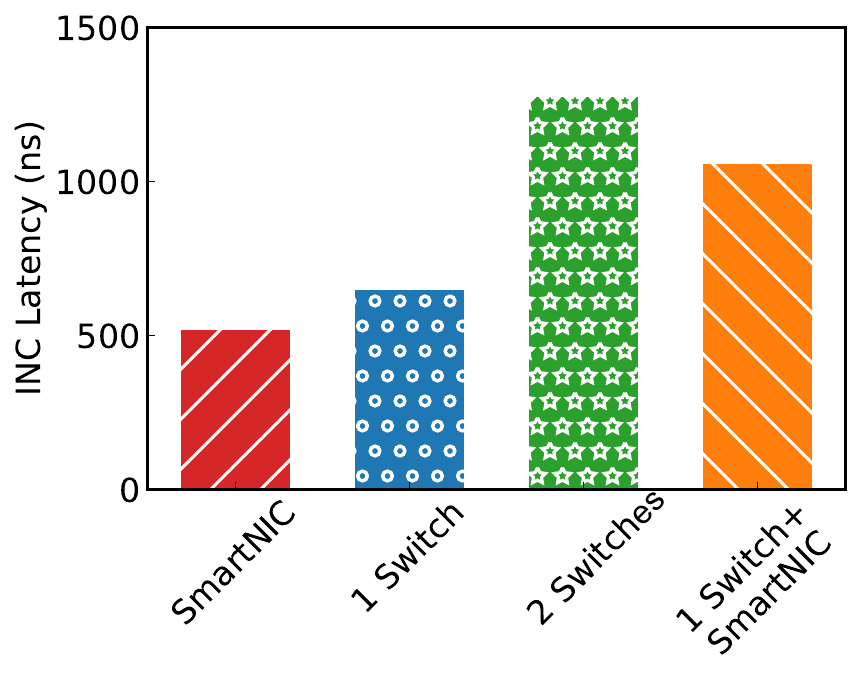} \vspace{-0.3cm}
		\label{fig:latency_comp} }
\vspace{-0.3cm}	
 \caption{Performance comparison.}	
	\label{fig:performance_comp}
	\centering
	\vspace{-0.3cm}
\end{figure}

\subsection{Program Development Workload} \label{ssec:dev-workload}
\label{subsec:obi_efficiency}

We develop three INC applications with ClickINC, Lyra, P4all, and P4$_{16}$.
The applications are (1) a KVS program using a 5K-entry cache for 128b key and
16$\times$32b value vector, and a 3$\times$1K heavy-hitter for statistics of
missed queries; (2) an MLAgg program with 5K aggregators for 24$\times$32b
integer parameter vector; (3) an SQL DISTINCT program with a 5K$\times$8
rolling cache which filters queries with 32b value.

\begin{table}[tb]
\centering
\caption{Comparison between ClickINC and other peers} \label{tab:comp_pyinc}
\vspace{-0.2cm}
\footnotesize 
\setlength{\tabcolsep}{1pt}
\begin{tabular}{|c|c|c|c|c|}
\hline
	\textbf{Language} & \textbf{LoC (KVS/} & \textbf{Modular} & \textbf{Incremental} & \textbf{Cross-Device} \\
	\textbf{}  & \textbf{MLAgg/DQAcc)} & \textbf{Programming} & \textbf{Compilation} & \textbf{Placement}\\ \hline
\textbf{ClickINC} &  16/56/13   & Y & Y & Y  \\ 
\textbf{Lyra~\cite{gao2020lyra}}  &  125/232/243   & N & N & Y  \\ 
\textbf{P4all~\cite{hoganmodular2022}}   &  202/233/138   & Y & N & N  \\ 
\textbf{P4}$_{16}$~\textbf{\cite{p4lang}}  & 571/1564/403   & N & N & N  \\ \hline
\end{tabular}
\end{table}

\parab{Program Complexity (LoC).}
Table~\ref{tab:comp_pyinc} illustrates the Lines of Code (LoC) of the three
programs in four frameworks.  ClickINC programs are 4-18, 4-12, and 28-35 times
shorter than that Lyra, P4all, and P4$_{16}$ ones, respectively. ClickINC's modular
programming reuses existing modules (outperforming Lyra), its high-level
language features (e.g., loop) are more concise, and its multi-user programming
and synthesis allows user to only write INC specific logic (outperforming
Lyra and P4all), and thus, the overall LoC is much shorter.

\parab{Developer Productivity.}

\noindent{$\bullet$ \emph{Individual Program Development.}} 
We compare the programming productivity
between using P4$_{16}$ and ClickINC on a single device. Lyra and P4all's compilers are not
publicly available when this work is done. The experimenters are experienced and originally familiar with P4 programming on Tofino, and they write the three programs.
Table~\ref{tab:single} shows the number of trials (a trial denotes a cycle of
development, compilation, test, and debug) and time spent in development.
ClickINC can reduce the development time by 6-7.2 times, and the developer
makes very few errors when developing in ClickINC (0-2 for three applications).

\begin{table}[t]
\centering
\caption{Trials and manhour in programming}\label{tab:single}
\vspace{-0.2cm}
\footnotesize
	\setlength{\tabcolsep}{2pt}
\begin{tabular}{|c|cc|cc|cc|}
\hline
\multirow{2}{*}{\textbf{Language}} & \multicolumn{2}{c|}{\textbf{KVS}} & \multicolumn{2}{c|}{\textbf{MLAgg}} & \multicolumn{2}{c|}{\textbf{MLAcc}}             \\ \cline{2-7} 
     & \multicolumn{1}{c|}{\# of trials} & time & \multicolumn{1}{c|}{\# of trials} & time      & \multicolumn{1}{c|}{\# of trials} & time      \\ \hline
\textbf{P4}$_{16}$ & \multicolumn{1}{c|}{12} & $\sim$1h & \multicolumn{1}{c|}{14} & $\sim$3h & \multicolumn{1}{c|}{6} & $\sim$30m \\
\textbf{ClickINC} & \multicolumn{1}{c|}{1} & $\sim$10m & \multicolumn{1}{c|}{2}     & $\sim$25m & \multicolumn{1}{c|}{0} & $\sim$5m  \\ \hline
\end{tabular}
\end{table} 
\begin{table}[t]
\centering
\caption{Developer Productivity of Placing Multi-user Program over Multi-devices}\label{tab:cross}
\vspace{-0.2cm}
\scriptsize
	\setlength{\tabcolsep}{2pt}
\begin{tabular}{|c|c|c|c|c|c|c|c|}
\hline
\textbf{Metrics} & \textbf{Method} & \textbf{KVS0} & \textbf{DQAcc0} & \textbf{MLAgg0}  & \textbf{DQAcc1} & \textbf{MLAgg1} & \textbf{KVS1} \\ \hline
\multirow{2}{*}{\textbf{\# of trials}}   & \textbf{P4}$_{16}$   & 2   & 16  & 25  & 31  & 24 & 13 \\ \cline{2-8} & \textbf{ClickINC} & \multicolumn{6}{c|}{1} \\ \hline
\multirow{2}{*}{\textbf{Time}} & \textbf{P4}$_{16}$ & $\sim$5m & >1h & >4h & >3h & >2h  & $\sim$1h   \\  
\cline{2-8} & \textbf{ClickINC} & \multicolumn{6}{c|}{<10s}  \\ \hline
\multirow{2}{*}{\textbf{Device}} & \textbf{P4}$_{16}$   & ToR5 & \begin{tabular}[c]{@{}c@{}}ToR0,1;\\ Agg0,1\end{tabular} & \begin{tabular}[c]{@{}c@{}}Agg0,1;\\ Agg4,5\end{tabular} & \begin{tabular}[c]{@{}c@{}}ToR1,2;\\ Agg4,5\end{tabular} & \begin{tabular}[c]{@{}c@{}}ToR2,3;\\ Agg2,3\end{tabular} & Cores          \\ \cline{2-8}  & \textbf{ClickINC}  & ToR5  & \begin{tabular}[c]{@{}c@{}}ToR0,1;\\ ToR5\end{tabular}   & \begin{tabular}[c]{@{}c@{}}Agg4,5;\\ ToR5\end{tabular}   & \begin{tabular}[c]{@{}c@{}}ToR2;\\ Agg0,1\end{tabular}   & \begin{tabular}[c]{@{}c@{}}ToR2,3;\\ Agg2,3\end{tabular} & Cores          \\ \hline
\multirow{2}{*}{\textbf{Resource}} & \textbf{P4}$_{16}$ & 1 & 2 & 2.25 & 2 & 2 & 4\\ 
\cline{2-8} & \textbf{ClickINC} & 1 & 1.71 & 1.5 & 3 & 2 & 4\\ \hline
\multirow{2}{*}{\textbf{Comm.}} & \textbf{P4}$_{16}$ & 0 & 0.75 & 0.14 & 0.63 & 0.14 & 0\\ 
\cline{2-8} & \textbf{ClickINC} & 0 & 0.33 & 0.16 & 0 & 0.14 & 0\\ \hline
\end{tabular}
\end{table}

\noindent{$\bullet$ \emph{Multi-user Program Placement and Synthesis.}} 
With the three individual programs ready, we further let two students place multiple
instances of the programs into the network, one with ClickINC and another manually. The topology is in Fig.~\ref{fig:emulation},
and all devices are assumed to be Tofino switches.
There are six INC program instances: (1) KVS0, processing traffic 
\{pod0(a), pod1(a)\} $\to$ \{pod2(b)\},
(2) DQAcc0, 
\{pod0(a), pod0(b)\}$\to$\{pod2(b)\},
(3) MLAgg0, 
\{pod0(b), pod1(b)\}$\to$\{pod2(b)\},
(4) DQAcc1, 
\{pod0(b), pod1(a)\}$\to$\{pod2(b)\},
(5) MLAgg1, 
\{pod1(a), pod1(b)\}$\to$\{pod2(b)\},
and (6) KVS1, 
\{pod0(b), pod1(b)\}$\to$\{pod2(b)\}.
Table~\ref{tab:cross} shows the final placement results, including the time
consumption and trials, and the placed devices, normalized resource
consumption, and communication overhead.

In the beginning, manually placing a program instance on multiple devices is trivial,  e.g., KVS0 on ToR5, because all devices have abundant resources and the program does not need partition. But the placement process gradually slows down as the resource usage among devices becomes unbalanced, and the placement
needs to jointly consider partition legality, resources availability,
communication overhead, and load balancing.  For example, it takes more than
one and four hours to place DQAcc0 and MLAgg0, respectively.

In contrast, ClickINC automatically finds the optimal
placement plan, and synthesizes the programs. The process is fast ($<10s$ for six instances),
and error-free.

\subsection{Effectiveness of Placement Algorithm}
\label{sec:compiler}

\begin{table}[tb]
\centering
	\caption{Placement Plan from DP and SMT algorithms}
	\label{tab:alloc}
\vspace{-0.2cm}
	\setlength{\tabcolsep}{3pt}
\footnotesize

\begin{tabular}{|c|c|cc|cc|cc|}
\hline
\multirow{2}{*}{\begin{tabular}[c]{@{}c@{}}\textbf{INC}\\ \textbf{program}\end{tabular}} & \multirow{2}{*}{\begin{tabular}[c]{@{}c@{}}\textbf{depen-}\\ \textbf{dency}\end{tabular}} & \multicolumn{2}{c|}{\textbf{stages}}     & \multicolumn{2}{c|}{\textbf{instructions}} & \multicolumn{2}{c|}{\textbf{time (s)}}       \\ \cline{3-8} 
&  & \multicolumn{1}{c|}{SMT} & DP & \multicolumn{1}{c|}{SMT}  & DP & \multicolumn{1}{c|}{SMT} & DP \\ \hline
\textbf{KVS} & 6 & \multicolumn{1}{c|}{8}    & 8 & \multicolumn{1}{c|}{42} & 42 & \multicolumn{1}{c|}{961}    &  1.306 \\ 
\textbf{MLAgg} & 14 & \multicolumn{1}{c|}{[8,6]}    & [6,8] & \multicolumn{1}{c|}{[14,11]} & [10,15] & \multicolumn{1}{c|}{559} & 0.754 \\
\textbf{DQAcc} & 6 & \multicolumn{1}{c|}{[8,8,1]} & [6,8,3] & \multicolumn{1}{c|}{[39,21,1]} & [35,16,10] & \multicolumn{1}{c|}{160} & 0.081\\ \hline
\end{tabular}
\begin{tablenotes}
        \footnotesize
		\item[1] `$[x, y, ...]$' in the stage column means that the devices in the chain use $x, y, ...$ stages, respectively; `$[x, y, ...]$' in the instructions column means that the devices in the chain are assigned $x, y, ...$ instructions, respectively.
\end{tablenotes}
\vspace{-0.4cm}
\end{table}
\parab{Optimality.} 
We compare the result of ClickINC's DP-based allocation algorithm with the
Z3~\cite{moura2008z3} SMT-based one that is used in existing
solutions~\cite{gao2020lyra}.  As the SMT solver is unable to handle a
multi-path topology in an acceptable time, we use a simple chain with four Tofino switches, each switch with 8 pipeline stages. We place the three programs
(\S\ref{ssec:dev-workload}) and measure the algorithm execution time and 
resource usage in the placement plan.  We set the same optimization goal as
Eq.~\ref{eq:op_target} for both algorithms.  The result is in
Table~\ref{tab:alloc}.

The DP algorithm has a similar effect as the Z3 one in terms of
resource consumption and the number of involved devices. But DP algorithm runs nearly one thousand times faster, thanks to the pruning technique.

Usually, a longer instruction dependency with fewer instructions indicates a
smaller enumeration space and thus a lower processing time. This explains why MLAgg has a much shorter processing time than KVS. On the other hand, KVS has many independent stateful operations (for realizing cache) per dependency level, which degrades the pre-pruning effect, and thus consumes more compiling time than DQAcc, even though it has fewer instructions than DQAcc. 

In addition, we also test the SMT algorithm without the optimization goal.  As
a result, it saves about half of the searching time as the algorithm only searches
for a feasible solution; but it incurs larger communication overhead as the
program is partitioned across all devices.

\begin{figure}[tb]
    \subfigure[DP: w/o-Block denotes no block construction]{
        \label{subfig:dp-w/o}
		\includegraphics[width=0.16\textwidth]{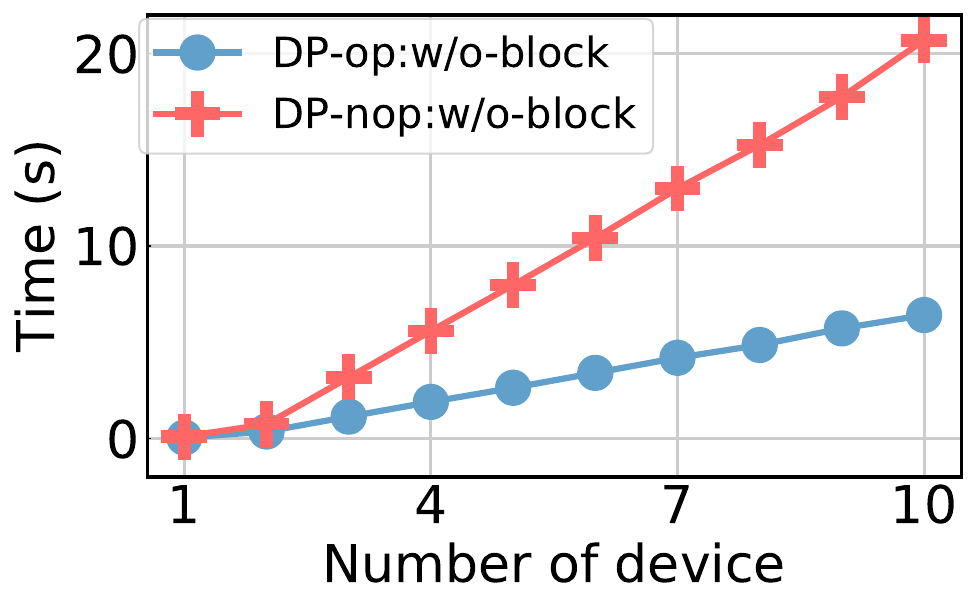} \vspace{-0.1cm}
		\label{fig:ctime_dpb_divice} }
	\hfil
	\subfigure[DP: with block construction (nop: no pruning)]{	
	    \label{subfig:dp-block}
		\includegraphics[width=0.16\textwidth]{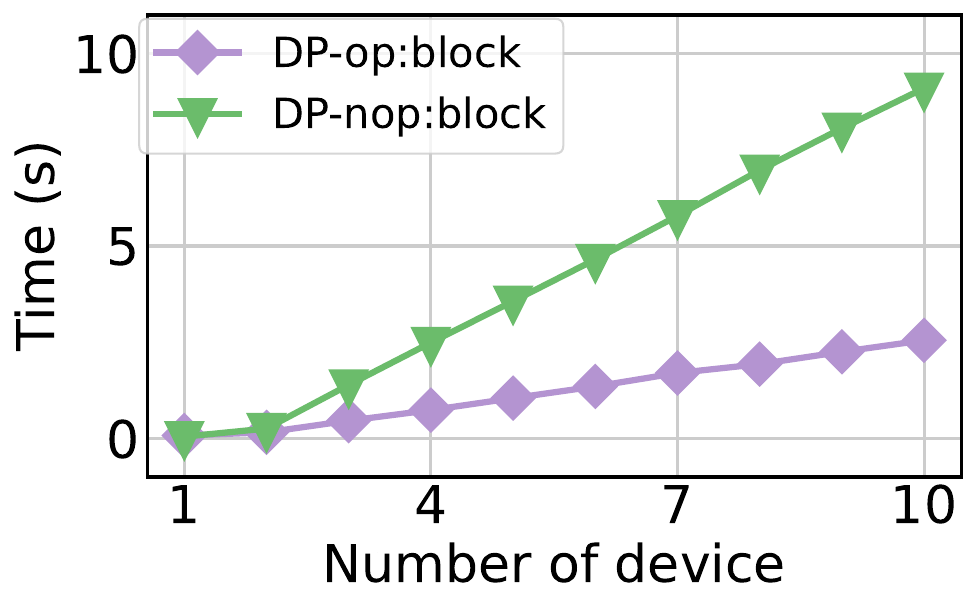} \vspace{-0.1cm}
		\label{fig:ctime_dpo_divice} }
    \hfil
	\subfigure[SMT]{
	    \label{subfig:smt}
		\includegraphics[width=0.10\textwidth]{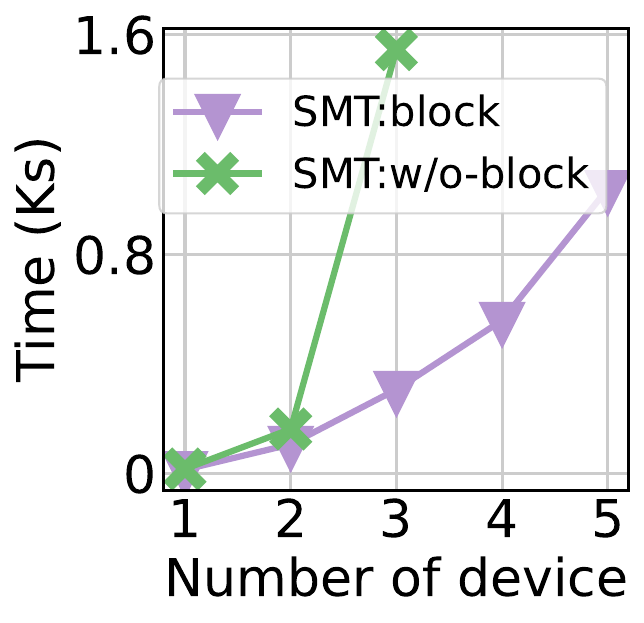} \vspace{-0.1cm}
		\label{fig:ctime_smt_device} }
	\vspace{-0.4cm}
	\caption{Compiling time on the number of devices. }	
	\label{fig:compile_time}
	\centering
	\vspace{-0.4cm}
\end{figure}

\parab{Impact of Block Construction and Pruning.}  We compile MLAgg with different 
settings of enabling/disabling block construction and pruning and measure the compilation time.
Fig.~\ref{subfig:dp-w/o} and Fig.~\ref{subfig:dp-block} show the results. The two approaches
can reduce the DP algorithm execution time by more than 50\% separately, and by more than 80\% together.
Fig.~\ref{subfig:smt} further shows that the DP algorithm has a linear processing time with the number of devices while the SMT solver has an exponential complexity.

\begin{table}[t]
\centering
\caption{Placement results with adaptive weights} \label{tab:weight_adjust}
\vspace{-0.2cm}
	\setlength{\tabcolsep}{1pt}
\scriptsize
\begin{tabular}{|c|ccccccc|}
\hline
\multirow{2}{*}{}                                                   & \multicolumn{7}{c|}{\textbf{Devices/(instructions)}} \\ \cline{2-8} & \multicolumn{1}{c|}{\textbf{MLAgg0}} & \multicolumn{1}{c|}{\textbf{KVS0}} & \multicolumn{1}{c|}{\textbf{DQAcc0}} & \multicolumn{1}{c|}{\textbf{MLAgg1}}              & \multicolumn{1}{c|}{\textbf{KVS1}} & \multicolumn{1}{c|}{\textbf{DQAcc1}}  & \textbf{MLAgg2} \\ \hline
\textbf{\begin{tabular}[c]{@{}c@{}}Fixed\\ weight\end{tabular}}    & \multicolumn{1}{c|}{\begin{tabular}[c]{@{}c@{}}ToR0:\\ ToR5\\ /(6:60)\end{tabular}} & \multicolumn{1}{c|}{\begin{tabular}[c]{@{}c@{}}ToR0:\\ ToR5\\ /(34:47)\end{tabular}} & \multicolumn{1}{c|}{\begin{tabular}[c]{@{}c@{}}ToR0:\\ {[}Agg0,1{]}\\ /(3:25)\end{tabular}} & \multicolumn{1}{c|}{\begin{tabular}[c]{@{}c@{}}{[}Agg0,1{]}:\\ {[}Agg4,5{]}\\ /(7:59)\end{tabular}} & \multicolumn{1}{c|}{\begin{tabular}[c]{@{}c@{}}{[}Cores{]}:\\ {[}Agg4,5{]}\\ /(27:54)\end{tabular}}  & \multicolumn{1}{c|}{\begin{tabular}[c]{@{}c@{}}{[}Cores{]}\\ /(28)\end{tabular}}  & / \\ \hline
\textbf{\begin{tabular}[c]{@{}c@{}}Adapt.\\ weight\end{tabular}} & \multicolumn{1}{c|}{\begin{tabular}[c]{@{}c@{}}{[}Cores{]}\\ /(66)\end{tabular}}    & \multicolumn{1}{c|}{\begin{tabular}[c]{@{}c@{}}ToR0\\ /(81)\end{tabular}}            & \multicolumn{1}{c|}{\begin{tabular}[c]{@{}c@{}}ToR5\\ /(28)\end{tabular}}                   & \multicolumn{1}{c|}{\begin{tabular}[c]{@{}c@{}}{[}ToR0:5{]}\\ /(33:33)\end{tabular}}                & \multicolumn{1}{c|}{\begin{tabular}[c]{@{}c@{}}ToR0:{[}Agg\\ 0,1{]}:ToR5\\ /(13:49:19)\end{tabular}} & \multicolumn{1}{c|}{\begin{tabular}[c]{@{}c@{}}{[}Agg4,5{]}\\ /(28)\end{tabular}} & \begin{tabular}[c]{@{}c@{}}ToR0:{[}Agg0,1{]}\\ :{[}Cores{]}:{[}Agg4,\\ 5{]}/(10:4:20:32)\end{tabular} \\ \hline
\end{tabular}
\begin{tablenotes}
        \footnotesize
        \item[1] `[$\cdot$,$\cdot]$' indicates that instructions are duplicated on devices;
		\item[2] `:' indicates that instructions are partitioned on devices;
        \item[3] `/' represents INC plugin cannot be placed on any device.
        \end{tablenotes}
\vspace{-0.4cm}
\end{table}

\parab{Impact of Adaptive Weights.} We place six instances of the three programs 
MLAgg, KVS, and DQAcc on the path from pod0(a) to pod2(b) in
Fig.~\ref{fig:emulation}. The six instances are in the order of the second row
in Table~\ref{tab:weight_adjust}. We turn on and off the Adaptive Weight (AW) to
observe its effects.

In the beginning, all devices run only the base program with spare resources, and thus $\omega_r$ in AW is near zero, making KVS0
be placed on the four \textit{Core} switches due to the dominance of
$\omega_p$; but for the fixed weight (FW), it is divided on the \textit{ToR0} and \textit{ToR5}
switches to balance both communication overhead and resource consumption. 

As the placement proceeds, the remaining resources decrease, the $\omega_r$ in AW increases, and the resource consumption begins to dominate the placement.
MLAgg1 could have been fully placed on the \textit{Core} switches
but it is divided on \textit{ToR0} and \textit{ToR5}. In addition to the lower communication overhead, AW also has the advantage that the remaining resources are more concentrated on several devices than FW does, so it is more likely to hold a complete
INC program in one device in the future. This explains why MLAgg3 can be
deployed in the AW experiment but not in the FW one.

\subsection{Incremental Program Synthesis}
\label{sec:emulation}

\begin{table}[tb]
\centering
\caption{The impact of incremental deployment}\label{tab:emulation}
\vspace{-0.2cm}
	\setlength{\tabcolsep}{1pt}
	\footnotesize
\begin{tabular}{|c|ccc|ccc|}
\hline
\multirow{2}{*}{\textbf{Step}} & \multicolumn{3}{c|}{\textbf{Incremental deployment}} 
&\multicolumn{3}{c|}{\textbf{Monolithic deployment}} \\ \cline{2-7} 
&\multicolumn{1}{c|}{\begin{tabular}[c]{@{}c@{}}\textbf{Affected}\\\textbf{Devices}\end{tabular}} 
&\multicolumn{1}{c|}{\begin{tabular}[c]{@{}c@{}}\textbf{Affected}\\ \textbf{INC}\end{tabular}} 
&\begin{tabular}[c]{@{}c@{}}\textbf{Affected}\\ \textbf{traffic}\end{tabular} 
&\multicolumn{1}{c|}{\begin{tabular}[c]{@{}c@{}}\textbf{Affected}\\\textbf{Devices}\end{tabular}} 
&\multicolumn{1}{c|}{\begin{tabular}[c]{@{}c@{}}\textbf{Affected}\\ \textbf{INC}\end{tabular}} 
&\begin{tabular}[c]{@{}c@{}}\textbf{Affected}\\ \textbf{traffic}\end{tabular} \\ \hline
\textbf{+KVS} & \multicolumn{1}{c|}{\begin{tabular}[c]{@{}c@{}}2\end{tabular}} & \multicolumn{1}{c|}{0} & 3 pods & \multicolumn{1}{c|}{\begin{tabular}[c]{@{}c@{}}2\end{tabular}}  & \multicolumn{1}{c|}{0} & 3 pods\\ \hline
\textbf{+DQAcc} & \multicolumn{1}{c|}{\begin{tabular}[c]{@{}c@{}}2\end{tabular}} & \multicolumn{1}{c|}{0} & 1 pod & \multicolumn{1}{c|}{\begin{tabular}[c]{@{}c@{}}2\end{tabular}} & \multicolumn{1}{c|}{0} & 1 pod \\ \hline
\textbf{+MLAgg1} & \multicolumn{1}{c|}{\begin{tabular}[c]{@{}c@{}}4\end{tabular}} & \multicolumn{1}{c|}{1} & 1 pod & \multicolumn{1}{c|}{\begin{tabular}[c]{@{}c@{}}8\end{tabular}} & \multicolumn{1}{c|}{2} & 3 pods\\ \hline
\textbf{+MLAgg2} & \multicolumn{1}{c|}{\begin{tabular}[c]{@{}c@{}}2\end{tabular}} & \multicolumn{1}{c|}{1} & 1 pod & \multicolumn{1}{c|}{\begin{tabular}[c]{@{}c@{}}4\end{tabular}} & \multicolumn{1}{c|}{3} & 3 pods\\ \hline
\textbf{-MLAgg1} & \multicolumn{1}{c|}{\begin{tabular}[c]{@{}c@{}}4\end{tabular}} & \multicolumn{1}{c|}{1} & 1 pod & \multicolumn{1}{c|}{\begin{tabular}[c]{@{}c@{}}8\end{tabular}} & \multicolumn{1}{c|}{4} & 3 pods\\ \hline
\end{tabular}
\begin{tablenotes}
        \footnotesize
        \item[1] `+' or `-' mean to merge or remove an INC program.
\end{tablenotes}
\vspace{-0.4cm}
\end{table}

We configure the INC programs to make them resource intensive -- KVS with a
cache size of 100,000, MLAgg1 with 16-dimension floating-point parameters, and MLAgg1
with 16-dimensional integer parameters. 

KVS and MLAgg2 serve applications from
\texttt{pod0} (client) to \texttt{pod2(a)} (server) while DQAcc and MLAgg1 serve applications from \texttt{pod1} to \texttt{pod2(b)}. We assume there is always the background traffic from \texttt{pod0} and \texttt{pod1} to \texttt{pod2}.

We place KVS, DQAcc, MLAgg1, and MLAgg2 one by one. ClickINC performs incremental deployment (named ID), and we compare it with monolithic deployment (named MD). MD synthesizes and recompiles old and new programs each time. Table~\ref{tab:emulation} shows the placement results.

In the beginning, ID and MD behave in the same way.  KVS is placed on
\textit{Agg4,5} which have a bypassed FPGA to help host a huge cache. As
\textit{Agg4,5} are sitting on the path of traffic from \{\texttt{pod0},
\texttt{pod1}\} $\to$ \texttt{pod2}, all traffic will be interrupted during program loading on \textit{Agg4,5}. DQAcc is placed on \textit{Agg2,3}, and thus only affects traffic of \texttt{pod1} but not KVS in \texttt{pod0}. 

When MLAgg1 is deployed, ID and MD start to behave differently.  ID chooses \textit{ToR2,3} with the FPGA NIC (for floating-point calculation) and only affects traffic of \texttt{pod1} including DQAcc program; MD decomposes the synthesized program of MLAgg1 and the old DQAcc (both from \texttt{pod1} to \texttt{pod2}), which leads to instruction removal from \textit{Agg2,3} and
replacement on \textit{FNIC1,2} and \textit{ToR2,3,5} (because using
\textit{ToR5} and \textit{ToR2,3} is more resource-efficient than using
\textit{Agg2,3} and \textit{ToR2,3}), affecting all traffic and INC programs.
To place MLAgg2, ID only changes device of \textit{Agg0,1}, and affects only the traffic of \texttt{pod0} and KVS; MD needs to synthesize KVS and MLAgg2, and changes \textit{Agg0,1,4,5}, thus affecting all traffic. In summary, incremental program synthesis has a much smaller impact on traffic than that of monolithic deployment which is more likely to incur global traffic interruption.

\section{Discussion}
This section discusses ClickINC's scope and limitations.

\parab{Program isolation.}
For different INC programs on the same device, ClickINC already achieves the logical isolation and a part of security isolation, but there is still a lack of performance isolation. The logical isolation ensures the functions and resources of different INC programs on the data plane are independent and non-overlapped, which also promises a kind of security isolation, i.e., any wrong INC program will not affect others. However, ClickINC cannot defend the INC program disguised by the malicious code, which may make the system attacked. For performance isolation, ClickINC cannot ensure that the performance of different program will not interfere with each other, including throughput and latency. Therefore, these works will be the future work.   

\parab{Parameter setting.}
Towards a user-friendly programming environment, ClickINC adopts a high-level abstraction of network devices, making device hardware, resource, and topology transparent to users. However, without such knowledge, some users may be chaos in setting parameters for program especially for resource-related parameters. In this paper, ClickINC currently provides a primary parameter automatical-setting model for programs derived from the provided templates by a pre-learned empirical estimation function but cannot set parameters for user-written programs, as illustrated in Appendix~\ref{appendix:configure_template}. Next, ClickINC will design a more general model to set parameters for user-written programs according to user's performance metrics and the remained network resources.

\parab{Target users.}
ClickINC makes INC easy-to-use by application developers, isolating the roles of network operator and application developer. Although in this paper, the ClickINC framework is proposed mainly for application developers to eliminate the burden of using INC, but it is also a good programming tool for network operators. Next, ClickINC will focus on addressing developing difficulties of network operation, and integrates the programming interfaces.

\parab{Program placement.}
Although ClickINC supports multi-path program placement, it assumes the topology is fat-tree or spine-leaf, and the devices in the same EC are completely same in device type and resources, so that the topology can be simplified. 
Next, we will improve the placement algorithms on the foundation presented in this paper to support the any multi-path topology, and relax the assumption of devices. 

\parab{Supported architectures.}
Currently, ClickINC only uses FPGA as a pipeline-based device with more features compared to switches. More potential can be explored. Programmable chips with different architectures (e.g., Silicon One~\cite{silicon_one}, Spectrum~\cite{spectrum}, and Trio~\cite{yang2022using}) and target DSL (e.g., DOCA~\cite{doca}, Microcode~\cite{yang2022using}) can also be modeled and supported.
\section{Related Work}

\parab{INC Applications.} Recent INC acceleration solutions only provide a monolithic program that couples the
application functions (e.g., key-value store, application data aggregation),
the network functions (e.g., reliability, packetization), and the programming
abstraction and runtime environment of a specific platform.  
$\S$\ref{ssec:problem-statement} lists the examples of key-value store~\cite{netcache2017},
synchronous aggregation~\cite{lao2021atp,265065}, and database query~\cite{tirmazi2020cheetah,lerner2019case}. Besides, ASK~\cite{he2023generic} proposes a solution for asynchronous key-value stream aggregation.

\parab{INC Frameworks on a Single Platform.} A class of works aim to
improve the INC program development on a single platform. 
Click~\cite{monsanto2013composing} supports modular policy configuration on the control plane for traditional routers. $\mu$P4~\cite{soni2020composing} allows modular programming in data plane on PISA switches by composing reusable
libraries.  P4all~\cite{hoganmodular2022} advances modular programming by
introducing elastic parameters to be configured by the compiler based on an objective function. NetRPC~\cite{netrpc} proposes INC-enabled RPC system for simplifying INC adoption; it pre-defines several operation primitives on the switch and supports limited use cases.  These three works target on a single device.
Flightplan~\cite{sultana2021flightplan} supports the partition and distribution of a \emph{single P4 program} on heterogeneous
devices. Its program needs to be manually partitioned based on empirical decisions. 

\parab{INC Frameworks on Multiple Platforms.} The existing cross-platform frameworks target different scenarios or users, and provide different abstractions.
Lyra~\cite{gao2020lyra} is a unified language for heterogeneous devices to hide hardware differences. It helps ``network
operators'' but not as much for the application developers: (1) 
Lyra applies to programmable switches with pipeline-based ASICs;
(2) Lyra's programming abstraction couples the network operations, and multi-tenant application offloading, leading to a cumbersome development; (3) Lyra only searches for a feasible solution based on SMT solver that is time-inefficient for a large-scale network with many devices.

\section{Conclusion}
\label{sec:conclusion}
ClickINC is the first work of its kind that truly decouples the INC application development and deployment process from the network and device details. The heavy lifting of ClickINC presents a simple programming interface to users and allows users to focus on the application logic only. The clear split of duties ensures agile development and quality deployment for new applications, helping accelerate the adoption of the INC paradigm and enjoy the benefits it offers. Extensive experiments show ClickINC is superior to existing tools. 


\clearpage
\bibliographystyle{ACM-Reference-Format}
\bibliography{ClickINC}

\appendix
{\centering\section*{Appendix}}

\section{ClickINC Language}
\label{appendix:pyinc}
This section explains the details of ClickINC language.
\subsection{Templates}
\label{appendix:template}

\noindent{\textbf{KVS.}}
For KVS, it mainly contains a cache with exact-match to maintain key-value results, a counter for counting hits of each entry in cache, and a heavy hitter (count-min sketch plus bloom filter) for recording missed queries. The configurable options are: (1) the cache can be realized as using stateful array or stateless matching table, which is decided by application requirements (e.g., the value dimension and size); (2) the cache depth (same as counter), the number of counter-min sketch and bloom filters to compose a heavy hitter; and (3) the type of hash functions, and the triggering threshold of heavy hitter. All of these configurations are decided by compiler, according to profile provided by users or as default.

\noindent{\textbf{MLAgg.}}
MLAgg performs aggregation for distributed ML parameters from different works, and the structure contains multiple arrays working as \textit{aggregator} to preserve aggregated parameters, \textit{bitmap} to track workers that have been aggregated, a \textit{counter} to record the number of aggregated parameters, and \textit{sequence} to record the ID of  parameter for each stage ML job. The configurable options are: (1) whether convert the floating-point parameter to an integer one, which is decided by the accepted precision value in profile; (2) whether filters sparse parts of parameters according to ``is\_sparse'' in profile; (3) the depth of aggregator (same for bitmap, counter, sequence). The code of MLAgg is described in Fig.~\ref{fig:template-mlagg}.

\noindent{\textbf{DQAcc.}}
DQAcc provides the SQL DISTINCT in-network acceleration, mainly relying on a hash-based rolling cache, i.e., multiple arrays to store historical value, and a recorder to roll each value to be replaced by new value (to approximate LRU). The configurable options are: (1) the depth and width of the cache; (2) the type of hash algorithms.

\begin{figure}
\begin{lstlisting}[language=python, basicstyle=\ttfamily\footnotesize]
from Funclib import *
cache=Table(type="exact",keys=hdr.key,vals=hdr.val)
cms=Sketch(type="count-min",keys=hdr.key)
bf = Sketch(type="bloom-filter", keys=hdr.key)
if hdr.op == REQUEST:
  vals = get(cache, hdr.key)
  if vals != None:
    back(hdr={op:REPLY, vals:vals})
  else:
    count(cms, hdr.key, 1)
    if get(cms, hdr.key) > TH:
      write(bf, hdr.key, 1)
      copyto("CPU", hdr.key)
elif hdr.op == UPDATE:
  write(cache, hdr.key, hdr.vals)
  drop
\end{lstlisting}
	\caption{Example template of key-value store.}
\end{figure}
\begin{table}[t]
\centering
\caption{ClickINC supported function list} \label{tab:function}
\resizebox{0.48\textwidth}{!}{%
\begin{tabular}{c|c}
\hline
\textbf{kind} & \textbf{function and operations} \\ \hline
\textbf{Python built-in}& \begin{tabular}[c]{@{}l@{}}min(), max(), sum(), abs(), pow(), round(), range(), len(),\\ dict(), list(), +, -, *, /, \%, //, <, >, ==, !=, $\leq$, $\ge$, =, $\&$, $|$,\\ \^{}, $\sim$, <{}<, >{}>, and, or, not, in, not in. \end{tabular} \\ \hline
\textbf{ClickINC extension}& \begin{tabular}[c]{@{}l@{}}ceil(), floor(), sqrt(), randint(), slice() \end{tabular} \\ \hline
\end{tabular}
}
\end{table}

\subsection{Profiles}
\label{appendix:profiles}
A profile includes the following fields, and Fig.~\ref{fig:configuration-kvs} shows an example profile for KVS template.

\noindent{\textbf{App.}}
App is the dedicated ID corresponding to each template, i.e., ``KVS'', ``MLAgg'', ``DQAcc''.
\noindent{\textbf{Performance.}}
As also dedicated to templates, performance provides an optional interface for users to specify their performance requirements, as illustrated in Table~\ref{tab:profile}. For KVS, it supports an objective function ``max\_hit\_acc'' to allow users to specify the performance preference over cache hit ratio and counting accuracy of heavy hitter, and also it allows for specifying demand on cache size; for MLAgg, the precision of parameter aggregation (decides whether the conversion from floating-point number to integer is feasible), the number of aggregators, and whether the parameter is sparse can also be specified. 

\noindent{\textbf{Traffic distribution.}}
For both template and user-written program, traffic distribution is required to provide the upper limit of the querying frequency (packet per second) of each client, in the format of $\{$``client ID":``*pps", $\cdots \}$.

\noindent{\textbf{Packet format.}}
The packet format also should be provided in the profile, where the traditional network packet header below UDP protocol can be abbreviated as a name, e.g., ``ethernet/ipv4/udp'', but the application protocol header should be described in detail, e.g., ``key'':``bit\_128''.

\begin{figure}[tb]
\begin{lstlisting}[language=python, basicstyle=\ttfamily\footnotesize]
agg_seq_t = Array(row=1,size=Num_agg,w=width(hdr.seq))
bitmap_t = Array(row=1,size=Num_agg,  w=Num_worker)
agg_data_t = Array(row=len(hdr.vals), size=Num_agg, w=width(hdr.vals))
valid_t = Array(row=1, size=Num_agg, w=1)
hash_f = Hash(key=hdr.seq, ceil=Num_agg)
index = read(hash_f, hdr.seq)
seq = read(agg_seq_t, index)
isvalid = read(valid_t, index)
delete = 0, overflow = 0
if hdr.op == ACK:
  if isvalid and seq == hdr.seq:
    delete = 1
    forward(hdr)
else:
  if !isvalid and !hdr.overflow:
    write(agg_seq_t, index, hdr.seq)
    write(bitmap_t, index, hdr.bitmap)
    write(agg_data_t, index, hdr.data)
    write(valid_t, index, 1)
  elif seq == hdr.seq:
    bitmap = bitmap_t.read(index)
    if bitmap & hdr.bitmap == 0:
      vals = agg_data_t.read(key=index)
      new_vals = vals + hdr.data
      for i in range(vals):
        if new_vals[i] < 0: 
          overflow = 1
          delete = 1
      new_bit = bitmap|hdr.bitmap
      if overflow:
        mirror(hdr={'bitmap':bitmap, 'data':vals,'overflow':1})
        forward(hdr)
      elif new_bit = 2^Num_worker-1:
        back(hdr={'op':REQ,'bitmap':new_bit,'data':new_vals})
        delete =1
      else:
        write(agg_data_t,index,new_vals)
        write(bitmap_t,index,new_bit)
        drop()
    else:
      forward(hdr)
if delete:
  del(agg_seq_t, index)
  del(bitmap_t, index)
  del(agg_data_t, index)
  del(valid_t, index)
\end{lstlisting}
	\caption{Example template of MLAgg}
	\label{fig:template-mlagg}
\end{figure}

\begin{table}[t]
\centering
\caption{Basic functional unit list for IR} \label{tab:atomic}
\resizebox{0.48\textwidth}{!}{
\begin{tabular}{c|c|c}
\hline
\textbf{Operation} & \textbf{Explanation}& \textbf{Supported devices}\\ \hline
$\bm{\_ram}$& 1D-memory accessed by index & All \\ \hline
$\bm{\_cam}$& content-addressable memory &  FPGA, NFP \\ \hline
$\bm{\_tcam}$& ternary-content-addressable memory &  FPGA, NFP \\ \hline
$\bm{\_emt}$ & stateless exact-match table  & All \\ \hline
$\bm{\_semt}$ & stateful exact-match table & FPGA, NFP \\ \hline
$\bm{\_tmt}$ & stateless ternary-match table & All \\ \hline
$\bm{\_stmt}$ & stateful ternary-match table & FPGA, NFP \\ \hline
$\bm{\_lpmt}$ & longest-prefix-match table& All \\ \hline
$\bm{\_randint}$ & achieve an integer random value &  All \\ \hline
$\bm{\_crc}$ &  CRC series hashing calculation &  All \\ \hline
$\bm{\_identity}$ & identity-map hashing & Tofino series \\ \hline
$\bm{\_aes}$ & AES series en(de)-crypto calculation & FPGA \\ \hline
$\bm{\_ecs}$ & ECS series en(de)-crypto calculation  & NFP \\ \hline
$\bm{\_checksum}$ & csum16 calculation  & All \\ \hline
$\bm{\_mirror}$ & mirroring a packet  & All \\ \hline
$\bm{\_multicast}$ & multicasting packet  & Tofino series, TD4 \\ \hline
\end{tabular}
}
\end{table}
\begin{table}[t]
\centering
\caption{Device capability abstraction} \label{tab:prmitive_block}
\resizebox{0.45\textwidth}{!}{%
\begin{tabular}{c|c}
\hline
\multicolumn{2}{c}{\textbf{Classify of instructions}} \\ \hline
$\bm{\mathcal{B}}_\textbf{\emph{IN}}$& \begin{tabular}[c]{@{}l@{}}Integer addition, subtraction; bit, logical operation; slicing.\end{tabular} \\ \hline
$\bm{\mathcal{B}}_\textbf{\emph{IC}}$ & Integer multiplication, division, modulus. \\ \hline
$\bm{\mathcal{B}}_\textbf{\emph{CA}}$ & \begin{tabular}[c]{@{}l@{}} Floating-point arithmetic and other complex arithmetic.\end{tabular} \\ \hline
$\bm{\mathcal{B}}_\textbf{\emph{SO}}$ & Stateful array operations. \\ \hline
$\bm{\mathcal{B}}_\textbf{\emph{EM}}$ & Exact-match table. \\ \hline
$\bm{\mathcal{B}}_\textbf{\emph{SEM}}$ & Stateful exact-match table. \\ \hline
$\bm{\mathcal{B}}_\textbf{\emph{NEM}}$ & (Ternary, LPM)-match table. \\ \hline
$\bm{\mathcal{B}}_\textbf{\emph{SNEM}}$ & Stateful (ternary, LPM)-matching table. \\ \hline
$\bm{\mathcal{B}}_\textbf{\emph{DM}}$ & Direct-match table. \\ \hline
$\bm{\mathcal{B}}_\textbf{\emph{BPF}}$ & Drop, send, copyTo. \\ \hline
$\bm{\mathcal{B}}_\textbf{\emph{APF}}$ & Mirror, multicast. \\ \hline
$\bm{\mathcal{B}}_\textbf{\emph{AF}}$ & Hash functions (CRC8, CRC16, ...), checksum. \\ \hline
$\bm{\mathcal{B}}_\textbf{\emph{CF}}$ & (En, De)-crypto. \\ \hline
\end{tabular}
}
\end{table}
\subsection{Configuring a Template}
\label{appendix:configure_template}
The modules and templates usually need to allocate resources on devices, e.g., switch register memory. From the applications' perspective, the resource allocation influences the end users' performance. During the user program development, ClickINC has no idea about the runtime resource requirement; the users may not have the knowledge about how to allocate switch resources and their influences on the performance.

For certain applications, ClickINC can derive the resource requirements
directly from the performance metric; for example, an MLAgg switch memory should
equal to its bandwidth-delay product~\cite{265065}.  There are
applications without the mathematical models to derive the resource
requirements from the performance metric. ClickINC provides a learning-based approach:
it maintains historical records of given parameter $\textbf{x}$ and the performance $\textbf{y}$,
and learns the performance estimation function $\textbf{y}=f(\textbf{x})$ (e.g., $f(\cdot)$ could be a neural network and the learning method can be SGD). 
When a user submits a configuration with application performance metric, ClickINC
searches for the parameter $\textbf{x}$ with minimum resource allocation that satisfies the performance requirements $\textbf{y}$.
\begin{equation}
\label{eq:parameter}
	\min_{\textbf{y}=f(\textbf{x})} g(\textbf{x}, \textbf{y}), \ \ s.t.
	\bigwedge_{i\in [1,k]} h_i(\textbf{x},\textbf{y})\leq 0,
\end{equation}
where $k$ is the number of performance metric constraints, $g(\textbf{x}, \textbf{y})$ means
the resource consumption, and $h_i(\textbf{x}, \textbf{y})$ means the $i$-th dimension of performance metric is satisfied. The optimization problem can be solved using gradient descent.

\subsection{Intermediate Representation}
\label{appendix:IR}
The syntax of IR is described in Fig.~\ref{fig:IR-instruciton-set}, where the \emph{instance} and \emph{action} are the basic functional units listed in Table~\ref{tab:atomic}. These units can be further utilized by network operator to write a new \emph{object} and \emph{primitive} module in Fig.~\ref{fig:language-grammar} to update the library, which are provided to developers for programming with frontend language. Although the devices in the same architecture share some common constraints, they exhibit their exclusive features as well due to particular resource requirements, e.g., Trident4 supports the en(de)-cryption while Tofino does not. Therefore, to map instructions to the correct devices, we abstract the device capability in form of atomic operations (e.g., CRC calculation) that are listed in Table~\ref{tab:atomic}, and classify them into different types as shown in Table~\ref{tab:prmitive_block}, which helps to rule out impossible mappings during allocation. 
\begin{figure}[t]
\begin{lstlisting}[ 
	% float=t,
	basicstyle=\ttfamily\footnotesize,
	% label=Code5,
	backgroundcolor=\color{white},
	keywordstyle=\rmfamily\bfseries, 
	morekeywords={instance, action}, 
	mathescape=true, 
	% columns=flexible
	]
Prog ::== Declare | Operation
Declare ::== header | parse | data | instance
header ::== h_type string {hBody}
hBody ::== struct {hFields}
hFields ::== type<length> string
type ::==  int | float | bit | bool
length ::== 1,2,...,1024
parse ::== cond? extract(hBody)
data ::== type string
instance ::== emt | semt| tmt | stmt | lpmt | cam | tcam | ram
Operation ::== cond? statement | statement
statement ::== data = operand | operand
operand ::== data calc | instance action
action ::== write | get | drop | mirror | multicast | randint | crc
        | aes | ecs| calc
calc ::== + | - | * | / | % | bit operation | >>const | <<const
condition ::== state | state&&state | state||state
state ::== data compare
compare ::== > | >= | == | <= | <
\end{lstlisting}
	\caption{IR instruction syntax}
	\label{fig:IR-instruciton-set}
\end{figure}

\begin{table}[t]
\centering
\caption{INC profile} \label{tab:profile}
\resizebox{0.48\textwidth}{!}{
\begin{tabular}{c|c|c|c|c|c}
\hline
\textbf{Template} & \textbf{KVS}& \textbf{MLAgg} & \textbf{DISAcc} & \textbf{OPSketch} & \textbf{DDoSAD}\\ \hline
\begin{tabular}[c]{@{}l@{}}$\bm{Perfo}$-\\$\bm{rmance}$\end{tabular} & \begin{tabular}[c]{@{}l@{}}``max\_hit\_acc''\\: [0.7, 0.3], \\``depth''\\:  >= 1000 \end{tabular} & \begin{tabular}[c]{@{}l@{}}``precision\_dec''\\: 3\\ ``is\_sparse'': 0,\\ ``depth'': >= 500 \end{tabular} & \begin{tabular}[c]{@{}l@{}}``c\_depth''\\: >= 1500 \\ ``c\_len''\\: >=8 \end{tabular} & \begin{tabular}[c]{@{}l@{}}``c\_depth''\\: >= 5 \\ ``c\_len''\\: >=800 \end{tabular} & \begin{tabular}[c]{@{}l@{}}``c\_depth''\\: >= 10 \\ ``c\_len''\\: >=2000 \end{tabular} \\ \hline
\end{tabular}
}
\end{table}

\section{Theories on placement}
\subsection{Analysis of Program Partitioning} \label{sec:partition_theory} 
To ensure the correctness of process on program
partitioning and instruction block construction, we provide the following
theory.  First, we define the \emph{partitioning legality} as:
\begin{definition}
Given the partitions of IR program $\mathcal{P}$, $\forall p_1, p_2 \in \mathcal{P}$, there is no bidirectional traffic flow, i.e., $p_1 \nLeftrightarrow p_2$. 
\end{definition}
The partitioning legality ensures that any two partitions can be separately placed on different devices. 

\noindent{\textbf{Program partitioning.}} 
The data in program is two kinds: (1) stateless data which is new for each round program execution and the data change will not affect the next packet, e.g., an intermediate variable; (2) stateful data, which is same for all packets and the data change affects the next packet, e.g., a cache table. To ensure the data consistency and correctness, stateful data cannot be duplicated. Thus, the instructions with operations on the same stateful data (we call them \emph{state-sharing} instructions) cannot be partitioned on different devices, i.e.,:
\begin{lemma}
    $\forall$ two instructions $p_1$ and $p_2$, if they are state-sharing, the partitioning legality is unsatisfied.
\end{lemma}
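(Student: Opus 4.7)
The plan is to prove the lemma by contradiction, leveraging the just-established fact that stateful data cannot be duplicated across devices. I would assume for contradiction that two state-sharing instructions $p_1$ and $p_2$ are placed in distinct partitions residing on different devices $D_1, D_2$, yet the partitioning legality $p_1 \nLeftrightarrow p_2$ still holds. The goal is to exhibit bidirectional traffic flow between $D_1$ and $D_2$ during the processing of a single packet, contradicting the legality condition.

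The key steps would be as follows. First, by the non-duplication property of stateful data that was argued immediately before the lemma, the shared state $s$ accessed by both $p_1$ and $p_2$ must have a unique authoritative location; without loss of generality, $s$ resides on $D_1$. Second, since $p_2$ is by assumption an instruction that operates on $s$, executing $p_2$ on $D_2$ requires reading the current value of $s$ from $D_1$, or writing back an updated value to $D_1$, or both (as in a read-modify-write on a counter). Third, because the IR explicitly forbids control-flow rewinding within a pass (as noted in Section~\ref{ssec:program-ir}), servicing $p_2$'s access to $s$ cannot be folded into a single-directional traversal: the packet or an auxiliary message must travel from $D_2$ toward $D_1$ and a response must travel back so that the partition which does not own $s$ can complete its logic. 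Fourth, this round-trip is, by the definition given immediately above the lemma, a bidirectional traffic flow between the two partitions, i.e., $p_1 \Leftrightarrow p_2$, contradicting legality.

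The main obstacle is making precise what counts as bidirectional traffic flow and ruling out apparent workarounds. The most tempting loophole is piggybacking the whole value of $s$ in the packet header so that $p_2$ operates on a local snapshot at $D_2$; the proof would need to observe that such a scheme in fact duplicates the stateful variable (the header carries a live copy while $D_1$ retains another), which violates the non-duplication premise that feeds the definition of state-sharing and puts us outside the hypothesis of the lemma. A second subtlety is ordering: if $p_1$ precedes $p_2$ in program order but the packet's physical path visits $D_2$ first, then updates produced by a prior packet at $p_1$ on $D_1$ must still be visible at $D_2$ for the next packet, which again forces reverse-direction signalling between the two devices. Handling these corner cases within the sequential, non-rewinding execution model of the IR should close the argument and justify the slogan that state-sharing instructions must always co-reside in a single partition.
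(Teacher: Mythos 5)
Your proposal is correct and follows essentially the same route as the paper's proof: the shared state must live on exactly one device, so whichever of $p_1,p_2$ is placed on the other device forces a round trip to access it, producing bidirectional traffic flow and violating legality (the paper phrases this as an explicit two-case analysis on whether the state sits on the upstream or downstream device, which your WLOG subsumes). Your additional discussion of the header-piggybacking loophole and of path-versus-program ordering goes beyond what the paper writes down, but it strengthens rather than changes the argument.
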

\begin{proof}
    Assume $p_1$, $p_2$ are placed on upstream device and downstream device respectively, if the stateful data is located on upstream device, then after $p_1$ is executed, the traffic flow to downstream device to execute $p_2$ which however needs to return to upstream device for accessing the stateful data, causing bidirectional traffic flow and violates partitioning legality; if the stateful data is located on downstream device, then traffic will flow to downstream device to access stateful data and return upstream device to complete $p_1$, and also violates partitioning legality.
\end{proof}
Thus, we need to group all state-sharing instructions together as an inseparable partition. Following this, we construct a directed graph for IR program as $G$, where we the vertex is inseparable state-sharing instruction partition or each other nornal instruction, and the edge to describe instruction dependency.

As long as two instruction has direct dependency (i.e., the next instruction uses the value generated by the previous instruction), we use an directed edge to connect them from previous instruction to the next one. For example, $p_1 \rightarrow p_2$ indicates the instruction $p_2$ directly depends on $p_1$. Obviously, instructions with direct dependency represents there exists data flow (the left value of $p_1$ flows to one of the right values of $p_2$), i.e., $p_1 \rightarrow p_2$ can infer that $p_1 \Rightarrow p_2$, based on which we have:
\begin{lemma}
\label{lemma_p0}
    The instruction $p_2$ depends on $p_1$ is equaling to $p_1 \Rightarrow p_2$.
\end{lemma}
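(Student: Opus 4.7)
The plan is to prove the lemma as a biconditional, since the statement ``equaling to'' should be read as an iff: direct dependency $p_1 \to p_2$ holds \emph{if and only if} there is a data flow $p_1 \Rightarrow p_2$. The forward direction is essentially the paragraph preceding the lemma, so my proof will just reassemble it cleanly; the reverse direction needs a small additional argument that leverages the SSA form of the IR described in $\S$\ref{ssec:program-ir}.

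For the forward direction ($p_1 \to p_2 \Rightarrow p_1 \Rightarrow p_2$), I would unfold the construction of $G$: by definition, an edge $p_1 \to p_2$ exists exactly when $p_2$'s right-hand side reads a variable that appears as $p_1$'s left-hand side. Hence the value produced by $p_1$ must be consumed by $p_2$, and this value-passing is precisely what $p_1 \Rightarrow p_2$ denotes in the partition model. This direction is essentially a definitional unpacking and should fit in a couple of lines.

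For the reverse direction ($p_1 \Rightarrow p_2 \Rightarrow p_1 \to p_2$), I would argue contrapositively: suppose $p_2$ does not directly depend on $p_1$. By the SSA transformation in the frontend ($\S$\ref{ssec:program-ir}), every variable has a unique writer, so the set of variables read by $p_2$ is exactly the set of edges entering $p_2$ in $G$. If no such edge comes from $p_1$, then no value written by $p_1$ appears among $p_2$'s operands, and so no data need be transmitted from the site of $p_1$ to the site of $p_2$ to execute $p_2$. Thus $p_1 \not\Rightarrow p_2$, completing the contrapositive.

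The main obstacle I expect is making the reverse direction precise: ``data flow'' is introduced informally (``the left value of $p_1$ flows to one of the right values of $p_2$''), and I need to be careful that data flow is not created indirectly, e.g., through a chain $p_1 \to q \to p_2$ or via a shared stateful object. The chain case is avoided because $\Rightarrow$ in the lemma is meant as the \emph{direct} per-edge flow, not its transitive closure; the shared-state case is already absorbed into vertices by the preceding lemma, which collapses all state-sharing instructions into a single inseparable vertex so that they cannot play the role of distinct $p_1$ and $p_2$. I would state these two conventions explicitly at the top of the proof to keep the argument watertight.
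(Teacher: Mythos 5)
Your proposal proves a different statement from the one the paper intends. You read $\Rightarrow$ as the \emph{direct} per-edge data flow and explicitly set aside the chain case, but in this paper $\Rightarrow$ is the transitive (reachability) notion: Lemma~\ref{lemma0} later glosses $\nRightarrow$ as ``a node cannot reach another node on the graph,'' Definition~1 forbids \emph{bidirectional traffic flow} between partitions (which must mean flow possibly routed through intermediate instructions, or the definition would not guarantee separability), and the paper's own proof of Lemma~\ref{lemma_p0} devotes half its length to the indirect case via an intermediate instruction $p_a$. Likewise ``$p_2$ depends on $p_1$'' in the lemma statement covers indirect dependency, not just the edge relation $p_1 \to p_2$. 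The paper's argument is that both ``depends on'' and $\Rightarrow$ are transitive closures of the same direct relation, shown (informally, by a length-two chain that really wants an induction on path length) in both directions. Under your direct-only reading, the downstream use in Lemma~\ref{lemma_p1} breaks: bidirectional reachability between two instructions need not produce any single pair of opposing direct edges, so your version of the lemma cannot convert the violation of partitioning legality into a cycle in the dependency graph.

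What you do prove --- that a direct edge exists iff a value written by $p_1$ is read by $p_2$, with the reverse direction made precise via SSA uniqueness of writers --- is a reasonable and arguably more careful treatment of the \emph{base case} than the paper's ``obviously.'' To close the gap you would need to add the inductive step: any data flow from $p_1$ to $p_2$ factors as a finite chain of direct flows, each of which is a dependency edge by your base case, and conversely any dependency path yields a chain of direct data flows; transitivity of both relations then gives the equivalence of the transitive notions. One caveat on your SSA appeal: the frontend applies SSA only to temporary variables, while stateful objects are excluded; you correctly note that state-sharing instructions are collapsed into a single vertex beforehand, and that observation should be kept, since it is what licenses restricting attention to SSA temporaries in the contrapositive.
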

\begin{proof}
    We first prove that $p_2$ depending on $p_1$ can infer $p_1 \Rightarrow p_2$.
    If $p_1$ has direct dependency with $p_2$, i.e., $p_1 \rightarrow p_2$, obviously it equals to $p_1 \Rightarrow p_2$; if $p_2$ indirectly depends on $p_1$, we assume there exists an instruction $p_a$ that has direct dependency with $p_1$ and $p_2$, i.e., $p_1 \rightarrow p_a$ and $p_a \rightarrow p_2$. Then, we have $p_1 \Rightarrow p_a \Rightarrow p_2$, thus $p_1 \Rightarrow p_2$ and the statement is proved. Last, we should prove that $p_1 \Rightarrow p_2$ can infer that $p_2$ depends on $p_1$. If the left value of $p_1$ flows to $p_2$, obviously $p_1 \rightarrow p_2$; otherwise, we similarly assume an instruction $p_a$, and the left value of $p_1$ flows to $p_a$ and $p_a$'s left value flows to $p_2$, and thus we have $p_1\rightarrow p_a$ and $p_a\rightarrow p_2$, i.e., $p_2$ indirectly depends on $p_1$, the Lemma~\ref{lemma_p0} is proved.
\end{proof}
Then, we have the following lemma:
\begin{lemma}
\label{lemma_p1}
    Directed acyclic IR dependency graph satisfies the partitioning legality.
\end{lemma}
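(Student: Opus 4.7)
The plan is to prove Lemma 3 by contradiction, leveraging Lemma 2 to translate the graph-theoretic property (acyclicity) into the traffic-flow property (partition legality). Specifically, I would assume that the IR dependency graph $G$ is a DAG but that some candidate partitioning $\mathcal{P}$ fails to be legal, meaning there exist $p_1, p_2 \in \mathcal{P}$ with $p_1 \Leftrightarrow p_2$. By the definition of bidirectional flow this unpacks into $p_1 \Rightarrow p_2$ and $p_2 \Rightarrow p_1$ simultaneously.

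Next I would apply Lemma 2 in both directions: $p_1 \Rightarrow p_2$ gives that (some instruction in) $p_2$ depends on (some instruction in) $p_1$, which by construction of $G$ yields a directed path from a vertex of $p_1$ to a vertex of $p_2$; symmetrically, $p_2 \Rightarrow p_1$ yields a directed path from a vertex of $p_2$ back to a vertex of $p_1$. Concatenating these two paths produces a directed cycle in $G$, contradicting the assumption that $G$ is acyclic. Hence no such illegal pair can exist, and the partitioning legality holds.

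The only subtle step is the bookkeeping between instructions and partitions: a partition is a set of instructions (in particular, the inseparable state-sharing groups collapsed into single vertices), and the relation $\Rightarrow$ is stated between partitions while Lemma 2 is stated between instructions. I would therefore need a short bridging remark: $p_1 \Rightarrow p_2$ at the partition level means some instruction $i_2 \in p_2$ consumes data produced (directly or transitively) by some instruction $i_1 \in p_1$, which by Lemma 2 corresponds to a directed path from $i_1$ to $i_2$ in $G$. After collapsing state-sharing instructions into single vertices (as the construction in the paragraph above Lemma 2 prescribes), this path remains a directed path from the vertex representing $p_1$ to the vertex representing $p_2$, and the cycle argument goes through.

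I expect this bridging between the instruction-level dependency graph and the partition-level flow relation to be the only real obstacle; once that correspondence is pinned down, the contradiction with acyclicity is immediate. No case analysis on device placement or on whether data is stateful vs. stateless should be required here, since Lemma 1 already handled the stateful case by forcing state-sharing instructions into a common partition, and Lemma 3 only needs the resulting graph to be a DAG.
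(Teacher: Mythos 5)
Your proposal is correct and follows essentially the same route as the paper: assume a violating pair $p_1 \Rightarrow p_2$ and $p_2 \Rightarrow p_1$, invoke Lemma~\ref{lemma_p0} to convert both flow relations into dependency paths, and derive a cycle contradicting acyclicity. The bridging remark you add about the instruction-level versus partition-level correspondence is a reasonable bit of extra care that the paper's proof elides (it works directly with instructions), but it does not change the argument.
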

\begin{proof}
    Assume that the acyclic dependency graph violates the partitioning legality, i.e., $\exists$ instructions $p_1$, $p_2$, $p_1\Rightarrow p_2$ and $p_2 \Leftarrow p_1$, thus we have $p_2$ depends on $p_1$ and $p_1$ depends on $p_2$ respectively according to Lemma~\ref{lemma_p0}. It means that $p_1$ and $p_2$ are cyclic in dependency, which is impossible for Directed acyclic graph (DAG). Therefore, the assumption is wrong and Lemma~\ref{lemma_p1} is proved.
\end{proof}

According to the above theory, we need to group the cyclic instruction on dependency graph as a hybrid vertex, so that becoming a IR DAG and partition legality can be always satisfied.

\noindent{\textbf{Instruction block.}}
The instruction block construction process should also maintain the partition legality. In detail, given the IR DAG $G=(V,E)$, we define the predecessor set for each vertex $v\in
V$ as $\mathcal{P}(v)=\{x\in V|<x,v>\in E\}$. We apply the Kahn’s
algorithm~\cite{kahn1962topological}, a variant of Topological sorting on $G$
to generate a series of the Kahn’s partitions
$\mathcal{K}=\{K_i\}_{i=1}^{N_K}$, where $V=\bigcup_{i=1}^{N_K} \{v|v\in K_i\}$
and $K_i\cap K_j=\emptyset (i\neq j)$. According to the Kahn’s algorithm, for
$\forall v \in K_i, i\in\{2,3,\cdots,N_K\}$, $\mathcal{P}(v) \subset
\bigcup_{l=1}^{i-1}K_l$ holds. That is, any predecessor vertex of a partition
$K$ must belong to a partition before $K$, which leads to the following lemmas.

\begin{lemma} \label{lemma0}
Given the Kahn’s partitions $\mathcal{K}=\{K_i\}_{i=1}^{N_K}$ for the DAG
	$G(V,E)$, $\forall v_m \in K_i, v_n\in K_j$, if $i>j$, then $v_m
	\nRightarrow v_n$, where $\nRightarrow$ means a node cannot reach another
	node on the graph.
\end{lemma}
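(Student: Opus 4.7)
The plan is to argue by contradiction, leveraging the invariant stated immediately before the lemma: for any vertex $v \in K_i$ with $i \geq 2$, every direct predecessor of $v$ lies in $\bigcup_{l=1}^{i-1} K_l$. Assume toward a contradiction that $v_m \Rightarrow v_n$ for some $v_m \in K_i$, $v_n \in K_j$ with $i > j$. Then there is a directed path $v_m = u_0 \to u_1 \to \cdots \to u_t = v_n$ in $G$ of some length $t \geq 1$.

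The main step is to show, by induction on the position along the path, that the Kahn index of each $u_s$ must be strictly less than the Kahn index of $u_{s+1}$. The base case follows directly from the invariant: since $u_{t-1} \in \mathcal{P}(u_t) = \mathcal{P}(v_n)$ and $v_n \in K_j$, we have $u_{t-1} \in K_{l}$ for some $l < j$. For the inductive step, applying the same invariant to $u_{s}$ (whose Kahn index we have just bounded) yields that $u_{s-1}$ lies in a still earlier partition. Walking this reasoning from $u_t$ back to $u_0 = v_m$ produces a strictly decreasing chain of Kahn indices, so the index of $v_m$ is at most $j - t \leq j - 1 < j$. This contradicts $v_m \in K_i$ with $i > j$, completing the argument.

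The argument is essentially a restatement of the invariant composed along a path, so there is no substantial obstacle; the only care needed is to make the induction on path length explicit, since the quoted invariant is phrased only in terms of \emph{direct} predecessors, while the lemma is about transitive reachability $\Rightarrow$. I would also note one boundary case: if $t = 0$ then $v_m = v_n$, forcing $i = j$, which contradicts $i > j$ immediately, so the path has length at least one and the inductive step is well-defined. No additional properties of the DAG (such as acyclicity beyond what Kahn's algorithm already guarantees) are required, and the proof proceeds purely from the partitioning rule of Kahn's procedure.
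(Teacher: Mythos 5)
Your proof is correct and follows the same route as the paper's: a contradiction derived from the Kahn invariant that every predecessor of a vertex in $K_j$ lies in $\bigcup_{l=1}^{j-1}K_l$. In fact your version is tighter than the paper's own one-line argument, which silently identifies reachability $v_m \Rightarrow v_n$ with direct predecessorship $v_m \in \mathcal{P}(v_n)$; the explicit induction along the path that you insist on is exactly the step needed to make that identification legitimate.
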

\begin{proof}
Assume $\exists$ $v_m \in K_i, v_n\in K_j, i>j$ to make $v_m \Rightarrow v_n$
	hold. Then $v_m\in K_i$ is a predecessor vertex of $v_n\in K_j$ (i.e.,
	$v_m\in \mathcal{P}(v_n)$), which means it is impossible for $K_i\subset
	\bigcup_{l=1}^{i-1}K_l$. Therefore, the assumption is wrong and
	Lemma~\ref{lemma0} is proved.
\end{proof}

\begin{lemma} \label{lemma1}
Given Kahn’s partitions $\mathcal{K}=\{K_i\}_{i=1}^{N_K}$ for the DAG $G(V,E)$,
	$\forall v_m, v_n\in K_i$, if $m\neq n$, then $v_m \nRightarrow v_n$.
\end{lemma}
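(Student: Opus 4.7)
The plan is to prove Lemma~\ref{lemma1} by contradiction, reducing the ``same partition, one reaches the other'' situation to a violation of either Lemma~\ref{lemma0} or the disjointness of Kahn's partitions. The key fact I would invoke is the defining property of Kahn's algorithm already used in the proof of Lemma~\ref{lemma0}: for every $v\in K_i$ with $i\ge 2$, every direct predecessor of $v$ lies in $\bigcup_{l=1}^{i-1}K_l$, i.e., strictly before $K_i$.

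First I would assume, for contradiction, that there exist distinct $v_m,v_n\in K_i$ with $v_m\Rightarrow v_n$. Since ``$\Rightarrow$'' means reachability on the DAG, there is a directed path from $v_m$ to $v_n$; let $v_k$ denote the last vertex on this path before $v_n$, so that $\langle v_k,v_n\rangle\in E$, i.e., $v_k\in\mathcal{P}(v_n)$, and either $v_m=v_k$ or $v_m\Rightarrow v_k$. By the Kahn property cited above, $v_k\in K_j$ for some $j<i$.

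Next I would split into two cases. In the first, $v_m=v_k$, so $v_m\in K_j$ and also $v_m\in K_i$ with $j<i$; this contradicts the disjointness $K_i\cap K_j=\emptyset$ guaranteed by the construction of $\mathcal{K}$. In the second, $v_m\Rightarrow v_k$ is strict; then $v_m\in K_i$ and $v_k\in K_j$ with $i>j$, and Lemma~\ref{lemma0} immediately yields $v_m\nRightarrow v_k$, contradicting the chosen path. Either way we reach a contradiction, so no such pair $v_m,v_n$ exists, establishing Lemma~\ref{lemma1}.

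I do not expect any real obstacle: the argument is a two-line reduction once the residual-graph characterization of Kahn's partitions is made explicit, and Lemma~\ref{lemma0} does essentially all of the work. The only point that deserves care is handling the boundary case $v_m=v_k$ separately, since Lemma~\ref{lemma0} is phrased about strict reachability between distinct vertices and does not itself cover equality; pulling out this case explicitly keeps the logical chain clean and avoids an implicit appeal to irreflexivity of $\Rightarrow$ on a DAG.
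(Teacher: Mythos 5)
Your proposal is correct and follows essentially the same route as the paper's proof: assume $v_m\Rightarrow v_n$ for distinct $v_m,v_n\in K_i$ and derive a contradiction from the Kahn property that predecessors lie in $\bigcup_{l=1}^{i-1}K_l$, combined with the disjointness of the partitions. Your version is in fact slightly more careful than the paper's one-liner, which silently treats a reachable ancestor as if it were a direct predecessor; isolating the last edge of the path, handling the $v_m=v_k$ boundary case, and delegating the remaining strict reachability to Lemma~\ref{lemma0} closes that small gap.
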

\begin{proof}
Assume that $\exists v_m, v_n\in K_i, i\neq j$ makes $v_m \Rightarrow v_n$.
	Then $v_m\in K_i$ is the predecessor vertex of $v_n\in K_i$ (i.e., $v_m\in
	\bigcup_{l=1}^{i-1}K_l$), which contradicts with the assumption of $v_m\in
	K_i$. Therefore, Lemma~\ref{lemma1} is proved.
\end{proof}
The following theorem is derived from the lemmas:
\begin{theorem} \label{theorem}
Given Kahn’s partitions $\mathcal{K}=\{K_i\}_{i=1}^{N_K}$ for the DAG $G(V,E)$,
	$\forall v_m\in K_{i-1}, v_n\in K_i$, if $<v_m, v_n>\in E$, then no $v_l
	\in V (l\neq m,n)$ can make $v_m \Rightarrow v_l$ and $v_l \Rightarrow
	v_n$.
\end{theorem}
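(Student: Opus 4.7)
The plan is to argue by contradiction using the two previous lemmas as the main tools, since they already characterize the complete reachability structure across and within Kahn's partitions. Specifically, Lemma 0 tells us that reachability on $G$ can only move from an earlier Kahn partition to a later one (or stay put), and Lemma 1 forbids reachability between two distinct vertices of the same partition. Taken together, a directed reach $u \Rightarrow w$ with $u \in K_a$ and $w \in K_b$ forces $a < b$. This is the only machinery I expect to need.

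Concretely, I would suppose for contradiction that some $v_l \in V$ with $l \neq m, n$ satisfies both $v_m \Rightarrow v_l$ and $v_l \Rightarrow v_n$, and let $v_l \in K_j$. First I would apply the observation above to $v_m \Rightarrow v_l$: since $v_m \in K_{i-1}$, we obtain $i-1 < j$, i.e., $j \geq i$. Then I would apply it again to $v_l \Rightarrow v_n$: since $v_n \in K_i$, we obtain $j < i$, i.e., $j \leq i-1$. These two inequalities on $j$ are incompatible, which yields the desired contradiction and proves the theorem. The edge hypothesis $\langle v_m, v_n\rangle \in E$ is not actually needed in this argument; the claim is really a statement about the ``no room in the middle'' structure of consecutive Kahn levels, and the edge is present only because the theorem is phrased in the immediate-edge setting that will be used by the block-merging routine.

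I do not anticipate a genuine obstacle here; the lemmas do all the heavy lifting and the reasoning is a one-line squeeze on $j$. The only points that require a bit of care are (i) correctly translating the contrapositive of Lemma 0 (``$v_m \Rightarrow v_l$ with $v_m \in K_{i-1}$ implies $j$ is not strictly less than $i-1$'') together with Lemma 1 (``$j \neq i-1$'') into the clean bound $j \geq i$, and symmetrically for the other side; and (ii) explicitly ruling out the degenerate possibilities $v_l = v_m$ or $v_l = v_n$, which is handled by the hypothesis $l \neq m, n$ combined with Lemma 1 applied within $K_{i-1}$ or $K_i$. After that, the contradiction $i \leq j \leq i-1$ closes the proof.
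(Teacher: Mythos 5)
Your proof is correct and follows essentially the same route as the paper: both assume such a $v_l \in K_j$ exists and use Lemma~\ref{lemma1} to exclude $j = i-1$ and $j = i$, and Lemma~\ref{lemma0} to exclude $j < i-1$ and $j > i$, yielding the same contradiction (your packaging as the squeeze $i \le j \le i-1$ is just a cleaner restatement). Your observation that the edge hypothesis $\langle v_m, v_n\rangle \in E$ is not actually used is also accurate of the paper's own argument.
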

\begin{proof}
Assume $\exists v_l \in K_j (l\neq m,n; j\in[1,N_K])$ that makes $v_m
	\Rightarrow v_l$ and $v_l \Rightarrow v_n$ hold. We know $j \neq i-1$ and
	$j\neq i$ from Lemma~\ref{lemma1}. Then if $j<i-1$, Lemma~\ref{lemma0}
	tells us that $v_m \in K_{i-1} \nRightarrow v_l \in K_j$, which violates
	the assumption. Similarly, if $j>i$, $v_l \in K_j \nRightarrow v_n \in K_i$
	also contradicts with the assumption. Hence, $j$ does not exist and
	Theorem~\ref{theorem} is proved.
\end{proof}
\subsection{Analysis of Device Equality}\label{appendix:proof_reduced}
Starting from the initial device status that devices at a layer in the same pod (we call them peer devices subsequently) are exactly equal in resources, we prove that these peer devices can maintain equality under our allocation algorithm.

\noindent{\textbf{Spine-leaf topology.}}
Each leaf is connected with all the same spine switches, and any path is the leaf-spine-leaf structure sharing the common spines. Thus, it's straightforward that all spines should be allocated with the same part of an INC program and regarded as the same device.

\noindent{\textbf{Full-clos Fat-tree topology.}}
For a full-clos fat-tree topology, each switch in a pod is fully connected with each of upper-layer switches which should have a higher throughput capacity, as illustrated in Fig.~\ref{fig:topo_1}. In this case, the core switches are also fully shared by all Agg switches, which is similar to spine-leaf topology and thus can also be reduced as the same device, as illustrated in the right sub-figure of Fig.~\ref{fig:topo_1}. 

Then, we should infer the equality of Agg switches in a pod. First, we denote the INC program as instruction set $[0, n]$ which should be allocated along path \texttt{pod0}-\texttt{pod1}, and we assume program placed on core switches are $[i, j],0\leq i \leq j \leq n$. Then $[0,i)$ should be placed on switches in \texttt{pod0}, and $(j,n]$ needs to be placed on switches in \texttt{pod1}. Supposing the ToR0 switch in \texttt{pod0} is allocated with instructions $[0,p]$, as ToR0 connects with all Agg switches in \texttt{pod0}, these Agg switches must be placed the same instructions $(p,n]$, making other ToR switches e allocated with $[0,p]$ correspondingly. Thus, the equality of switches at the same layer in a pod is proved. 

\noindent{\textbf{Device-equal Fat-tree topology.}}
As illustrated in Fig.~\ref{fig:topo_2}, this topology targets that device of each layer has the same throughput capacity. A $k$-fat-tree has $k$ pods and $(\frac{k}{2})^2$ core switches, and each layer in a pod has $\frac{k}{2}$ switches. In this topology, each Agg switch in a pod fully connects with the $\frac{k}{2}$ core switches, which means these core switches are shared by the current pod and can be reduced as a device. Supposing the traffic is from \texttt{pod0} to \texttt{pod1}, then we can derive the topology as the right sub-figure shows in Fig.~\ref{fig:topo_2}. 

Thereafter, we need to prove the equality of Agg devices and ToR devices in a pod. First, we still assume a instruction set $[0, n]$ to be placed along path \texttt{pod0}-\texttt{pod1}. For switches in \texttt{pod0}, we suppose the placement is $[0, p0]$ on ToR0, $[0, p1]$ on ToR1, $[q0, k0]$ for Agg0, and $[q1, k1]$ for Agg1. As ToR0 and ToR1 are both fully connected with Agg0 and Agg1, we have $p0=p1=q0=q1$, and the case for switches in \texttt{pod1} is similar. Thus ToR switches in the same pod can also be reduced as a device. Then we can derive the topology shown as the left-below sub-figure in Fig.~\ref{fig:topo_2}, i.e., multiple paths diverge from the same ToR device in \texttt{pod0} and converge at \texttt{pod1}. Fortunately, we can notice that the multiple paths are exactly the same regardless of device type, available resources. Thus, for any non-random allocation algorithm, the instructions placements on these paths are absolutely same, i.e., the allocated instructions on the Agg switches in \texttt{pod0} are exactly same, and so are core switches and Agg switches in \texttt{pod1}. That means these switches can be reduced to a single device respectively, and the topology shown as the left-below sub-figure in Fig.~\ref{fig:topo_2} converts to a chain. Thus, the equality of switches at the same layer in a pod is also proved.

\begin{figure}[t]  
	\centering  
	\includegraphics[width=0.48\textwidth]{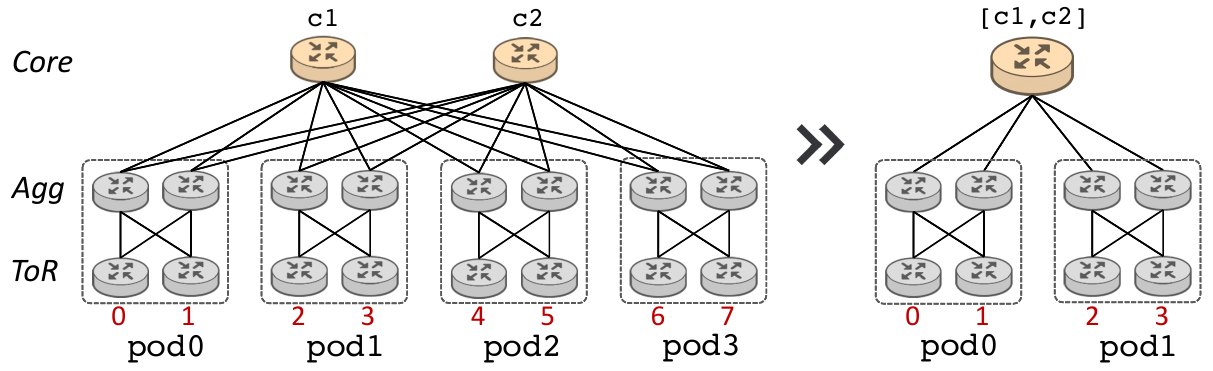} 
	\vspace{-0.3cm}
	\caption{Example of full-clos fat-tree topology.}
	\label{fig:topo_1} 
	\vspace{-0.15cm}
\end{figure}
\begin{figure}[t]  
	\centering  
	\includegraphics[width=0.48\textwidth]{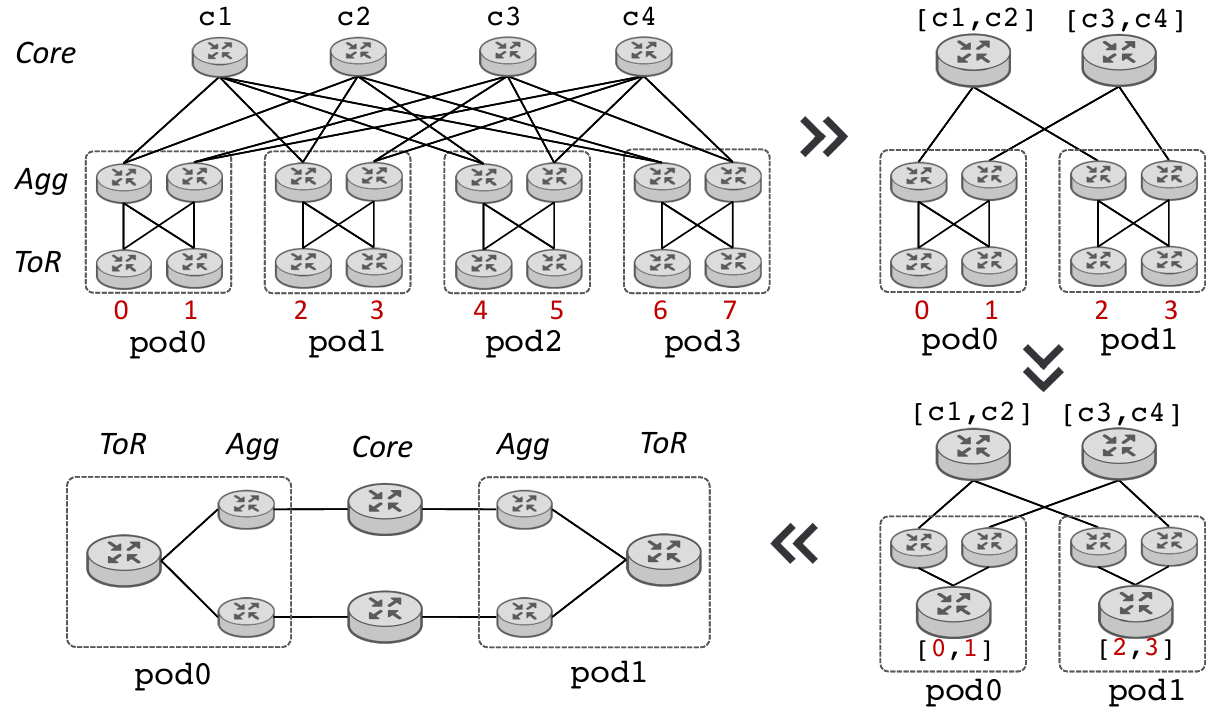} 
	\vspace{-0.3cm}
	\caption{Example of device-equal fat-tree topology.}
	\label{fig:topo_2} 
	\vspace{-0.15cm}
\end{figure}

\section{Pseudo Algorithms}
This section describes the core algorithms for program placement.
\subsection{Block construction}
\label{appendix:block_construction}
Block construction is described in Algorithm~\ref{alg:blocking}.

\subsection{Program merging}
\label{appendix:merging}
The program merging process is described in Algorithm~\ref{alg:merging}.
\begin{algorithm}[h]
	\caption{Instruction block construction}\label{alg:blocking} \footnotesize
	\KwIn {Primitives IR DAG $G=(V,E)$.}
	\KwOut {the results of blocks $G_{out}$.}
    $\mathcal{K}\leftarrow $ Kahn\_partition(G);\\
    $G_1 \leftarrow$call \texttt{intra\_partition}($G$); \\
    $G_{out} \leftarrow$call \texttt{inter\_partition}($G_1$);\\ 
    \textbf{return} $G_{out}$;\\
    \SetKwFunction{FMain}{intra\_partition}
    \SetKwFunction{FSum}{inter\_partition}
    
    \SetKwProg{Fn}{Function}{:}{}
    \Fn{\FMain{$G$}}{
    $V_1 \leftarrow \emptyset, V\leftarrow G.V; E_1 \leftarrow G.E$\;
    \For{$v_1 \in V$}{
        P $\leftarrow \{v_1\}$; remove $v_1$ from $V$\;
        \For{$v_2 \in V$}{
            \If{$v_2.type=P.type$}{
                \If{$\mathcal{K}(v_2)=\mathcal{K}(P)$}{
                \If{$\mathcal{P}(P)\cap\mathcal{P}(v_2)\neq \emptyset$}{
                    add $v_2$ to P; remove $v_2$ from $V$\;
                    combine in-edges of $P, v_2$ in $E_1$\;
                }
                }
            }
        }
        add P to $V_1$\;
    }
    \KwRet $G_1=(V_1, E_1)$;
    }
    \SetKwProg{Fn}{Function}{:}{}
    \Fn{\FSum{$G_1$}}{
    $\mathcal{K}\leftarrow $ $Kahn\_partition(G_1)$\;
    $V_2\leftarrow G1.V, E_2\leftarrow G_1.E, V \leftarrow \emptyset$\;
        \While{$|V|<|V_2|$}{
        $V_2\leftarrow V, V \leftarrow \emptyset$\;
        \For{$i$ from 0 to $|\mathcal{K}|-1$}{
        \For{$v_1 \in \mathcal{K}$[i]}{
            P $\leftarrow \{v_1\}$, remove $v_1$ from $\mathcal{K}[i]$\;
            S$\leftarrow successor(v_1)\cap \mathcal{K}[i+1]$\;
            \For{$v_2 \in S$}{
                \If{$\mathcal{K}[i+1]$}{
                    add $v_2$ to P;\\
                    remove $v_2$ from $\mathcal{K}[i+1]$\;
                        remove $<v_2, P>$ from $E_2$\;
                }
            }
        $add P to V$\;
        }
        }
        $\mathcal{K}\leftarrow Kahn\_partition$($G_2=(V,E_2)$)\;
        }
    \KwRet $G_2=(V_2, E_2)$;
    }
\end{algorithm}

\begin{algorithm}[t]
	\caption{Program merging}\label{alg:merging} \footnotesize
	\KwIn {the parsing graph of INC program and main program $T_{inc}$, $T_{main}$; the dependency graph of INC program and main program $G_{inc}$, $G_{main}$.}
	\KwOut {the whole parser and program $T_{w}$, $G_{w}$.}
    $T_{w} \leftarrow T_{main}, G_{w} \leftarrow G_{main}$;\\
    call \texttt{Parsing\_merger}($T_{inc}$, $T_{w}$);\\ 
    call \texttt{Program\_merger}($G_{inc}$, $G_{w}$);\\
    \SetKwFunction{FMain}{Parsing\_merger}
    \SetKwProg{Fn}{Function}{:}{}
    \Fn{\FMain{$T_{inc}$, $T_{w}$}}{
    \For{s \textbf{in} $T_{inc}$.traversing}{
        $t \leftarrow T_{w}.find(s), p \leftarrow T_{w}.find(s.parent)$; \\
        \If{t $=$ None}{
            $add\_son(p, s), add\_annotation(s)$;\\
            $add\_transition(p, s), add\_annotation\_in(p)$;\\
            $add\_hdr(s.hdr), add\_annotation(s.hdr)$;\\
        }
        \lElse{$add\_annotation(t)$}
    }
    }
    \SetKwFunction{FMain}{Program\_merger}
    \SetKwProg{Fn}{Function}{:}{}
    \Fn{\FMain{$G_{inc}$, $G_{w}$}}{
        \If{d$\in$ Pipeline}{
            $C_{inc} \leftarrow chain(G_{inc}), C_{w} \leftarrow chain(G_{w})$;\\
            \For{s \textbf{in} $C_{inc}$}{
                $p \leftarrow get\_ins\_position(s, C_{w})$;\\
                $C_{w}.insert(p, s), add\_annotation\_before(s)$;\\
            }
        }
        \Else{
            $G_{whole} \leftarrow merge\_DAG(G_{inc}, G_{w})$;\\
            $L \leftarrow Topological\_sort(G_{whole})$;\\
            \For{e \textbf{in} $G_{inc}$}{
                $p \leftarrow get\_level(e, L)$;\\
                $G_{w}.insert(p, s), add\_annotation\_before(s)$;\\
            }
    }}
\end{algorithm}
\section{Device modeling}
\label{appendix:device_modeling}
The architectures of programmable network devices are mainly pipeline and run-to-complete (RTC). Some devices, e.g., Netronome smartNIC and FPGA, can implements both pipeline and RTC, and we call it hybrid device.

\noindent{\textbf{Pipeline device.}} The pipeline devices, e.g., Tofino and Trident4, have fixed number of stages and resources on it, and program instructions should be placed on these stages and satisfy the constraints of stage sequence and resources. For such pipeline-based devices, we denote the available stages for allocation on device $d$ as $S_d$, and use $a_{p, s}\in\{0,1\}$ to indicate whether $p\in{v}$ should be allocated on stage $s\in S_d$. Thus, $a_{p, s}$ should satisfy the pipeline stage sequence as:
\begin{equation}
\label{eq:pipeline_stage_constraint}
        \bigwedge_{s\in S_d; p_i, p_j \in v} (\gamma_{p_i, p_j}a_{p_i,s}a_{p_j,s} = 0)
\end{equation}
where $\gamma_{p_i, p_j}$ indicates the dependency relationship between $p_i$ and $p_j$ (1 represents that $p_j$ depends on $p_i$, -1 vice versa, and 0 means $p_i$ and $p_j$ are independent), $v$ is the instruction block to be allocated on device. Correspondingly, the resource constraint also should be satisfied:
\begin{equation}
\label{eq:pipeline_resource_constraint}
        \bigwedge_{s\in S_d; r\in R_d} \sum_{p\in v} \phi(r,p)a_{p,s} \leq \Omega(r,s)
\end{equation}
where $R_d$ is the set of resource on device $d$, $\phi(r, p)$ computes the size of needed resource $r$ of instruction $p$, and $\Omega(r,s)$ is the total size of resource $r$ on the pipeline stage $s$.

Especially, pipeline devices cannot support the instructions with cyclic dependency unless using recirculation which will significantly degrade the throughput and is not allowed in this paper.

\noindent{\textbf{RTC device.}}
RTC devices, e.g., Silicon one and Spectrum,  use general processor to run the whole program for each packet until completion, and thus can support instructions in cyclic dependency. To place the program on RTC devices, only the resource constraints should be satisfied. We use $a_p\in\{0,1\}$ to denote whether $p\in v$ should be placed on the device, and we have the resource constraint as:
\begin{equation}
\label{eq:rtc_resource_constraint}
        \bigwedge_{r\in R_d} \sum_{p\in v} \phi(r,p)a_{p} \leq \Omega(r)
\end{equation}

\noindent{\textbf{Hybrid device.}}
Hybrid devices mainly include multi-core smartNIC and FPGA. The former device typically has about a hundred RTC cores, which can be programmed to work parallel for same program, or be organized as core-based pipeline where each core behaves as a pipeline stage. For FPGA, the programmability is more flexible and can also combines RTC with pipeline. For these devices, we use $a_{p,s}\in\{0,1\}$ to represent whether $p\in{v}$ should be allocated on stage $s\in S_d$ and $f_{s}$ to denote the number of parallel programs on each stage (e.g., four cores run the same part of program in the same pipeline stage). Thus we have the pipeline stage constraint same as Eq.~\ref{eq:pipeline_stage_constraint}, and the resource constraint as:
\begin{equation}
    \bigwedge_{r\in R_d} \sum_{s\in S_d}\sum_{p\in v}\phi(r,p)f_s a_{p,s} \leq \Omega(r)
\end{equation}

\section{Chip Resource constraints}
\label{appendix:contraints}
ClickINC covers the
resource constraints of four major kinds of programmable chips: Tofino series
ASIC, Trident 4 switch ASIC, Netronome Network Processor, and Xilinx FPGA. The
constraints for other programmable chips can be modeled similarly. The
available device resources are updated after the deployment or removal of each
INC program.

\subsection{Tofino ASIC}
\label{appendix:cons_tofino}
Tofino ASIC follows the RMT
\cite{bosshart2013forwarding} architecture. Each pipeline stage has a fixed
share of resources including TCAM, SRAM, ALU, PHV, etc. It supports some extern
functions such as register, hash, and checksum.

\noindent{\textbf{Compatibility.}}
Due to chip capability, Tofino series switch cannot support some operations. Such constraint for operations arranged in block is:
\begin{equation}
    \bigvee_{v \in \mathcal{B}_{IC}+\mathcal{B}_{CA}+\mathcal{B}_{DM}+\mathcal{B}_{SEM}+\mathcal{B}_{SNEM}+\mathcal{B}_{CF}; d \in Tofino} x_{v,d} = 0
\end{equation}

\noindent{\textbf{Memory.}}
Tofino ASICs have two types of block memory, TCAM and SRAM, distributed in each stage. Ternary, LPM, and range matching are implemented with TCAM plus SRAM, and exact matching is implemented with SRAM only. 
We model the TCAM constraint as: 
\begin{equation}
    \bigwedge_{s \in S_d} \big [ \sum_{v\in \mathcal{B}_\emph{{NEM}}}\sum_{p\in v}x_{v,d}a_{p,s} \lceil \frac{h(p,s)}{h_{tcam}} \rceil \lceil \frac{w_{key}(p)}{w_{tcam}} \rceil \leq M_t \big ]
\end{equation}
where $M_t$ is the number of TCAM blocks per stage, $h_{tcam}$ and $w_{tcam}$ are the depth and width of TCAM block, respectively, and $w_{key}(p)$ is the width of TCAM matching key which should be less than 528 bits due to the TCAM crossbar limitation. 
SRAM can be used for read-only matching tables or registers supporting both read and write. The respective limitation on memory size is:
\begin{equation}
    \begin{aligned}
        &\bigwedge_{s \in S_d} \big \{ \big [ \big ( \sum_{v \in \mathcal{B}_\emph{EM}} \sum_{p \in v} x_{v,d} a_{p, s} \lceil \frac{h(p,s)}{\gamma h_{sram}}\rceil \lceil \frac{w_{key}(p)+w_{val}(p)}{w_{sram}} \rceil + \\
        & \sum_{v\in \mathcal{B}_{SO}+\mathcal{B}_{NEM}} \sum_{p \in v} x_{v,d} a_{p, s} \lceil \frac{h(p,s)}{h_{sram}} \rceil \lceil \frac{w_{val}(p)}{w_{sram}} \rceil \big ) \big ] \leq M_s \big \}
    \end{aligned}
\end{equation}
where $M_s$ is per-stage number of SRAM blocks, 
$h_{sram}$ and $w_{sram}$ are the depth and width of SRAM block, respectively, $\gamma \approx 90\%$ is the utilization ratio of exact-matching table for resolving hashing conflicts, $h(p,s)$ denotes the depth of exact-matching table in $\mathcal{B}_{EM}$ as well as the number of registers for primitive $p$ in $\mathcal{B}_{SO}$, $w_{key}(p)$ is the width of SRAM matching key which should be less than 1024 bits due to the hardware limitation, and $w_{val}(p)$ is the bit width of stored values.

\noindent{\textbf{Match-action Table.}}
To match the match-action pipeline architecture in Tofino, the match-action structures are constructed. 
The remaining primitives are synthesized as keyless-match action structures and those primitives within the same conditional statement are grouped in the same keyless-match action structure.

Besides, Tofino only allows a limited bit-width on conditional statements including the statements connected by logical-And (e.g., for \texttt{if(c1\&\&c2)}, the sum of bit length of \texttt{c1} and \texttt{c2} is limited). The statements within the limitation can be covered by the Gateway resources, but the statements exceeding the limitation can only be nested (e.g., \texttt{if(c1)\{} \texttt{if(c2)} \texttt{...\}}) which will be synthesised as ternary match-action using conditional statements as matching key. According to \cite{jose2015compiling}, the RMT architecture allows at most $N_{tab}$ match-action tables per stage. So the table constraint is:
\begin{equation}
\label{eq:constraints_tab}
    \bigwedge_{s \in S_d} \big( \sum_{M_a}\bigvee_{p \in M_a} a_{p, s}\leq N_{tab}\big)
\end{equation}
where $M_a$ is the set of match-action structures.

Match tables are implemented in memory blocks distributed in each stage. If the table size exceeds the memory capacity of a stage, it is spread into multiple stages. 
Thus, we have:
\begin{equation}
    \begin{aligned}
         \prod_{p \in V_m} \big ( \sum_{s\in S_d} a_{p,s} \big ) \geq 1, 
         \prod_{p \in V-V_m} \big ( \sum_{s\in S_d} a_{p,s}\big ) = 1
    \end{aligned}
\end{equation}
where $V_m = \mathcal{B}_{EM}+\mathcal{B}_{NEM}$. $H(p)=\sum_{s\in S_d}h(p, s)a_{p,s}$ is the number of entries for table primitive $p$, where $h(p,s)$ is the size of the table segment in stage $s$.  $h(p,s)$ also needs to be solved along with $x_{v,d}$ and $a_{p,s}$, given the table entry width $w(p)$, memory block depth $h$ and width $w$, and the number of memory blocks per stage $N_m$. 

\noindent{\textbf{Stateful and Stateless Operations.}}
In Tofino ASICs, stateful operations include counter, meter, and register, where the first two items are write-only for data plane and the last item supports both data-plane read and write. 
The register is composed of the register memory and a stateful arithmetic logical unit (SALU). The register memory is implemented using SRAM blocks, and the SALU integrates the arithmetic calculation and comparison operation. Registers are organized as groups, and each group can define a large amount of registers. Each stage allows a limited number of stateful operations denoted as $N_{so}$, and each stateful operation can only access a particular register using a unique group ID and register index. So we have:
\begin{equation}
    \bigwedge_{s \in S_d} \big [ \sum_{v \in \mathcal{B}_{SO}} \sum_{p \in v} x_{v,d}a_{p,s} \leq N_{so} \big ]
\end{equation}
Similarly, for stateless ALUs, we also have:
\begin{equation}
    \bigwedge_{s \in S_d} \big [ \sum_{v \in \mathcal{B}_{IN}} \sum_{p \in v} x_{v,d}a_{p,s}u_p \leq N_{sl} \big ]
\end{equation}
where $u_p$ is the number of needed ALUs of primitive $p$, and $N_{sl}$ is the limitation on stateless ALUs per stage.

\noindent{\textbf{Parser.}}
In Tofino ASICs, the parser is implemented as a state machine where each state corresponds to a packet header. The state machine uses TCAM to store 32-bit matching data and 8-bit parser state, and uses SRAM to store the next parser state and the header location. Supposing the number of TCAM entries is limited to $N_e$, we have the following constraint:
\begin{equation}
    \sum_{e\in E} \big [ \bigwedge_{h\in H(e)} V(h) \big ] \leq N_e
\end{equation}
where $E$ is the set of non-repeated transition entries from the base forwarding program and the new INC program, $H(e)$ represents all the defined packet headers with the transition entry $e$, and the Boolean function $V(h)$ indicates whether header $h$ is valid.

\noindent{\textbf{PHV.}}
The Packet Header Vector (PHV) carries the parsed packet headers and temporal metadata through the pipeline. 
Deducting the part used by the base forwarding program, we suppose the available byte length of PHV is $N_p$, and the number of $w$-bit containers for PHV is $n^p_w$, correspondingly. It is easy to see that $\sum_w n^p_w = N_p$. Given a field $f$ of length $l_f$ from a data-plane IR program $P$, it has several placement schemes on these containers, and we denote the set of all fields to be placed as $F$. For example, for a 24b variable, it may use a 16b container with a 8b container or three 8b containers. A strategy that is most likely to hold all variables must have a maximum utilization on containers, i.e., use the largest container first, and to place the variable in decreasing sequence. In detail, we should first use 32b container to try the best to place variables with length over 32b, where the remainder of the modulo 32b will be put back into the variable pool to be placed later. The 32b containers will continually be placed with values of length less than 32b in descending order as long as they are available, and the process is similar to the other containers. Thus, we have the following constraint:
\begin{equation}
    \begin{aligned}
         &L_1 = f_{top}(f_{mod}(F, 32), max\{0, n_{32}-\phi_{32}(F)\}) \\
         &m_{16} = n_{16}+[1-2sgn(max\{0, n_{32}-\phi_{32}(F)\})]\big(n_{32}-\phi_{32}(F)\big) \\
         &L_2 = f_{top}(f_{mod}(L_1, 16), max\{0, m_{16}-\phi_{16}(L_1)\}) \\
         & m_8 = n_{8}+[1-4sgn(m_{16})]\big(n_{32}+\frac{n_{16}}{2}-\phi_{32}(F)\big) \\
         & m'_8 = m_{8}+[1-2sgn(max\{0, m_{16}-\phi_{16}(L_1)\})]\big(m_{16}-\phi_{16}(L_1)\big) \\
         &\sum_{i=1}^{n_f}\lceil div(L_2, 8) \rceil \leq m'_8
    \end{aligned}
\end{equation}
where $f_{top}(L, n)$ represents that turns the top $n$ values in the set $L$ to zeros (i.e., they are already placed in containers), $f_{mod}(L, w)$ denotes the set of values in the set $L$ dividing $w$, $div(L, w)$ is the set that contains values in the set $L$ dividing $w$, $\phi_w(L)\sum_{i=1}^{n_f}\lfloor div(F, w)\rfloor$, $sgn(\cdot)$ is the sign function, $L_1$ represents the set of left variables to be placed after the first round placement, $m_{16}$, $m_{8}$ is the corresponding number of remained 16b and 8b containers; $L_2$ is the set of left variables to be placed after the second round placement, $m'_8$ is the number of remained 8b containers; and the last expression denotes the constraint for the last round placement as only 8b contains are left.

\noindent{\textbf{Other Constraints.}}
Hash computation can be directly called as extern function or implicitly executed when accessing some SRAM tables (e.g., for exact matching and register indexing). The resources of hash on each stage are limited by Hash Distribution Units and Hash bits. The former resource is needed by extern hash computation and stateful operation. We model the constraint as:
\begin{equation}
    \bigwedge_{s\in S_d} \big[ \sum_{v \in \mathcal{B}_{HF}+\mathcal{B}_{SO}} \sum_{p \in v} x_{v,d}a_{p,s} \leq N_{hd} \big]
\end{equation}
where $N_{hd}$ is the number of Hash Distribution Units per stage. For Hash bits, it will be used to store the hashing value, and we have:
\begin{equation}
    \bigwedge_{s\in S_d} \big[ \sum_{v \in \mathcal{B}_{AF}+\mathcal{B}_{EM}} \sum_{p \in v} x_{v,d}a_{p,s}w(a) \leq N_{hb} \big]
\end{equation}
where $N_{hb}$ is the number of Hash bits per stage.

Conditional operations rely on gateway resources for implementation, and gateway consumption depends on the total number of different predicates. For each spare stage, we use $N_{gw}$ to denote the gateway resources and use $C_P$ to denote the set of non-repeated predicate statements for primitives (nested predicate statements should be separated) appeared in data-plane IR program $P$. So we have:
\begin{equation}
\label{eq:constraints_gateway}
    \bigwedge_{s \in S_d} \big[ \sum_{f \in C_P} min(1, \sum_{p \in P(f)} a_{p, s}) \leq N_{gw} \big]
\end{equation}
where $P(f)$ represents the set of all primitives that hold the predicate statement $f$.

\subsection{Trident4 ASIC}
\label{appendix:cons_td4}
\noindent{\textbf{Compatibility.}}
For TD4, it also cannot support some operations:
\begin{equation}
    \sum_{v \in \mathcal{B}_{IC}+\mathcal{B}_{CA}+\mathcal{B}_{SEM}+\mathcal{B}_{SNEM}+\mathcal{B}_{CF}, d \in TD4} x_{v,d} = 0
\end{equation}
Trident4 (TD4)~\cite{td4} is a
pipeline-based programmable switch that can be programmed by Network
Programming Language (NPL)~\cite{Nplang}. The pipeline stages in TD4 have
unbalanced resources (e.g., some stages have TCAM but not SRAM), so it is more
difficult for program allocation than Tofino. The memory of TCAM and SRAM are
arranged in tiles which support exact-match, ternary-match, and index-match.
TD4 provides special components to support complex functions (e.g., mirror) and
stateful operations.

\noindent{\textbf{Memory tiles.}}
TD4 supports three kinds of match-action, i.e., non-exact-match, exact-match and index-match, which are carried on respective resource abstraction of tiles, e.g., exact-match tile. Each tile is composed of the minimum units called bank, similar to the memory block in Tofino. For non-exact-match, the banks in tile for match key and filed are independent, and the constraint is: 
\begin{equation}
    \begin{aligned}
    & \bigwedge_{s \in S_d} \big [ \max \big ( \sum_{v\in \mathcal{B}_\emph{{NEM}}}\sum_{p\in v}x_{v,d}a_{p,s} \lceil \frac{h(p,s)}{h_{key}} \rceil \lceil \frac{w_{key}(p)}{w_{key}} \rceil,\\
    &\sum_{v\in \mathcal{B}_\emph{{NEM}}}\sum_{p\in v}x_{v,d}a_{p,s} \lceil \frac{h(p,s)}{h_{field}} \rceil \lceil \frac{w_{val}(p)}{w_{field}} \rceil \big ) \leq \sum_{t\in T_{tcam}(s)} M(t) \big ]
    \end{aligned}
\end{equation}
where $h_{key}$ and $w_{key}$ are the depth and width of the bank for match key in ternary-match tile, $h_{field}$ and $w_{field}$ are the depth and width of the bank for matched filed, $h_{key}(p)$ and $w_{key}(p)$ are the depth and width of instruction $p$, $T_{tcam}(s)$ is the set of TCAM tiles in stage $s$ and $M(t)$ is the number of blanks for tile $t$. For exact-match and index-match, the key and filed share a blank and thus have the constraint as:
\begin{equation} \small
    \begin{aligned}
        \bigwedge_{s \in S_d} &\big [ \sum_{v \in \mathcal{B}} \sum_{p \in v} x_{v,d} a_{p, s} \lceil \frac{h(p,s)}{\gamma h_{bank}}\rceil \lceil \frac{w_{key}(p)+w_{val}(p)}{w_{bank}} \rceil
        \leq M_t(s) \big ]
    \end{aligned}
\end{equation}
where $\mathcal{B} \in \{\mathcal{B}_\emph{EM},\mathcal{B}_\emph{DM}\}$, $h_{bank}$ and $w_{bank}$ are the depth and width of the bank for the tile, and $M_t(s)$ is the total number of banks for related tiles in stage $s$.

\noindent{\textbf{Match-action table.}}
TD4 only adopts match-action structure for data searching, while all the other operations are abstracted and conducted in other components, e.g., flexible switch logic (FSL). For match-action table in TD4, it only allows to assign the matched field to a particular bus where all other logic (e.g., arithmetic operation) should be carried outside. The number of simultaneous match-action tables is limited by the count of banks in tiles as:
\begin{equation}
    \bigwedge_{s \in S_d, \mathcal{B} \in \{\mathcal{B}_{EM}, \mathcal{B}_{NEM}, \mathcal{B}_{DM}\}} \big [\sum_{v \in \mathcal{B}} \sum_{p \in v} x_{v,d}a_{p,s} \leq M_t(s) \big ]
\end{equation}

\noindent{\textbf{Stateful and stateless operations.}}
The state operations in TD4 are more powerful than Tofino but rely on flex state component that is only available in few stages, and the constraint is similar:
\begin{equation}
    \bigwedge_{s \in S_d} \big [ \sum_{v \in \mathcal{B}_{SO}} \sum_{p \in v} x_{v,d}a_{p,s} \leq N_{so}(s) \big ]
\end{equation}

The stateless operations are carried on the flexible switch logic which contains control logic and data logic, in a unit of \textit{floors}. Control logic floor is used for Boolean operation (e.g., comparison) and data logic is used for simple arithmetic (e.g., sub), where the usage of floors is correlated with the length of operands. For control logic floors, we have the constraint:
\begin{equation}
    \bigwedge_{s \in S_d} \big [ \sum_{f \in \mathcal{C}_{P}} \sum_{p\in P(f)} a_{p,s} \lceil \frac{w_v}{8} \rceil n_c \leq N_{fsl}^c(s) \big ]
\end{equation}
where $C_p$ is the set of Boolean operations, $\lceil \frac{w_v}{8} \rceil$ is the operand width in byte, $n_c$ is the number of processing unit of a control logic floor and $N_{fsl}^c(s)$ is the total number of control logic floors in the stage $s$. Similarly, we have the constraint for data logic:
\begin{equation}
    \bigwedge_{s \in S_d} \big [ \sum_{v \in \mathcal{B}_{IN}} \sum_{p \in v} x_{v,d} a_{p,s} \lceil \frac{w_v}{8} \rceil n_d \leq N_{fsl}^d \big ]
\end{equation}
where $n_d$ is the number of processing units of a data logic floor and $N_{fsl}^d(s)$ is the total number of data logic floors in the stage $s$.

\noindent{\textbf{Special functions.}}
TD4 embeds some special functions for complex operations (e.g., mirror) in some particular stages, thus we have the constraint:
\begin{equation}
    \bigwedge_{s \in S_d} \big [\sum_{v \in \mathcal{B}_{AF}+\mathcal{B}_{APF}} \sum_{p \in v}x_{v,d} a_{p,s} \leq N_{sf}(s) \big ]
\end{equation}

\noindent{\textbf{Parser.}} 
Different from RMT architecture whose parser is entirely in the front of pipeline, TD4 has several parser components in pipeline and match-action table is allowed between different parsers. For each parser, we have the constraint as:
\begin{equation}
    \bigwedge_{s \in S_{parse}} \big( \sum_{v \in \mathcal{B}_p} \sum_{p \in v} x_{v,d}s_{a,s} \leq N_{parse} \big )
\end{equation}
where $S_{parse}$ is the set of stages with parser component, $\mathcal{B}_p$ is the set of parsing primitives and $N_{parse}$ is the upper allowed number of parsing state transitions.

\noindent{\textbf{Buses.}}
Similar to PHV in Tofino, TD4 uses buses to carry packet data and metadata in pipeline. There are four kinds of buses: 1) field bus, which carries the fields from/to the packet; 2) object bus, which carries internally generated metadata; 3) command bus, which carries control information between components in the pipeline; 4) auxiliary bus, which carries commands and instructions that are intended for a specific function, e.g., drop. For these buses, there is upper width constraint:
\begin{equation}
    \bigwedge_{k \in \mathcal{K}_{bus}} \big[\sum_{i\in F(k)} \lceil\frac{w(i)}{w_k}\rceil \leq \frac{N_{bus}(k)}{w_k}\big ]
\end{equation}
where $\mathcal{K}_{bus}$ denotes four kind of buses, $F(k)$ represents the set of data relevant to the bus $k$, $w_k$ is the bus granularity, and $N_{bus}(k)$ is the total width of bus $k$.

\subsection{Netronome Network Flow Processor}
\label{appendix:cons_nfp}
Netronome's Agilio
smartNICs~\cite{agilio} are based on run-to-complete (RTC) Network Flow
Processor (NFP)~\cite{nfp} and use Micro-C~\cite{micro-c} (preferred) or P4 for
programming. NFP contains nearly a hundred cores and several hardware
accelerators. The cores are arranged in several islands and equipped with a
hierarchical memory structure.

\noindent{\textbf{Compatibility.}}
For Netronome NFP, it cannot support floating-point operation while multiplication and division of integers are allowed:
\begin{equation}
    \bigvee_{v \in \mathcal{B}_{CA}+\mathcal{APF}; d \in NFP} x_{v,d} = False
\end{equation}

\noindent{\textbf{FPC allocation.}}
FPCs support parallel as well as pipeline working mode. For any instruction of the program to be allocated, it can be placed on some FPCs and these FPCs will work in parallel. Then any of the other unplaced instructions will either be put on all of these FPCs or none of them, i.e., any two instructions can only co-exist on the same group of FPCs in parallel or completely independent of each other. 
Thus, we have the constraint as:
\begin{equation}
    \bigwedge_{p_k, p_l\in V} [ (\sum_{i\in L_d}\sum_{c\in C_i} a_{p_k,i,c}a_{p_l,i,c} = 0)\vee (\bigwedge_{i\in L_d, c\in C_i} a_{p_k,i,c} = a_{p_l,i,c}) ]
\end{equation}
where $L_d$ is the set of FPC islands, $C_i$ is the set of FPCs on the island $i$, $V$ is the whole set of IR instructions, $a_{p,i,c}\in \{0, 1\}$ (0: no, 1: yes) indicates whether instruction $p$ is placed on the FPC $c$ of the island $i$. 

\noindent{\textbf{FPC capacity.}}
Every FPC in a NFP are only capable to support a fixed number of Micro-Instructions (MIs), and thus we have the constraint as:
\begin{equation}
    \bigwedge_{i\in L_d, c \in C_i} \big [ \sum_{v\in V} \sum_{p\in v} x_{v,d} a_{p,i,c}n_{ins}(p) \leq N_{ins} \big ]
\end{equation}
where $n_{ins}(p)$ denotes the number of MIs mapped to $p$, and $N_{ins}$ is the total number of MIs supported by the FPC.

\noindent{\textbf{Memory.}}
NFP adopts a hierarchical memory structure: i.e., the general purpose register (GPR, owned by each thread in a FPC and cannot be shared), the local memory (LM, owned by each FPC but can be locally shared among threads on a FPC), the cluster local scratch (CLS, can be shared among different FPCs on the same island), the cluster target memory (CTM, has the same sharing property with CLS but has a larger size and higher latency), the internal memory (IM, can be shared globally among all FPCs), external memory(EM, the off-chip memory that can be globally shared and has larger size but higher latency than IMEM). Particularly, for a defined data structure in Micro-C program, we can manually specify which memory the data is to be placed in and also have options to decide whether the memory is shared. If the memory is chosen not to be shared, the data will be copied for all threads which redundantly consume memory but enjoys lower processing latency. To describe all kinds of data placement decisions, we use variable $m_{p, r}$, $m_{p, lm}$, $m_{p, cls}$, $m_{p, ctm}$, $m_{p, im}$, $m_{p, em} \in \{0, 1\}$ (0: no, 1: yes) to indicate whether the data of instruction $p$ should be located in the memory of GPR, LM, CLS, CTM, IMEM and EMEM respectively. Besides, the data can be shared only in one of the three methods: 1) shared by threads in each local FPC; 2) shared among different FPCs in a local island; 3) globally shared across all islands. We use $f_{p, st}$, $f_{p, si}$, $f_{p, sg} \in \{0, 1\}$ (0: no, 1: yes) to represent whether the data is shared by the above three methods, and we have
\begin{equation}
    f_{p, st} + f_{p, si} + f_{p, sg} \leq 1
\end{equation}
where the sum of $f$ equaling to 0 indicates the data is not shared.

For all instructions to be allocated on the NFP, the related data (e.g., internal variable) is only allowed to be placed on one kind of the above several memories. To efficiently support the large match-action table, we can manually divide a whole table into two parts and allocate them across memories, i.e., a small part of popular entries on fast but low-capacity memory and the remained part on slow but large memory.
Thus, we have the constraints:
\begin{equation}
    \begin{aligned}
        \bigwedge_{v\in V-\mathcal{B}_{NEM}-\mathcal{B}_{SNEM}; p\in v} \big [x_{v,d}\varphi_d(p) \leq \sum_{k\in K_m} m_{p,k}\big ] = True
    \end{aligned}
\end{equation}
where $\varphi_d(p)\in \{0,1\}$ (0: no, 1: yes) denotes whether instruction $p$ creates a new data, and $K_m$ is all kinds of memories can be used. Correspondingly, each memory has its capacity constraint. For GPR, it is owned by each thread of each FPC, and cannot be shared. Besides, due to the very limited size of GPR, complex match-action tables are also not allowed. Thus we have:
\begin{equation}
\bigwedge_{i\in L_d;c\in C_i} \big [\sum_{v\in V-B_M}\sum_{p\in v} \sum_{e \in \mathcal{D}_s(p)}m_{p,r}h(e)\lceil \frac{w(e)}{w_{R}} \rceil \leq \frac{N_{R}}{n_{t}} \big]
\end{equation}
where $B_M=\mathcal{B}_{EM}+\mathcal{B}_{NEM}+\mathcal{B}_{SEM}+\mathcal{B}_{SNEM}$, $n_{t}$ denotes the number of threads working on the FPC, $\mathcal{D}_s(p)$ is the set of data to be stored for instruction $p$, $h(e)$ and $w(e)$ are depth and width of data $e$, $w_{R}$ is the width of GPR memory unit, and $N_{R}$ (only a few hundreds) is the total number of GPR units in each FPC . For LM, it has multi-kilobyte size and can be locally shared among threads while non-exact-match is still not allowed. Thus we have:
\begin{equation}
    \bigwedge_{i\in L_d; c\in C_i} \big [ \sum_{v\in V-B_{NE}}\sum_{p\in v} \sum_{e \in \mathcal{D}_s(p)} \frac{m_{p,lm} n_{t}}{1+f_{p,st}(n_{t}-1)} h(e) \lceil \frac{w(e)}{w_{LM}} \rceil \leq N_{LM} \big]
\end{equation}
where $B_{NE}=\mathcal{B}_{NEM}+\mathcal{SNEM}$, $w_{LM}$, $N_{LM}$ are the width and the number of LM units in each FPC, $\frac{m_{p,lm}\cdot n_{t}}{1+f_{p,st}(n_{t}-1)}$ equals to 1 when data of IR instruction $p$ is shared by all threads and $n_{t}$ otherwise. For CLS (tens of kilobytes) and CTM (hundreds of kilobytes), they both can be shared within island (among all threads on all FPCs), and thus we have:
\begin{equation}
    \begin{aligned}
         &\bigwedge_{k\in \{CLS, CTM\}; i\in L_d} \big [ \sum_{v\in V}\sum_{p\in v} \sum_{e \in \mathcal{D}_s(p)} \Gamma_m h(e) \lceil \frac{w(e)}{w_{k}} \rceil 
         \leq N_{k} \big] \\ 
         & \Gamma_m = \frac{m_{p,k}g_{p,i} n_tn_c}{1+f_{p,st}(n_{t}-1)+f_{p,si}(n_cn_t-1)+f_{p,sg}(n_t n_c-1)}
    \end{aligned}
\end{equation}
where $n_c = \sum_{c\in C_i}x_{v,d}a_{p,i,c}$ is the number of FPCs where the instruction $p$ is located, and $g_{p,i}\in \{0,1\}$ is dedicated for globally sharing mode of memory and represents whether the data for $p$ is stored on the island $i$. For IM and EM, they are outside of island and can be shared by all FPCs on chip. Thus, we have the constraint:
\begin{equation}
    \begin{aligned}
    & \bigwedge_{k\in \{IM, EM\}} \big[\sum_{v\in V}\sum_{p\in v} \sum_{e \in \mathcal{D}_s(p)} \Gamma_m' h(e) \lceil \frac{w(e)}{w_{k}} \rceil \leq N_{k}\big] \\ 
    & \Gamma_m' = \frac{m_{p,k} g_{p,i} n_t \sum_{i\in L_d} \frac{n_c}{1+f_{p,si}(n_c -1)}}{1+f_{p,st}(n_{t}-1)+f_{p,si}(n_t-1)+f_{p,sg}(n_{t}\sum_{i\in L_d}n_c-1)}
    \end{aligned}
\end{equation}
where IM is a still on-chip SRAM with multiple megabytes, and EM is an off-chip DRAM with multiple Gigabytes.

Further, for non-exact-match table, it is not allowed on GPR and LM, and also has to follow the constraint on width of the matching key:
\begin{equation}
    \bigwedge_{v\in \mathcal{B}_{NEM}; p\in v; k \in \{CLS, CTM, IM, EM\}} \big (x_{v,d} m_{p,k}w_{key}(p) \leq W_k \big)
\end{equation}
where $w_{key}(p)$ is the key width of non-exact-match table, $W_k$ is the width limit which is 32b for CLS and 512b for the other memories.

\noindent{\textbf{Accelerator.}}
NFP relies on pre-implemented dedicated accelerators for some complex operations, e.g., hash, and only support a limited number of operations.
\begin{equation}
    \bigwedge_{k_p\in \mathbf{U}(\mathcal{B}_{HF}+\mathcal{B}_{TF})} \big[ \sum_{v\in \mathcal{B}_{AF}+\mathcal{B}_{CF}}\sum_{p \in v} x_{v,d}\wedge(p=k_p) \leq N_{acc}(k_p)\big]
\end{equation}
where $\mathbf{U}(\mathcal{B}_{HF}+\mathcal{B}_{TF})$ is the set of unique operation types for instruction blocks, and $N_{acc}(k_p)$ is the number of provided accelerators for each operation.

\subsection{Xilinx FPGA}
\label{appendix:cons_fpga}
Xilinx FPGA is used in smartNICs
(e.g., SN1000 \cite{SN1000} and acceleration cards (e.g., Xilinx Alveo
U280~\cite{U280}). SmartNIC supports P4 for pipeline programming via
VNetP4~\cite{VNetP4} toolchain (formerly known as SDNet).  By using
\texttt{UserExtern}, VNetP4 allows user to integrate extra functions into the
main P4 program.  Acceleration card uses Verilog HDL or HLS, a C-like
language)~\cite{hls} for programming.  The resources in Xilinx FPGA include
LUTs, DSP slices, Flip-flops (FF), UltraRAMs (URAM), Block RAMs (BRAM), and
possible High-Bandwidth Memory (HBM).
In our work, we follow the principle that uses RAM in preference to LUT for matching tables and keeps all the other default configurations. 

\noindent{\textbf{Memory.}}
For matching tables, the width has a predefined limitation and the depth depends on the available resources. Thus, the width constraint is modeled as:
\begin{equation}
    \bigwedge_{v \in V_m; p \in v} \big[x_{v,d} a_{p,s} \cdot I_{(w_{key}(p) \leq W_k)} \cdot I_{(w_{val}(p) \leq W_v)}>0 \big]
\end{equation}
where $I_{(x)}$ is the indicator function which equals to 1 if condition $x$ is satisfied and 0 otherwise, $V_{m}=\mathcal{B}_{EM}+\mathcal{B}_{NEM}+\mathcal{B}_{DM}+\mathcal{B}_{SEM}+\mathcal{B}_{SNEM}$ denotes all kinds of matching tables, $w_{key}(p)$ and $w_{val}(p)$ are the width of table key and value, respectively, and $W_k$ and $W_v$ are their limitations, respectively.

Time-Division Multiplexing (TDM) factor $f_{TDM}\in \{1,2,4\}$ plays a key role in BRAM/URAM allocation for implementing matching tables. $f_{TDM}$ is proportional directly to RAM frequency but inversely to table lookup rate. A search operation on CAM requires at least four RAM accesses, which can be performed in parallel using four RAM instances with $f_{TDM}=1$ or sequentially using four accesses in the same RAM with $f_{TDM}=4$. BRAM is preferred to be used to instantiate relatively small tables, and we have:
\begin{equation}
    \sum_{v \in V_{m}} \sum_{p \in v; h(p) \leq \theta_{bm}} x_{v,d}a_{p,s} \frac{4}{f_{TDM}} \lceil \frac{w(p)}{w_{bm}} \rceil \leq N_{bram}
\end{equation}
where $N_{bram}$ is the number of BRAM blocks, $\theta_{bm}=\frac{4\alpha h_{bm}}{f_{TDM}}$ limits the value of $h(p)$ (i.e., the depth of matching table that can be implemented by BRAM), $\alpha$ is the depth utilization ratio of BRAM/URAM (e.g., 95\% for BCAM) due to the storage efficiency, $h_{bm}$ and $w_{bm}$ are the depth and width of a BRAM block, respectively, and $w(p)=w_{key}+w_{val}(p)+w_{extra}$ is the practical width of table entry plus some extra bits such as the valid flag.

URAM is typically used for large tables. If $h(p)$ exceeds $\theta_{bm}$, URAM will always be used unless the table is so large that only HBM can hold. So we have the constraint:
\begin{equation}
    \sum_{v \in V_m} \sum_{p \in v; h(p)>\theta_{bm}} x_{v,d}a_{p,s} \frac{4}{f_{TDM}} \lceil \frac{h(p)}{\frac{4}{f_{TDM}} \alpha h_{um}} \rceil \lceil \frac{w(p)}{w_{um}} \rceil \leq N_{uram}
\end{equation}
where $h_{bm}$ and $w_{bm}$ are depth and width of a URAM block, respectively. Here each table can use at most 320 URAMs if the RAM frequency is lower than 400MHz and the number is reduced to half if the RAM frequency is over 400MHz or HBM is used. Since HBM is as large as 8GB, we omit its constraint. 

\noindent{\textbf{DSP.}}
For efficient resource usage, simple logic (e.g., integer add, subtraction, bit operation, and RAM managing logic) is implemented using LUTs, while complex logic (e.g, multiply and floating arithmetic) uses DSPs. Thus, we have the following constraints on DSP slices:
\begin{equation}
    \sum_{v \in B_C} \sum_{p \in v}x_{v,d}a_{p,s} d(p) \leq N_{dsp}
\end{equation}
where $B_c = \mathcal{B}_{IC}+\mathcal{B}_{CA}$, $d(p)$ represents the needed number of DSPs for implementing primitive $p$, and $N_{dsp}$ denotes the total number of DSP slices on an FPGA.

\noindent{\textbf{LUT.}}
LUT, in addition to implementing logic circuits, can be used as Distributed RAM in a unit form called SLICEM to implement small DCAM in VNetP4. A SLICEM contains 8 LUTs with each being configured as 32x2 or 64x1. Thus a SLICEM can realize single-port RAM configurable from 32x(1 to 16)-bit to 512x1-bit, dual-port RAM configurable from 32x(1 to 8)-bit to 256x1-bit, and quad-port RAM from 32x(1 to 4)-bit to 128x1-bit. As the basic resource in FGPA, LUT usage is hard to estimate. Therefore, to ensure feasibility, we formulate the LUT constraint by a maximum utilization ratio $\beta$ (about 75\% empirically) to all LUTs $N_{lut}$:
\begin{equation}
\label{eq:lut}
    \begin{aligned}
          &\sum_{v \in \mathcal{B}_{DM}}\sum_{p \in v; h(p)\leq \theta_{dr}} 8x_{v,d}a_{p,s} \big[ I_{(w(p)-w_l\geq 0)} \lceil \frac{h(p)}{h_l}\rceil \lceil \frac{w(p)}{w_l} \rceil + \\ & I_{(w(p)-w_l<0)}\lceil\frac{h(p)}{h_l\lfloor \frac{w_l}{w(p)}\rfloor}\rceil \big]  + \sum_{v \in V-\mathcal{B}_{DM}} \sum_{p \in v}x_{v,d}a_{p,s} l(p) \leq \beta N_{lut}
    \end{aligned}
\end{equation}

The first part of Eq.~\ref{eq:lut} is the total usage of LUTs for Distributed RAM and the second part is for simple logic. $h(p)$ is the number of entries of table in DCAM and $\theta_{dr}$ specifies the ceiling value of $h(p)$ for using Distributed RAM as DCAM. $w(p)$ denotes the entry width, $w_l$ is the maximum configurable width of Distributed RAM, $h_l$ corresponds to the depth (e.g., $w_l=16$ and $h_l=32$ for single-port RAM), $\delta(\cdot)$ is the Unit Step Function (i.e., the number of needed LUTs is $\lceil \frac{h(p)}{h_l}\rceil \lceil \frac{w(p)}{w_l} \rceil$ when $w(p)>w_l$, or $\lceil\frac{h(p)}{h_l\lfloor \frac{w_l}{w(p)}\rfloor}\rceil$ otherwise), and $l(p)$ is the number of needed LUTs for implementing primitive $p$.

\noindent{\textbf{Flip-flop.}}
Flip-flops (FFs) are widely used for temporary storage between different timing logic or for tiny-TCAM/tiny-BCAM tables. We have the following constraint:
\begin{equation}
    \sum_{v \in B_{tiny}} \sum_{p \in v} x_{v,d}a_{p,s}h(p)w(p) + \sum_{v \in V} \sum_{f\in v} x_{v,d}a_{p,s}w(f) \leq N_{reg}
\end{equation}
where $B_{tiny}=\{V_m|_{h(p)\leq 32}\}$, $h(p)$ and $w(p)$ are the depth and width of matching table $p$, respectively, $f \in v$ represents the internal variables of width $w(f)$ appearing in block program $v$, and $N_{reg}$ denotes the number of all available FFs.

\section{Deployment Constraints}
\label{appendix:deploy_constraint}
Each block $v$ should be allocated only once, and each instruction in the block should be deployed:

\begin{equation}
    \bigwedge_{v \in V} \big[\sum_{d \in D} x_{v,d}\bigwedge_{p\in v}(\bigvee_{s\in S_d}a_{p,s})= 1 \big]
\end{equation}
where $x_{v,d}$ indicates whether primitive block $v$ is deployed on device $d$, and $a_{p,s}$ denotes whether primitive $p$ is deployed on stage $s$.

Since the application throughput is bottlenecked at the device with the minimal bandwidth, given the throughput requirement $H$, we have the constraint:
\begin{equation}
    \bigvee_{l\in L}\big[\bigwedge_{d\in D[l]} (h(d) \geq H[l])\big]
\end{equation}
where $h(d)$ is the bandwidth of device $d$.

Typically the application flow has a fixed forwarding path, which raises two topology constraints: deployment scope $T_s$ (i.e., the blocks can only be allocated on devices along the path), and deployment direction (i.e., the block execution sequence should match the packet forwarding direction). The scope constraint is: 
\begin{equation}
    \sum_{v\in , d\notin T_s} \ x_{v,d}=0
\end{equation}
and the direction constraint is:
\begin{equation}
\label{eq:block_dp}
    \bigwedge_{d_i,d_j \in D; v_k, v_l \in V} (F_{d_i, d_j}R_{v_k, v_l}x_{v_k, d_i}x_{v_l, d_j} \geq 0)   
\end{equation}

In the equation, $F_{d_i, d_j}$ denotes the deployment direction: 1 represents the forwarding direction, -1 vice versa, and 0 means no direction needs to be enforced (e.g., for an FPGA-based acceleration card attached to a switch); $R_{v_k, v_l}$ denotes the dependency between two blocks: 1 represents that $v_l$ relies on $v_k$, -1 vice versa, and 0 means $v_k$ and $v_l$ are independent.

Similarly, dependent blocks on the same device should conform to the pipeline direction:
\begin{equation}
\label{eq:primitive_dependency}
        \bigwedge_{s\in S_d; p_i, p_j \in v} (R_{p_i, p_j}a_{p_i,s}a_{p_j,s} = 0)
\end{equation}
The constraint ensures that no stage overlap occurs in the case that $v_j$ depends on $v_i$.

Furthermore, the dependent primitives cannot be placed on the same pipeline stage. For Tofino series chips, there is a particular circumstance: a non-matching-table primitive can be placed in the same stage with the matching table that it depends on, to construct a match-action structure as long as they share the same conditional statement. We use $R_{p_i,p_j}$ to denote the dependency between $p_i$ and $p_j$, where 1 represents $p_j$ depends on $p_i$, -1 vice versa, and 0 means $p_i$ and $p_j$ are independent. The primitive dependency constraint is:
\begin{equation}
\label{eq:match_action}
    \begin{aligned}
        &\bigwedge_{s \in S_d; p_i,p_j \in v} \big[(R_{p_i, p_j}a_{p_i,s}a_{p_j,s}=0 )\big] \\
        &\bigwedge_{d\in Tofino} R_{p_i, p_j}a_{p_i,s}a_{p_j,s}(1-case) = 0
    \end{aligned}
\end{equation}
where $case$ is $(R_{p_i, p_j}=1)\wedge(p_i \in \mathcal{B}_{EM}+\mathcal{B}_{NEM})\wedge (p_j \notin \mathcal{B}_{EM}+\mathcal{B}_{NEM})\wedge (cond(p_i)=cond(p_j)))$, and $cond(\cdot)$ denotes the requirement on conditional statement. Through the above mapping, primitives satisfied with the constraint can be constructed as match-action table structure.

Pipeline switch have separate pipelines $\xi_{ig}$ and $\xi_{eg}$ for ingress and egress, respectively (i.e., $S_d = \xi_{ig}+\xi_{eg}$). The instructions related to forwarding decision can only be deployed at ingress. Denoting these instructions as $\mathcal{F}_{fd}$, we have:
\begin{equation}
    \sum_{p\in \mathcal{F}_{fd}, s\in \xi_{eg}} a_{p,s} = 0
\end{equation}

\end{document}